\def\identity{\leavevmode\hbox{\small1\kern-3.8pt\normalsize1}}
\newtheorem{lemma}{Lemma}
\newtheorem{prop}{Proposition}
\newcommand{\Ab}[1]{ \left| #1 \, \right|} 
\newcommand{\ket}[1]{\left | #1 \right\rangle}
\newcommand{\bra}[1]{\left \langle #1 \right |}
\newcommand{\Tr}{\mathrm{Tr}}
\renewcommand{\epsilon}{\varepsilon}
\newcommand\bigforall{\mbox{\Large $\mathsurround=0pt\forall$}}
\begin{document}

\title{Trade-offs in multi-party Bell inequality violations in qubit networks}
\author{Ravishankar \surname{Ramanathan}}
\affiliation{Laboratoire d'Information Quantique, Universit\'{e} Libre de Bruxelles, Belgium}
\author{Piotr Mironowicz}
\affiliation{Department of Algorithms and System Modeling, Faculty of Electronics, Telecommunications and Informatics, Gda\'nsk University of Technology}
\affiliation{National Quantum Information Centre in Gda\'nsk, 81-824 Sopot, Poland}

\date{\today}

\begin{abstract}
	Two overlapping bipartite binary Bell inequalities cannot be simultaneously violated as this would contradict the usual no-signalling principle.
	This property is known as monogamy of Bell inequality violations and generally Bell monogamy relations refer to trade-offs between simultaneous violations of multiple inequalities.
	It turns out that multipartite Bell inequalities admit weaker forms of monogamies that allow for violations of a few inequalities at once.
	Here we systematically study monogamy relations between correlation Bell inequalities both within quantum theory and under the sole assumption of no signalling.
	We first investigate the trade-offs in Bell violations arising from the uncertainty relation for complementary binary observables, and exhibit several network configurations in which a tight trade-off arises in this fashion. 
	We then derive a tight trade-off relation which cannot be obtained from the uncertainty relation showing that it does not capture monogamy entirely.
	The results are extended to Bell inequalities involving different number of parties and find applications in device-independent secret sharing and device-independent randomness extraction.
	Although two multipartite Bell inequalities may be violated simultaneously, we show that genuine multi-party non-locality, as evidenced by a generalised Svetlichny inequality, does exhibit monogamy property.
	Finally, using the relations derived we reveal the existence of flat regions in the set of quantum correlations.

\end{abstract}

\maketitle

\textit{Introduction.-}
Measurements on spatially separated entangled systems lead to correlations that do not conform to local realism, as evidenced by the violation of Bell inequalities~\cite{Brunner14}. This non-locality of quantum systems is of great interest, not only for fundamental reasons but also as a resource in applications such as device-independent randomness generation~\cite{Acin16} and key distribution~\cite{Gisin02,Scarani09}, in reductions of communication complexity~\cite{Buhrman10}, etc.  In the applications of quantum non-local correlations to cryptography, a key role is played by the phenomenon of ``monogamy" of non-local correlations~\cite{Coffman00}. 

Any two quantum systems that exhibit maximally non-locality with each other, cannot exhibit non-local correlations (or even classical correlations) with any other third system. Therefore, non-locality is a resource that cannot be freely shared among different parties.
Tight quantitative trade-off relations for Bell inequality violations have been shown in some cases such as the well-known CHSH inequality~\cite{TV06}. Moreover, the limited shareability property of correlations has even been shown to extend beyond quantum theory to general no-signaling theories (the class of theories which do not allow for arbitrarily fast signaling)~\cite{Toner09}. In this context, it is pertinent to point out recent results that show that derivation of monogamy relations from relativistic causality alone depends on the spacetime configuration of the measurement parties~\cite{HR16}. 

While the shareability of two-party non-locality is well studied, much less is known about the trade-off relations in the case of multi-party non-locality. A preliminary study was carried out in~\cite{KPRLK11} where an uncertainty-type relation for dichotomic quantum observables termed correlation complementarity was shown to underlie many multi-party monogamy relations for correlation Bell inequalities involving two dichotomic observables per party. This question has gained importance with the development of cryptographic protocols involving many parties, as well as with the substantial experimental progress in the engineering and detection of such correlations~\cite{EMK+14,MSB+11,AMJ+12}. 


In consideration of the intrinsic relevance of the notion of monogamy to the foundational core of quantum correlations, it has become a worthwhile objective to deeply explore the characteristics of multipartite non-locality distributed over many parties, and to establish what usefulness to multi-user quantum communication protocols can such a resource provide. 
In this paper, we carry out a detailed study of trade-offs in the violation of such multi-party correlation inequalities in qubit networks (as we shall see by the well-known Jordan lemma~\cite{Jordan1875} no loss of generality in restricting to qubits). 
We first investigate following~\cite{KPRLK11} the trade-offs in such violations arising from an uncertainty relation for complementary binary observables, and exhibit novel constructions of network configurations in which a tight trade-off arises from this complementarity. We then go beyond the analysis in~\cite{KPRLK11} and show that the uncertainty relation does not capture monogamy entirely by deriving a tight trade-off relation in a ladder network consisting of an arbitrary odd number of qubits. We extend the considerations to deriving trade-offs between Bell violations for different number of parties, with potential applications for device-independent secret sharing. We apply the derived trade-off relations to bound the guessing probability of any party's outcome by an adversary, a quantity of central importance in device-independent randomness extraction~\cite{AMP12, PAM+10}. We consider a generalization of the well-known Svetlichny inequality~\cite{Sve87} that witnesses genuine multi-party non-locality to an arbitrary number of inputs. We then use it to show that while multi-party Bell inequality violation is by itself not monogamous, the notion of genuine multi-party non-locality as evidenced by the well-known Svetlichny inequality does exhibit monogamy, any violation beyond the threshold required to witness genuine multi-party non-locality by a subset of the parties precludes its violation by any other subset of the parties with non-zero overlap with the first set. The trade-off relations also give rise to Tsirelson bounds on a class of inequalities with few-body correlators, we show that these inequalities can be violated by multiple distinct quantum boxes detecting certain flat regions in the set of quantum correlations. Finally, we investigate the no-signaling analogues of the tight hypersphere monogamy relations within quantum theory, and derive a general linear no-signaling trade-off relation for arbitrary Bell inequalities extending the analysis for two-party inequalities in~\cite{PB09}.

\textit{Preliminaries.-}
The $(n,m,d)$ Bell scenario denotes the situation where $n$ parties choose from among $m$ measurements each obtaining one of $d$ outcomes. While one may also consider scenario with a different number of inputs and outputs per party, this will not concern us in this paper. The phenomenon of monogamy of violation of the famous Clauser-Horne-Shimony-Holt (CHSH)~\cite{CHSH} inequality was quantitatively shown by Toner in~\cite{Toner09} in general no-signaling theories, while the exact trade-off in its violation within quantum theory was shown by Toner and Verstraete in~\cite{TV06}. This manifested in the $(3,2,2)$ Bell scenario where three parties Alice, Bob and Charlie each measure one of two binary observables $\{A_1, A_2\}$, $\{B_1, B_2\}$ and $\{C_1, C_2\}$ respectively. Quantitatively,  the correlations between their measurement results obey
\begin{eqnarray}
\label{eq:CHSH-mono}
\langle \text{CHSH} \rangle_{\text{AB}} + \langle \text{CHSH} \rangle_{\text{AC}} & \leq & 4 \; \; \; \; \text{in gen. NS theories} \nonumber \\
\langle \text{CHSH} \rangle_{\text{AB}}^2 + \langle \text{CHSH} \rangle_{\text{AC}}^2 & \leq & 8 \; \; \; \; \text{in quantum theory} \nonumber \\
\end{eqnarray}
for the CHSH-Bell expression $\langle \text{CHSH} \rangle_{\text{AB}} := \langle A_1 B_1 + A_2 B_1 + A_1 B_2 - A_2 B_2 \rangle \leq 2$.


Here, we generalize this result to the correlation based Bell inequalities in the $(n,2,2)$ scenario (also known as multi-party binary XOR games). In the $(n,2,2)$ Bell scenario, the $i$-th party measures the binary observables $A^{(i)}_{x_i}$ with $i \in [n]$ and $x_i \in \{1,2\}$. The entire class of $(n,2,2)$ inequalities involving $n$-body correlation functions $E(x_1, \dots, x_n) = \langle \otimes_{i=1}^{n} A^{(i)}_{x_i} \rangle$ was derived by Werner and Wolf in~\cite{WernerWolf01} and independently by \.{Z}ukowski and Brukner in~\cite{ZukowskiBrukner02}. 
These inequalities define the facets of a polytope which is isomorphic to the hyper-octahedron (also known as the cross-polytope). The $2^{2^n}$ facet-defining inequalities can all be described by a single inequality given as
\begin{equation}
\label{eq:gen-bell}
\sum_{s_1, \dots, s_n = -1,1} \big\lvert \sum_{x_1, \dots, x_n = 1,2} s_1^{x_1 - 1} \dots s_n^{x_n - 1} E(x_1, \dots, x_n) \big\rvert \leq 2^n. 
\end{equation}
Note that due to the isomorphism with the cross-polytope, these facets are all simplices. The violation of the general Bell inequality (\ref{eq:gen-bell}) by an $n$-qudit state has also been widely studied. 

A well-known result known as the Jordan Lemma~\cite{Jordan1875} states that any pair of Hermitian unitaries admit a common block diagonalisation in blocks of dimension no more than $2$. We can thus set
$A^{(i)}_{x_i} = \oplus_{k} (A^{(i)}_{x_i})^k$,
where the observables $(A^{(i)}_{x_i})^k$ for all $k$ are still binary Hermitian and of dimension at most $2$. This reduces the problem of finding the optimal quantum strategy to considering qubit subspaces on each party. We therefore assume that each party possesses a qubit and measures the binary observable $A^{(j)}_{x_j} = \vec{n}^{(j)}_{x_j} \cdot \vec{\sigma}$ on it. An arbitrary mixed state of $n$ qubits is given as
\begin{equation}
\rho = \frac{1}{2^n} \sum_{k_1, \dots, k_n = 0}^{3} T_{k_1, \dots, k_n} \sigma^{(1)}_{k_1} \otimes \dots \otimes \sigma^{(n)}_{k_n}
\end{equation} 
where $\sigma^{(j)}_{k_j}$ are the usual Pauli operators $\{\mathbf{1}, \sigma_x, \sigma_y,  \sigma_z \}$, $\vec{\sigma} = (\sigma_x, \sigma_y, \sigma_z)$ and the real coefficients $T_{k_1, \dots, k_n}$ form the correlation tensor $\hat{T}$. The two measurement directions for each party $\vec{n}^{(j)}_1$ and $\vec{n}^{(j)}_2$ span a plane and can be described by the orthogonal measurement directions $\vec{o}^{(j)}_1 \perp \vec{o}^{(j)}_2$ by
\begin{eqnarray}
\vec{n}^{(j)}_1 &= & \vec{o}^{(j)}_1 \cos{\left( \theta_j + \frac{\pi}{2} \right)} -  \vec{o}^{(j)}_2 \cos{\left(\theta_j + \pi \right)} \nonumber \\
\vec{n}^{(j)}_2 &= & \vec{o}^{(j)}_1 \cos{\left(\theta_j + \frac{\pi}{2}\right)} +  \vec{o}^{(j)}_2 \cos{\left(\theta_j + \pi \right)} \nonumber \\
\end{eqnarray}
We thus arrive at the following lemma derived originally in~\cite{ZukowskiBrukner02} describing a necessary and sufficient condition for the existence of a local model for the $n$-body correlation functions of any $n$-qubit state, as well as a sufficient condition obtained from it by the application of the Cauchy-Schwarz inequality. 
\begin{lemma}[\cite{ZukowskiBrukner02}]
	\label{lem:Zukowski-Brukner}
	An $n$-qubit state $\rho$ with correlation tensor $\hat{T}$ satisfies the general correlation Bell inequality (\ref{eq:gen-bell}) if and only if for any set of local measurement planes $\text{span}(\vec{o}^{(j)}_1, \vec{o}^{(j)}_2)$ and measurement angles $\{\theta_j\}$ there holds
\begin{equation}
\sum_{k_1, \dots, k_n = 1,2} \vert T_{k_1, \dots, k_n} \vert \prod_{j=1}^{n} \cos{\left(\theta_j + k_j \frac{\pi}{2}\right)} \leq 1.
\end{equation}
A sufficient condition for the local realistic description of the $n$-qubit correlation function is given by 
\begin{equation}
\label{eq:gen-bell-2}
\sum_{k_1, \dots, k_n = 1,2} T^2_{k_1, \dots, k_n} \leq 1.
\end{equation}
\end{lemma}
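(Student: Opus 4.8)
The plan is to convert the abstract facet inequality (\ref{eq:gen-bell}), written in terms of the correlation functions $E(x_1,\dots,x_n)$, into a concrete statement about the correlation tensor $\hat T$, and then read off both the exact condition and its Cauchy--Schwarz relaxation. First I would substitute $E(x_1,\dots,x_n)=\Tr[\rho\,\bigotimes_{j}\vec{n}^{(j)}_{x_j}\cdot\vec\sigma]$, expand $\rho$ in the Pauli basis, and use $\Tr[\sigma_a\sigma_b]=2\delta_{ab}$ so that the $2^{-n}$ prefactor cancels against the traces. This yields $E(x_1,\dots,x_n)=\sum_{l_1,\dots,l_n}T_{l_1,\dots,l_n}\prod_j(\vec{n}^{(j)}_{x_j})_{l_j}$, i.e.\ each correlator is the contraction of $\hat T$ with the chosen Bloch directions.

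The key algebraic step is to evaluate the inner signed sum for a fixed sign pattern $\vec s=(s_1,\dots,s_n)$. By linearity of the trace, $\sum_{x_1,\dots,x_n}\prod_j s_j^{x_j-1}E(x_1,\dots,x_n)=\Tr\big[\rho\bigotimes_j(\vec{n}^{(j)}_1+s_j\vec{n}^{(j)}_2)\cdot\vec\sigma\big]$. Here I would invoke the parametrization preceding the lemma: a direct computation gives $\vec{n}^{(j)}_1+\vec{n}^{(j)}_2=2\cos(\theta_j+\tfrac\pi2)\,\vec{o}^{(j)}_1$ and $\vec{n}^{(j)}_1-\vec{n}^{(j)}_2=-2\cos(\theta_j+\pi)\,\vec{o}^{(j)}_2$, so the choice $s_j=+1$ (resp.\ $s_j=-1$) collapses party $j$'s contribution onto the single orthogonal direction $\vec{o}^{(j)}_1$ (resp.\ $\vec{o}^{(j)}_2$) with weight $2\cos(\theta_j+k_j\tfrac\pi2)$, where $k_j=1,2$. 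This establishes a bijection $\vec s\leftrightarrow\vec k\in\{1,2\}^n$ under which the signed sum equals $2^n\big(\prod_j\cos(\theta_j+k_j\tfrac\pi2)\big)T_{k_1,\dots,k_n}$, with $T$ now read in the in-plane basis $\{\vec{o}^{(j)}_1,\vec{o}^{(j)}_2\}$. Taking moduli and summing over $\vec s$, the left-hand side of (\ref{eq:gen-bell}) becomes exactly $2^n$ times the displayed tensor sum, so (\ref{eq:gen-bell}) holds for every choice of measurement directions if and only if the tensor inequality holds for every plane and every angle set $\{\theta_j\}$, which is the claimed equivalence.

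For the sufficient condition I would bound the weighted tensor sum by Cauchy--Schwarz, separating the tensor magnitudes from the trigonometric weights: $\sum_{\vec k}|T_{\vec k}|\prod_j|\cos(\theta_j+k_j\tfrac\pi2)|\le\big(\sum_{\vec k}T_{\vec k}^2\big)^{1/2}\big(\sum_{\vec k}\prod_j\cos^2(\theta_j+k_j\tfrac\pi2)\big)^{1/2}$. The second factor factorizes across parties, and for each $j$ the identity $\cos^2(\theta_j+\tfrac\pi2)+\cos^2(\theta_j+\pi)=\sin^2\theta_j+\cos^2\theta_j=1$ makes the whole product equal to $1$. Hence the weighted sum is at most $\big(\sum_{\vec k}T_{\vec k}^2\big)^{1/2}$, and condition (\ref{eq:gen-bell-2}) forces this to be $\le1$. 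Since orthogonal projection of $\hat T$ onto any measurement plane cannot increase its squared Frobenius norm, verifying the squared-sum bound in fact secures the criterion simultaneously for all planes.

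I expect the main obstacle to be the careful bookkeeping in the key step: confirming that the $\pm$ signs and the $2\cos(\theta_j+k_j\tfrac\pi2)$ weights line up so that, after taking the modulus in (\ref{eq:gen-bell}), the sign ambiguities are immaterial and the angular dependence is captured uniformly over all $\theta_j$ (so that the signed cosines in the lemma statement may be replaced by their moduli under the universal quantifier). One must also stay attentive to the distinction between the full three-component tensor and its projection onto the two in-plane directions, since only the latter enters the exact condition whereas the former controls the Cauchy--Schwarz sufficient condition across all planes.
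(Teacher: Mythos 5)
Your proposal is correct and takes essentially the same route the paper sketches (the paper itself does not prove the lemma but cites~\cite{ZukowskiBrukner02}, with its preamble setting up precisely this derivation): the Pauli expansion of $\rho$ together with the $\vec{o}^{(j)}_1,\vec{o}^{(j)}_2$ parametrization collapses each sign pattern in Eq.~(\ref{eq:gen-bell}) onto a single in-plane tensor element with weight $2^n\prod_j\cos\left(\theta_j+k_j\frac{\pi}{2}\right)$, and the sufficient condition then follows, as the paper indicates, by Cauchy--Schwarz using $\cos^2\left(\theta_j+\frac{\pi}{2}\right)+\cos^2\left(\theta_j+\pi\right)=1$. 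Your additional care about signed versus unsigned cosines under the universal quantifier over angles, and about the in-plane projection versus the full correlation tensor, is sound and slightly more explicit than what either the paper or the cited reference records.
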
  
Note that the local bound has been normalized to unity for all the $(n,2,2)$ correlation inequalities. Lemma~\ref{lem:Zukowski-Brukner} is the generalization of the well-known Horodecki criterion~\cite{Horodecki95} relating the quantum violation of the CHSH inequality to the correlation tensor as 
\begin{equation}
\label{eq:Horodec-crit}
\langle \text{CHSH} \rangle^2_{AB} \leq \sum_{k_A, k_B = 1,2} 4 T^2_{k_A, k_B}. 
\end{equation}
Therefore as far as the quantum violation is concerned, it suffices to consider the sum of squares of the correlation tensor elements of a general $n$-qubit state in a plane which we take to be the $x-z$ plane without loss of generality.

 \textit{Trade-offs in qubit networks from uncertainty relations for complementary observables.-} 
 We are interested in deriving trade-off relations for the violation of the general multipartite correlation Bell inequality (binary XOR game) given in Eq.(\ref{eq:gen-bell-2}). In particular, we derive trade-off relations for the $n$-party Bell expression $\mathcal{I}^2_{l_1, \dots, l_n} = \sum_{k_{l_1}, \dots, k_{l_n} = 1,2} T^2_{k_{l_1}, \dots, k_{l_n}}$ played by the players labeled $l_1, \dots, l_n$, the local bound of the Bell expression being $1$. To simplify notation, we will denote the players by a qubit network represented by a hypergraph $H$. Each edge of the hypergraph $e \in E(H)$ will denote the subset $e = (l_1, \dots, l_n)$ of players taking part in a single game, and the value of the game achieved by the players will be denoted by $\mathcal{I}_{e}$. The trade-off relations we derive will be of the form
 \begin{equation}
 \label{eq:gen-monogamy-relation}
 \sum_{e \in E(H)} \langle \mathcal{I} \rangle^{2}_{e} \leq \vert E(H) \vert,
 \end{equation}  
 and we will refer to them as ``hyperspherical trade-off relations" in what follows.
 As a first method to derive relations of this form following~\cite{KPRLK11}, we employ the following uncertainty-type relation for complementary observables 
 \begin{lemma}[\cite{KPRLK11, WW08}]
 	\label{lem:corr-comp}
 	Let $A_1, \dots, A_m$ be binary Hermitian observables satisfying $\{A_i, A_j\} = 2 \delta_{i,j} \mathbf{1}$. Then for any quantum state $\rho$ it holds that 
 	\begin{equation}
 	\sum_{i=1}^{m} \Tr[A_i \rho]^2 \leq 1.
 	\end{equation}
 \end{lemma}
 In particular, we employ Lemma~\ref{lem:corr-comp} with observables $\otimes_{l=1}^{m} O^{(l)}_{k_{l}}$ for $k_{l} = 1, 2$ with each $O^{(l)}_{k_l} \in \{\sigma_x, \sigma_z\}$, i.e., observables corresponding to the correlation tensor element $T_{k_{1}, \dots, k_{n}}$ from the Bell expression (\ref{eq:gen-bell-2}).

 Let us illustrate the method by rederiving with it the CHSH monogamy relation of Toner and Verstraete~\cite{TV06}. The CHSH monogamy relation corresponds to the first graph in Fig.~\ref{fig:example-networks} and is given from Eqs.(\ref{eq:CHSH-mono}) and (\ref{eq:Horodec-crit}) as 
\begin{equation}
\label{eq:chsh-mono-corr}
\sum_{k_A, k_B = 1,2} T^2_{k_A, k_B} + \sum_{k_A, k_C = 1,2} T^2_{k_A, k_C} \leq 2.
\end{equation}
 The eight correlation tensor elements can be grouped into two sets of mutually anti-commuting elements as $\{\textsc{XXI}, \textsc{XZI}, \textsc{ZIX}, \textsc{ZIZ}\}$ and $\{\textsc{XIX}, \textsc{XIZ}, \textsc{ZXI}, \textsc{ZZI}\}$ where we use the notation $\textsc{X} \equiv \sigma_x$, $\textsc{Z} \equiv \sigma_z$ and $\textsc{I} \equiv \sigma_0$. By Lemma~\ref{lem:corr-comp}, the sum of squares of the four correlation tensor elements in each set is bounded by unity giving the CHSH monogamy relation (\ref{eq:chsh-mono-corr}). 
 
 One method to derive trade-off relations between $k$ sets of $n$-party correlation inequalities is therefore to list the $k \cdot 2^n$ correlation tensor elements appearing in the inequality (\ref{eq:gen-bell-2}) and group them into sets of mutually anti-commuting elements. If a grouping into $k$ such sets exists, then a monogamy relation holds between the inequalities. In what follows, we investigate qubit networks represented by hypergraphs where such trade-off relations exist.

 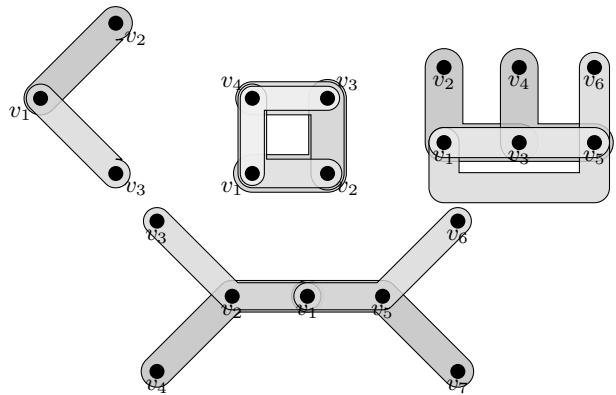
\begin{figure}
 	\centering
 			\begin{tikzpicture} 
 	
 	\node (v1) at (0, 0) {};
 	\node (v2) at (1, 1) {};
 	\node (v3) at (1, -1) {};
 	
 	\begin{scope}[fill opacity=0.8]
 	\filldraw[fill=gray!50]
 	($(v1) + ({-0.22 / sqrt(2)}, {0.22 / sqrt(2)})$) 
 	to[out=45,in=225] ($(v2) + ({-0.22 / sqrt(2)}, {0.22 / sqrt(2)})$)
 	to[out=45,in=180] ($(v2) + (0, 0.22)$)
 	to[out=0,in=180] ($(v2) + (0, 0.22)$)
 	to[out=0,in=90] ($(v2) + (0.22, 0)$)
 	to[out=270,in=0] ($(v2) + (0, -0.22)$)
 	to[out=180,in=0] ($(v2) + ({0.22 * (sqrt(2) - 1)}, -0.22)$)
 	to[out=225,in=45] ($(v1) + ({0.22 / sqrt(2)}, {-0.22 / sqrt(2)})$)
 	to[out=225,in=315] ($(v1) + ({-0.22 / sqrt(2)}, {-0.22 / sqrt(2)})$)
 	to[out=135,in=225] ($(v1) + ({-0.22 / sqrt(2)}, {0.22 / sqrt(2)})$)
 	;
 	\filldraw[fill=gray!30]
 	($(v1) + ({-0.19 / sqrt(2)}, {-0.19 / sqrt(2)})$) 
 	to[out=315,in=135] ($(v3) + ({-0.19 / sqrt(2)}, {-0.19 / sqrt(2)})$)
 	to[out=315,in=180] ($(v3) + (0, -0.19)$)
 	to[out=0,in=180] ($(v3) + (0, -0.19)$)
 	to[out=0,in=270] ($(v3) + (0.19, 0)$)
 	to[out=90,in=0] ($(v3) + (0, 0.19)$)
 	to[out=180,in=0] ($(v3) + ({0.19 * (sqrt(2) - 1)}, 0.19)$)
 	to[out=135,in=315] ($(v1) + ({0.19 / sqrt(2)}, {0.19 / sqrt(2)})$)
 	to[out=135,in=45] ($(v1) + ({-0.19 / sqrt(2)}, {0.19 / sqrt(2)})$)
 	to[out=225,in=135] ($(v1) + ({-0.19 / sqrt(2)}, {-0.19 / sqrt(2)})$)
 	;
 	\end{scope}
 	
 	\fill (v1) circle (0.1) node [below left] {$v_1$};
 	\fill (v2) circle (0.1) node [below right] {$v_2$};
 	\fill (v3) circle (0.1) node [below right] {$v_3$};
 	
 	
 	\end{tikzpicture}
 	\qquad
 	\begin{tikzpicture} 
 	\node (vA) at (0, 0) {};
 	\node (vB) at (1, 0) {};
 	\node (vC) at (1, 1) {};
 	\node (vD) at (0, 1) {};
 	
 	\begin{scope}[fill opacity=0.8]
 	
 	\filldraw[fill=gray!40]
 	($(vA) + ({0}, {-0.25})$) 
 	to[out=0,in=180] ($(vB) + ({0}, {-0.25})$)
 	to[out=0,in=270] ($(vB) + ({0.25}, {0})$)
 	to[out=90,in=270] ($(vC) + ({0.25}, {0})$)
 	to[out=90,in=0] ($(vC) + ({0}, {0.25})$)
 	to[out=180,in=90] ($(vC) + ({-0.25}, {0})$)
 	to[out=270,in=90] ($(vB) + ({-0.25}, {0.25})$)
 	to[out=180,in=0] ($(vA) + ({0}, {0.25})$)
 	to[out=180,in=90] ($(vA) + ({-0.25}, {0})$)
 	to[out=270,in=180] ($(vA) + ({0}, {-0.25})$)
 	;
 	
 	\filldraw[fill=gray!30]
 	($(vB) + ({0.22}, {0})$) 
 	to[out=90,in=270] ($(vC) + ({0.22}, {0})$)
 	to[out=90,in=0] ($(vC) + ({0}, {0.22})$)
 	to[out=180,in=0] ($(vD) + ({0}, {0.22})$)
 	to[out=180,in=90] ($(vD) + ({-0.22}, {0})$)
 	to[out=270,in=180] ($(vD) + ({0}, {-0.22})$)
 	to[out=0,in=180] ($(vC) + ({-0.22}, {-0.22})$)
 	to[out=270,in=90] ($(vB) + ({-0.22}, {0})$)
 	to[out=270,in=180] ($(vB) + ({0}, {-0.22})$)
 	to[out=0,in=270] ($(vB) + ({0.22}, {0})$)
 	;
 	
 	\filldraw[fill=gray!20]
 	($(vB) + ({0}, {-0.19})$) 
 	to[out=180,in=0] ($(vA) + ({0}, {-0.19})$)
 	to[out=180,in=270] ($(vA) + ({-0.19}, {0})$)
 	to[out=90,in=270] ($(vD) + ({-0.19}, {0})$)
 	to[out=90,in=180] ($(vD) + ({0}, {0.19})$)
 	to[out=0,in=90] ($(vD) + ({0.19}, {0})$)
 	to[out=270,in=90] ($(vA) + ({0.19}, {0.19})$)
 	to[out=0,in=180] ($(vB) + ({0}, {0.19})$)
 	to[out=0,in=90] ($(vB) + ({0.19}, {0})$)
 	to[out=270,in=0] ($(vB) + ({0}, {-0.19})$)
 	;
 	
 	\filldraw[fill=gray!10]
 	($(vA) + ({-0.16}, {0})$) 
 	to[out=90,in=270] ($(vD) + ({-0.16}, {0})$)
 	to[out=90,in=180] ($(vD) + ({0}, {0.16})$)
 	to[out=0,in=180] ($(vC) + ({0}, {0.16})$)
 	to[out=0,in=90] ($(vC) + ({0.16}, {0})$)
 	to[out=270,in=0] ($(vC) + ({0}, {-0.16})$)
 	to[out=180,in=0] ($(vD) + ({0.16}, {-0.16})$)
 	to[out=270,in=90] ($(vA) + ({0.16}, {0})$)
 	to[out=270,in=0] ($(vA) + ({0}, {-0.16})$)
 	to[out=180,in=270] ($(vA) + ({-0.16}, {0})$)
 	;
 	
 	\end{scope}
 	
 	\fill (vA) circle (0.1) node [below left] {$v_1$};
 	\fill (vB) circle (0.1) node [below right] {$v_{2}$};
 	\fill (vC) circle (0.1) node [above right] {$v_{3}$};
 	\fill (vD) circle (0.1) node [above left] {$v_{4}$};
 	
 	\end{tikzpicture}
 	\qquad
 	\begin{tikzpicture} 
 	
 	\node (v1) at (0, 0) {};
 	\node (v2) at (0, 1) {};
 	\node (v3) at (1, 0) {};
 	\node (v4) at (1, 1) {};
 	\node (v5) at (2, 0) {};
 	\node (v6) at (2, 1) {};
 	
 	\begin{scope}[fill opacity=0.8]
 	
 	\filldraw[fill=gray!50]
 	($(v1) + ({0}, {-0.25})$) 
 	to[out=0,in=180] ($(v3) + ({0}, {-0.25})$)
 	to[out=0,in=270] ($(v3) + ({0.25}, {0})$)
 	to[out=90,in=0] ($(v3) + ({0}, {0.25})$)
 	to[out=180,in=0] ($(v1) + ({0.25}, {0.25})$)
 	to[out=90,in=270] ($(v2) + ({0.25}, {0})$)
 	to[out=90,in=0] ($(v2) + ({0}, {0.25})$)
 	to[out=180,in=90] ($(v2) + ({-0.25}, {0})$)
 	to[out=270,in=90] ($(v1) + ({-0.25}, {0})$)
 	to[out=270,in=180] ($(v1) + ({0}, {-0.25})$)
 	;
 	
 	\filldraw[fill=gray!50]
 	($(v3) + ({0}, {-0.25})$) 
 	to[out=0,in=180] ($(v5) + ({0}, {-0.25})$)
 	to[out=0,in=270] ($(v5) + ({0.25}, {0})$)
 	to[out=90,in=0] ($(v5) + ({0}, {0.25})$)
 	to[out=180,in=0] ($(v3) + ({0.25}, {0.25})$)
 	to[out=90,in=270] ($(v4) + ({0.25}, {0})$)
 	to[out=90,in=0] ($(v4) + ({0}, {0.25})$)
 	to[out=180,in=90] ($(v4) + ({-0.25}, {0})$)
 	to[out=270,in=90] ($(v3) + ({-0.25}, {0})$)
 	to[out=270,in=180] ($(v3) + ({0}, {-0.25})$)
 	;
 	
 	\filldraw[fill=gray!30]
 	($(v6) + ({0.2}, {0})$) 
 	to[out=270,in=90] ($(v5) + ({0.2}, {-3*0.2})$)
 	to[out=270,in=0] ($(v5) + ({0}, {-4*0.2})$)
 	to[out=180,in=0] ($(v1) + ({0}, {-4*0.2})$)
 	to[out=180,in=270] ($(v1) + ({-0.2}, {-3*0.2})$)
 	to[out=90,in=270] ($(v1) + ({-0.2}, {0})$)
 	to[out=90,in=180] ($(v1) + ({0}, {0.2})$)
 	to[out=0,in=90] ($(v1) + ({0.2}, {0})$)
 	to[out=270,in=90] ($(v1) + ({0.2}, {-2*0.2})$)
 	to[out=0,in=180] ($(v5) + ({-0.2}, {-2*0.2})$)
 	to[out=90,in=270] ($(v6) + ({-0.2}, {0})$)
 	to[out=90,in=180] ($(v6) + ({0}, {0.2})$)
 	to[out=0,in=90] ($(v6) + ({0.2}, {0})$)
 	;
 	
 	\filldraw[fill=gray!15]
 	($(v1) + ({-0.2}, {0})$) 
 	to[out=90,in=180] ($(v1) + ({0}, {0.2})$)
 	to[out=0,in=180] ($(v5) + ({0}, {0.2})$)
 	to[out=0,in=90] ($(v5) + ({0.2}, {0})$)
 	to[out=270,in=0] ($(v5) + ({0}, {-0.2})$)
 	to[out=180,in=0] ($(v1) + ({0}, {-0.2})$)
 	to[out=180,in=270] ($(v1) + ({-0.2}, {0})$)
 	;
 	
 	\end{scope}
 	
 	\fill (v1) circle (0.1) node [below] {$v_1$};
 	\fill (v2) circle (0.1) node [below] {$v_{2}$};
 	\fill (v3) circle (0.1) node [below] {$v_{3}$};
 	\fill (v4) circle (0.1) node [below] {$v_{4}$};
 	\fill (v5) circle (0.1) node [below] {$v_{5}$};
 	\fill (v6) circle (0.1) node [below] {$v_{6}$};
 	
 	\end{tikzpicture}
 	\qquad 
 	 	\begin{tikzpicture} 
 	
 	\node (v1) at (0, 0) {};
 	\node (v2) at (-1, 0) {};
 	\node (v3) at (-2, 1) {};
 	\node (v4) at (-2, -1) {};
 	\node (v5) at (1, 0) {};
 	\node (v6) at (2, 1) {};
 	\node (v7) at (2, -1) {};
 	
 	\begin{scope}[fill opacity=0.8]
 	
 	\filldraw[fill=gray!50]
 	($(v4) + ({-0.21 / sqrt(2)}, {0.21 / sqrt(2)})$) 
 	to[out=45,in=225] ($(v2) + ({-0.21 / sqrt(2)}, {0.21 / sqrt(2)})$)
 	to[out=45,in=180] ($(v2) + (0, {0.21})$)
 	to[out=0,in=180] ($(v1) + (0, {0.21})$)
 	to[out=0,in=90] ($(v1) + ({0.21}, 0)$)
 	to[out=270,in=0] ($(v1) + (0, {-0.21})$)
 	to[out=180,in=0] ($(v2) + ({(sqrt(2)-1) * 0.21}, {-0.21})$)
 	to[out=225,in=45] ($(v4) + ({0.21 / sqrt(2)}, {-0.21 / sqrt(2)})$)
 	to[out=225,in=315] ($(v4) + ({-0.21 / sqrt(2)}, {-0.21 / sqrt(2)})$)
 	to[out=135,in=225] ($(v4) + ({-0.21 / sqrt(2)}, {0.21 / sqrt(2)})$)
 	;
 	
 	\filldraw[fill=gray!50]
 	($(v1) + (0, 0.21)$) 
 	to[out=0,in=180] ($(v5) + (0, {0.21})$)
 	to[out=0,in=135] ($(v5) + ({0.21 / sqrt(2)}, {0.21 / sqrt(2)})$)
 	to[out=315,in=135] ($(v7) + ({0.21 / sqrt(2)}, {0.21 / sqrt(2)})$)
 	to[out=315,in=45] ($(v7) + ({0.21 / sqrt(2)}, {-0.21 / sqrt(2)})$)
 	to[out=225,in=315] ($(v7) + ({-0.21 / sqrt(2)}, {-0.21 / sqrt(2)})$)
 	to[out=135,in=315] ($(v5) + ({-(sqrt(2)-1) * 0.21}, {-0.21})$)
 	to[out=180,in=0] ($(v1) + ({0}, {-0.21})$)
 	to[out=180,in=270] ($(v1) + ({-0.21}, {0})$)
 	to[out=90,in=180] ($(v1) + ({0}, {0.21})$)
 	;
 	
 	\filldraw[fill=gray!30]
 	($(v3) + ({-0.18 / sqrt(2)}, {-0.18 / sqrt(2)})$) 
 	to[out=315,in=135] ($(v2) + ({-0.18 / sqrt(2)}, {-0.18 / sqrt(2)})$)
 	to[out=315,in=180] ($(v2) + (0, {-0.18})$)
 	to[out=0,in=180] ($(v1) + (0, {-0.18})$)
 	to[out=0,in=270] ($(v1) + ({0.18}, 0)$)
 	to[out=90,in=0] ($(v1) + (0, {0.18})$)
 	to[out=180,in=0] ($(v2) + ({(sqrt(2)-1) * 0.18}, {0.18})$)
 	to[out=135,in=315] ($(v3) + ({0.18 / sqrt(2)}, {0.18 / sqrt(2)})$)
 	to[out=135,in=45] ($(v3) + ({-0.18 / sqrt(2)}, {0.18 / sqrt(2)})$)
 	to[out=225,in=135] ($(v3) + ({-0.18 / sqrt(2)}, {-0.18 / sqrt(2)})$)
 	;
 	
 	\filldraw[fill=gray!30]
 	($(v1) + (0, -0.18)$)
 	to[out=0,in=180] ($(v5) + (0, {-0.18})$)
 	to[out=0,in=225] ($(v5) + ({0.18 / sqrt(2)}, {-0.18 / sqrt(2)})$)
 	to[out=45,in=225] ($(v6) + ({0.18 / sqrt(2)}, {-0.18 / sqrt(2)})$)
 	to[out=45,in=315] ($(v6) + ({0.18 / sqrt(2)}, {0.18 / sqrt(2)})$)
 	to[out=135,in=45] ($(v6) + ({-0.18 / sqrt(2)}, {0.18 / sqrt(2)})$)
 	to[out=225,in=45] ($(v5) + ({-(sqrt(2)-1) * 0.18}, {0.18})$)
 	to[out=180,in=0] ($(v1) + ({0}, {0.18})$)
 	to[out=180,in=90] ($(v1) + ({-0.18}, {0})$)
 	to[out=270,in=180] ($(v1) + ({0}, {-0.18})$)
 	;	
 	
 	\end{scope}
 	
 	\fill (v1) circle (0.1) node [below] {$v_1$};
 	\fill (v2) circle (0.1) node [below] {$v_{2}$};
 	\fill (v3) circle (0.1) node [below] {$v_{3}$};
 	\fill (v4) circle (0.1) node [below] {$v_{4}$};
 	\fill (v5) circle (0.1) node [below] {$v_{5}$};
 	\fill (v6) circle (0.1) node [below] {$v_{6}$};
 	\fill (v7) circle (0.1) node [below] {$v_{7}$};
 	
 	\end{tikzpicture}
 	
 	\caption{Examples of qubit networks where tight trade-offs arise due to the uncertainty relation for anti-commuting observables (see also~\cite{KPRLK11} where similar trade-offs were derived). For any four qubit state and an arbitrary $3$-party full correlation inequality (binary \textsc{XOR} game) $\mathcal{I}$ , it holds that $\langle \mathcal{I} \rangle^2_{123}+ \langle \mathcal{I} \rangle^2_{234}+ \langle \mathcal{I} \rangle^2_{341}+ \langle \mathcal{I} \rangle^2_{412}\leq 4$. For any $6$ qubit state , it holds that	$\langle \mathcal{I} \rangle^2_{123}+\langle \mathcal{I} \rangle^2_{345}+ \langle \mathcal{I} \rangle^2_{561}+ \langle \mathcal{I} \rangle^2_{135}\leq 4$. For any $7$ qubit state, it holds that $\langle \mathcal{I} \rangle^2_{123}+ \langle \mathcal{I} \rangle^2_{124}+ \langle \mathcal{I} \rangle^2_{156}+ \langle \mathcal{I} \rangle^2_{157}\leq 4.$}
 	\label{fig:example-networks}
 \end{figure}

  Examples of hypergraphs with monogamy relations are given in Fig.~\ref{fig:example-networks}. A mathematical technique to derive relations of this form is to represent the observables $\mathcal{O} := \otimes_{l=1}^n O^{(l)}_{k_{l}}$ corresponding to the correlation tensor elements $T_{k_{1}, \dots, k_n}$ as vertices $v_{\mathcal{O}} \in V(\Gamma)$ of a graph $\Gamma$ with an edge between two vertices denoting anti-commutation of the associated operators i.e., $v_{\mathcal{O}} \sim v_{\mathcal{O}'} \Leftrightarrow \{\mathcal{O}, \mathcal{O'}\} = 0$. Equivalently, following the literature on error-correcting codes, one could also represent an operator $\mathcal{O}$ (with $\mathcal{O} = \exp{(i \phi)} \; \textbf{X}^\textbf{a} \textbf{Z}^\textbf{b}$, $\textbf{X}^\textbf{a} = X^{a_1} \otimes X^{a_2} \otimes \dots \otimes X^{a_n}$) as a $2n$-bit binary vector $(\textbf{a}| \textbf{b})$. The vertices then correspond to $2n$-bit binary strings and two vertices are connected by an edge if their symplectic inner product is $1$, where the symplectic inner product $\odot$ is defined by $(\textbf{a}| \textbf{b}) \odot (\textbf{a'} | \textbf{b'}) = \textbf{a} \cdot \textbf{b'} + \textbf{a'} \cdot \textbf{b}$, where $\cdot$ denotes the usual binary inner product. Remarkably, this exact structure has been studied in the mathematical literature as the \textit{unitary self-adjoint representation of a graph}. The problem of grouping these operators into sets of mutually anti-commuting ones then becomes equivalent to the problem of finding a clique cover of the graph, i.e., partitioning the vertices of the graph into cliques (recall that a clique in a graph is a subset of mutually adjacent vertices). 
 
 We first analyze the structure of the anti-commutation graphs that arise from hyperspherical monogamy relations of the type (\ref{eq:gen-monogamy-relation}). In particular, we observe properties of the anti-commutation graph in relation to its clique size (maximal number of mutually adjacent vertices) and vertex clique cover number (smallest number of cliques needed to cover all the vertices of the graph). We will restrict our attention to the scenario of a uniform hypergraph of degree $r(H) = \log{(\vert E(H) \vert)} + 1$ as in the CHSH monogamy relation (\ref{eq:CHSH-mono}). Note that each hyperedge in $H$ corresponds to a Bell expression so that the total number of Bell expressions in the trade-off relation is $|E(H)|$, and the degree $r(H)$ denotes the number of parties participating in a single expression with each expression involving the same number of parties due to the uniformity of the hypergraph. 
%
 \begin{prop}
 	The anti-commutation graph $\Gamma$ of a hypersphere monogamy relation arising from correlation complementarity for an $n$-party correlation Bell inequality arranged in a configuration given by a uniform hypergraph $H$ with degree  $r(H) = \log{(\vert E(H) \vert)} + 1$ is characterized by: (i) $|V(\Gamma)| = 2 |E(H)|^2$, (ii) clique size $\omega_c(\Gamma) = 2|E(H)|$, (iii) vertex clique cover number $cp(\Gamma) = |E(H)|$ and (iv) regularity with degree $2^{2(n-1)}-1$.
 \end{prop}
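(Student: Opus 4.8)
The plan is to work in the binary symplectic representation introduced above, writing each vertex-operator as a pair $(e,p)$ where $e\in E(H)$ is the hyperedge it is supported on and $p\in\mathbb{F}_2^{\,n}$ records the $\textsc{X}/\textsc{Z}$ choice on the $n$ parties of $e$ (say $0\equiv\sigma_x$, $1\equiv\sigma_z$). Since the support of $(e,p)$ is exactly $e$, operators attached to distinct edges are distinct, and each edge carries $2^n$ of them; with $|E(H)|=2^{\,n-1}$ this gives $|V(\Gamma)|=|E(H)|\cdot 2^{n}=2^{\,2n-1}=2|E(H)|^2$, which is (i). Two facts govern all adjacencies. First, for $e=e'$ the operators $(e,p)$ and $(e,q)$ anti-commute iff $p\oplus q$ has odd weight, i.e. iff $\mathbf 1\cdot(p\oplus q)=1$. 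Second, for $e\neq e'$ the symplectic inner product sees only the overlap $e\cap e'$, so $(e,p)$ and $(e',p')$ anti-commute iff the restricted patterns differ in an odd number of the shared parties. If $e\cap e'=\emptyset$ they always commute; if $|e\cap e'|\ge 1$ then, for fixed $(e,p)$, exactly half of the overlap-patterns of $p'$ give odd parity while the $n-|e\cap e'|$ private coordinates of $p'$ are free, so precisely $2^{|e\cap e'|-1}\cdot 2^{\,n-|e\cap e'|}=2^{\,n-1}$ of the $2^n$ operators on $e'$ are adjacent to $(e,p)$, \emph{independently} of the overlap size.

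These two facts immediately yield the degree count (iv) and the clique upper bound in (ii). For (iv), the neighbours of a fixed $(e,p)$ split into the $2^{\,n-1}$ operators of odd weight-difference inside its own edge together with $2^{\,n-1}$ operators on each other edge that meets $e$; since the configurations of interest (all those of Fig.~\ref{fig:example-networks}) are \emph{intersecting} hypergraphs---every two hyperedges share a party, which is exactly the condition forcing a vertex-independent degree---this totals $2^{\,n-1}\,|E(H)|=2^{\,2(n-1)}$, so $\Gamma$ is regular, and within any one maximum clique a vertex has $2|E(H)|-1$ neighbours. For the clique bound, the first fact together with the parity identity $\mathbf 1\cdot(p\oplus q)+\mathbf 1\cdot(q\oplus r)=\mathbf 1\cdot(p\oplus r)$ forbids three pairwise anti-commuting operators on a single edge; hence any clique contains at most two operators per edge and therefore at most $2|E(H)|=2^{n}$ vertices in all, giving $\omega_c(\Gamma)\le 2|E(H)|$.

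It remains to realise this bound and to pin down the clique cover, and these are two faces of one construction: a partition of the $2|E(H)|^2$ vertices into $|E(H)|$ cliques, each containing exactly two operators per edge and hence of size exactly $2|E(H)|=\omega_c$. Such a partition proves (iii) from below via the counting bound $cp(\Gamma)\ge |V(\Gamma)|/\omega_c(\Gamma)=|E(H)|$ and from above by exhibiting it, and it simultaneously certifies $\omega_c(\Gamma)=2|E(H)|$, closing (ii). Concretely I would build one class by fixing, for each pair of overlapping edges, the parity of the shared pattern so that all cross-edge conditions of the second fact hold, and then letting the private coordinates vary to supply the within-edge anti-commuting pair; translating these parity choices should then sweep out the remaining $|E(H)|-1$ classes.

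The hard part is precisely this consistency: when an edge has no private party---as already happens for the edge $\{1,2,3\}$ in the four-qubit network, where every party is shared---the naive ``fix-the-overlap, free-the-private'' recipe yields only one operator from that edge, so one must instead solve the full system of overlap-parity constraints globally. I expect this to be the crux, and to be where both the intersecting structure of $H$ and the special degree relation $r(H)=\log|E(H)|+1$ (equivalently $|E(H)|=2^{\,n-1}$) are essential: they make the number of parity constraints match the $2^{\,n-1}$ available $\mathbb{F}_2$-degrees of freedom, so that the linear system is exactly solvable and its solution cosets tile $V(\Gamma)$ into the required $|E(H)|$ cliques. I would carry this out in the self-adjoint/error-correcting-code picture noted above, verifying that the resulting colour classes are cliques directly from the two adjacency facts.
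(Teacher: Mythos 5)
Your part (i) and the clique \emph{upper} bound in (ii) are sound and essentially the paper's argument: each hyperedge carries $2^{r(H)}$ operators supported exactly on that edge, and your $\mathbb{F}_2$ parity identity correctly shows that a clique contains at most two operators per edge (the paper merely asserts this; you actually prove it). The genuine gap is in the lower bound of (ii) and in (iii). The proposition's hypothesis --- that the monogamy relation \emph{arises from correlation complementarity} --- already hands you a partition of the $2|E(H)|^2$ operators into $|E(H)|$ mutually anti-commuting families; combined with the two-per-edge bound, each family has size exactly $2|E(H)|$, which is all the paper needs to conclude $\omega_c(\Gamma)=2|E(H)|$ and $cp(\Gamma)=|E(H)|$. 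You discard this hypothesis and instead try to \emph{construct} the partition from purely combinatorial assumptions (uniformity, $|E(H)|=2^{n-1}$, pairwise-intersecting edges), and you concede the construction is not carried out (``I expect this to be the crux''). This is not a fixable omission: no such construction exists at that level of generality. Take $n=4$ and any $8$ of the $\binom{6}{4}=15$ four-element subsets of a six-element vertex set; any two of them intersect and the degree relation holds, yet by Prop.~\ref{prop:min-qub-network} a grouping into anti-commuting families of size $2^n$ requires at least $2^{n-1}=8$ qubits, so on six qubits no such partition --- indeed no clique of size $2^n$ at all --- can exist, and conclusions (ii)--(iii) fail for that hypergraph. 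The existence of the grouping is genuine extra information carried by the hypothesis, not a consequence of your structural assumptions, so your ``the linear system is exactly solvable'' heuristic is precisely what breaks.

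Two further problems concern (iv). First, your count gives degree $2^{n-1}|E(H)|=2^{2(n-1)}$, while the stated claim is $2^{2(n-1)}-1$, and you never remark on the discrepancy. (Your arithmetic is in fact the defensible one: in the CHSH example the vertex $\textsc{XXI}$ anti-commutes with $\textsc{XZI}, \textsc{ZXI}, \textsc{ZIX}, \textsc{ZIZ}$, i.e.\ degree $4=2^{2(n-1)}$, whereas the paper's proof undercounts the within-edge neighbours as $2^{r(H)-1}-1$ instead of $2^{r(H)-1}$. But a blind proof that silently establishes a different number from the one claimed is incomplete either way; you must either prove the stated value or flag it as erroneous.) Second, your regularity argument assumes every two hyperedges intersect ``because the configurations of interest are those of Fig.~\ref{fig:example-networks}''; that property should instead be \emph{derived} from the hypothesis, as the paper does: a clique of size $2|E(H)|$ must contain two operators from each edge, and operators on disjoint edges commute, so disjoint edges are incompatible with the assumed clique structure. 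Reading it off the pictures makes your statement about ``the configurations of interest'' rather than about the class of graphs in the proposition.
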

\begin{proof}
	The first property is immediate: (i) each hyperedge corresponds to $2^{r(H)}$ operators from Eq.(\ref{eq:gen-bell-2}), the hypergraph being uniform each edge contains the same number of vertices. The second property follows from the fact that to derive a monogamy relation we group the operators into sets of size $2^{r(H)}$ which is a clique in the graph. To see that the clique is of maximal size, it suffices to observe that within each Bell expression in (\ref{eq:gen-bell-2}) there are at most two mutually anti-commuting operators. The vertex clique cover number $cp(\Gamma) = |E(H)|$ follows from $|V(\Gamma)|$ and $\omega(\Gamma)$. Each operator from a given Bell expression anti-commutes with $2^{r(H)-1}-1$ operators from within the same expression due to the fact that the tensor products of Pauli operators either commute or anti-commute and each operator commutes with exactly half of the others. Similarly, the operator anti-commutes with exactly half of the operators from each of the other Bell expressions (recall that in order to have $\omega(\Gamma) = 2 |E(H)|$ each operator must anti-commute with at least two operators from every other Bell expression). 
\end{proof}

We now proceed to derive the smallest size of a qubit network in which one might expect to derive monogamy relations among $(n,2,2)$ correlation inequalities for given $n$ using the formalism of anti-commutation graphs. 
 
 \begin{prop}
 	\label{prop:min-qub-network}
 	The smallest size of a qubit network for which a hypersphere monogamy relation for the $n$-party correlation Bell expression can be derived by the method outlined above is $2^{n-1}$. 
 \end{prop}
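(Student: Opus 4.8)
The plan is to obtain $2^{n-1}$ as a tight lower bound on the number of qubits $N$ that the clique-cover method requires. Recall that to produce a relation of the form~(\ref{eq:gen-monogamy-relation}) one lists the $|E(H)|\cdot 2^n$ operators $\mathcal{O}=\otimes_l O^{(l)}_{k_l}$ --- the $2^n$ operators of each of the $|E(H)|$ Bell expressions --- and partitions them into cliques of the anti-commutation graph, bounding each clique by Lemma~\ref{lem:corr-comp}. To reach the right-hand side $|E(H)|$ one may use at most $|E(H)|$ cliques, so by averaging at least one clique must contain $\ge 2^n$ operators; equivalently, Proposition~1 already records that the relevant graph has clique number $\omega_c=2|E(H)|=2^n$ under the degree constraint $r(H)=\log_2|E(H)|+1=n$. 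The whole lower bound thus reduces to bounding the size of a maximal clique, i.e.\ of a set of pairwise anti-commuting operators, that can live on $N$ qubits.

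First I would prove that on a network of $N$ qubits any set of pairwise anti-commuting Hermitian Pauli operators has size at most $2N+1$. Using the symplectic encoding $\mathcal{O}\mapsto(\mathbf{a}\,|\,\mathbf{b})\in\mathbb{F}_2^{2N}$ introduced above, a pairwise anti-commuting family $v_0,\dots,v_m$ has Gram matrix $G_{ij}=v_i\odot v_j=1-\delta_{ij}$ over $\mathbb{F}_2$, that is $G=J-I$ with $J$ the all-ones matrix. A short computation over $\mathbb{F}_2$ --- $G$ is invertible when $m+1$ is even, while for $m+1$ odd it is idempotent with one-dimensional kernel $\langle\mathbf{1}\rangle$ --- gives $\mathrm{rank}(G)=m+1$ or $m$ respectively. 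On the other hand $G=V^{\mathsf{T}}\Omega V$, where $V$ is the $2N\times(m+1)$ matrix whose columns are the $v_i$ and $\Omega$ is the symplectic form, so $\mathrm{rank}(G)\le 2N$. Hence $m+1\le 2N+1$.

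Combining the two observations yields the bound immediately: a clique of the required size $2^n$ exists on $N$ qubits only if $2^n\le 2N+1$, i.e.\ $N\ge 2^{n-1}$ (since $2^n$ is even this already forces $2^n\le 2N$). For the matching direction I would exhibit a network of exactly $2^{n-1}$ qubits: here $2N+1=2^n+1$ just accommodates cliques of size $2^n$, and Proposition~1 pins down the target graph (a $2^{2n-1}$-vertex, $(2^{2(n-1)}-1)$-regular graph of clique number $2^n$ and clique-cover number $2^{n-1}$). Concretely I would choose $2^{n-1}$ distinct $n$-subsets of the $2^{n-1}$ qubits with an $X/Z$ labelling splitting the operators into $2^{n-1}$ pairwise anti-commuting classes of size $2^n$, mirroring how the four $3$-subsets of four qubits give $\langle\mathcal{I}\rangle^2_{123}+\langle\mathcal{I}\rangle^2_{234}+\langle\mathcal{I}\rangle^2_{341}+\langle\mathcal{I}\rangle^2_{412}\le 4$ in Fig.~\ref{fig:example-networks}. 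I expect this construction to be the main obstacle: since two $\{X,Z\}$-strings anti-commute iff they differ in an odd number of slots, and three strings cannot be pairwise at odd Hamming distance, each edge can contribute only two operators to a clique, so every one of the $2^{n-1}$ classes must draw exactly two operators from each edge; arranging the edges and labellings so that all these pairs are globally pairwise anti-commuting is the delicate combinatorial step, whereas the lower bound above is the clean part, requiring only the maximal-clique estimate.
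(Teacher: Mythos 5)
Your argument is correct, and for the core step it takes a genuinely different route from the paper. Both proofs reduce the question to the same fact --- the clique-cover method forces a single clique of $2^n$ pairwise anti-commuting operators, and such a clique cannot live on fewer than $2^{n-1}$ qubits --- but where you bound the clique size via the $\mathbb{F}_2$-rank of the symplectic Gram matrix, the paper invokes Samoilenko's classification of unitary self-adjoint representations of graphs (Lemma~\ref{lem:unitary-rep-graph}): it realizes $K_m$ (even $m$) in dimension $2^{m/2}$ by the Clifford-type recursion $O_1=\sigma_x\otimes\mathbf{1}$, $O_2=\sigma_z\otimes\mathbf{1}$, $O_k=\sigma_y\otimes O^{(1)}_k$, and then deduces from the classification that $p(K_m)=0$, so no representation exists in any smaller dimension, forcing dimension $2^{2^{n-1}}$. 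Your Gram-matrix computation is right: $G=J-I$ over $\mathbb{F}_2$ is invertible when the family has even size and has corank one when it has odd size, while $G=V^{\mathsf{T}}\Omega V$ has rank at most $2N$, giving at most $2N+1$ pairwise anti-commuting Pauli strings on $N$ qubits and, by the parity refinement, $2^n\le 2N$, i.e.\ $N\ge 2^{n-1}$. Your pigeonhole justification that some clique in any successful cover must contain at least $2^n$ operators is in fact cleaner than the paper's appeal to the structural properties in its Proposition~1. The trade-off between the two routes: the paper's representation-theoretic argument applies to arbitrary binary Hermitian observables and also yields uniqueness of the representation up to unitary equivalence; yours is restricted to Pauli strings --- which is entirely legitimate here, since the method in question only ever produces tensor products of $\sigma_x$ and $\sigma_z$ --- and in exchange is elementary, self-contained, and uses only the binary symplectic encoding that the paper itself introduces.

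On the achievability direction, which you honestly flag as the delicate missing piece: the paper does not supply a general construction either. Its Clifford recursion realizes one clique of size $2^n$ on $2^{n-1}$ qubits, but those operators contain $\sigma_y$ factors and have nested supports, so they are not of the required form of $X/Z$ strings attached to the hyperedges of an $n$-uniform hypergraph; a matching network is exhibited only for $n=3$ (the square network of Fig.~\ref{fig:example-networks} with relation Eq.~(\ref{eq:square-network})). So your proposal is not weaker than the paper's proof on this point: in both treatments the proposition is effectively established as a lower bound, with saturation demonstrated by example rather than in general.
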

 \begin{proof}
 	The unitary self-adjoint representations of anti-commutation graphs have been studied by Samoilenko~\cite{Sam91}. The following lemma was shown for the irreducible unitary representations in a given dimension . 
 	\begin{lemma}[\cite{Sam91}]
 		\label{lem:unitary-rep-graph}
 		An $n$-vertex graph $\Gamma$ has $2^{p(\Gamma)}$ $(0 \leq p(\Gamma) \leq n)$ unitarily inequivalent irreducible unitary self-adjoint representations of equal dimension $2^{m(\Gamma)}$ where $2m(\Gamma)+p(\Gamma)=n$. 
 	\end{lemma}
 The exact values of $m(\Gamma)$ and $p(\Gamma)$ for a given graph $\Gamma$ can be computed using a construction used in the proof of the Lemma~\ref{lem:unitary-rep-graph} in~\cite{Sam91}. 
 
 Now, in deriving monogamy relations using correlation complementarity, we encounter graphs with clique size $\omega(\Gamma) = 2 |E(H)| = 2^n$ for an $(n,2,2)$ inequality. The proof rests on the fact that the smallest dimension of the Hilbert space in which one can have a unitary self-adjoint representation of the complete graph $K_m$ on $m$ vertices (for even $m$) is $2^{m/2}$. Let us firstly show that such a representation of $K_m$ exists in dimension $2^{m/2}$. We pick an edge $(v_1, v_2) \in E(K_m)$, the associated binary observables $O_1$ and $O_2$ in the Hilbert space $\mathcal{H}$ anti-commute. We decompose $\mathcal{H} = \mathbb{C}^2 \otimes \mathcal{H}_1$ and set $O_1 = \sigma_x \otimes \mathbf{1}_{\mathcal{H}_1}$ and $O_2 = \sigma_z \otimes \mathbf{1}_{\mathcal{H}_1}$. Let us now consider every other vertex $v_k$ ($3 \leq k \leq m$) adjacent to $v_1$ and $v_2$. Since the associated $O_k$ is required to anti-commute with both $O_1$ and $O_2$, we assign $O_k = \sigma_y \otimes O^{(1)}_k$. The representation problem for the graph $K_m$ is now reduced to that for the graph $K_{m-2}$ obtained from $K_m$ by deleting the vertices $v_1$ and $v_2$. We proceed as above until we arrive at the single edge $K_2$ at which point we end with the operators $O_{n-1} = \left( \otimes_{i=1}^{m/2-1} \sigma_y \right) \otimes \sigma_x$ and $O_{n} = \left( \otimes_{i=1}^{m/2-1} \sigma_y \right) \otimes \sigma_z$. Note also by Lemma~\ref{lem:unitary-rep-graph}, the above construction gives the unique unitary self-adjoint representation of the complete graph $K_m$ on $m$ vertices in dimension $2^{m/2}$ up to unitary equivalence. Moreover, by the proof of Lemma~\ref{lem:unitary-rep-graph} every unitary self-adjoint representation of $K_m$ in any dimension up to $2^{m/2}$ is obtainable up to unitary equivalence by the above construction if it exists. This gives $p(K_m) = 0$ so that no unitary self-adjoint representation of $K_m$ exists in a lower-dimensional Hilbert space. Therefore, the smallest dimension in which such a representation exists for the clique of size $2^n$ is $2^{2^{n-1}}$ proving the claim.   
\end{proof}
 
 An example of a qubit network saturating the minimal size imposed by Proposition~\ref{prop:min-qub-network} for $n=3$ is the square network from Fig.~\ref{fig:example-networks} with the associated trade-off relation Eq.(\ref{eq:square-network}).
 
 \textit{Iterative constructions of qubit networks obeying tight trade-off relations.-}

We will now provide a way to derive infinite families of trade-off relations using a generalization of the family of Bell operators introduced in~\cite{KPRLK11}. We start with providing a couple of examples and then formalize the construction and prove the trade-offs. The method of~\cite{KPRLK11} allowed to derive the $7$-qubit tree network from Fig.~\ref{fig:example-networks} for which
 \begin{eqnarray}
 \label{TH_QUBOUND7}
 \langle \mathcal{I} \rangle^2_{123}+ \langle \mathcal{I} \rangle^2_{124}+ \langle \mathcal{I} \rangle^2_{156}+ \langle \mathcal{I} \rangle^2_{157}\leq 4.
 \end{eqnarray}
 The $32$ operators occurring in~\eqref{TH_QUBOUND7} are divided into $4$ groups of $8$ operators on a Hilbert space $\mathcal{H}$ of dimension $2^7$ (the size of the qubit network is $7$). The next member of this family consists of $8$ groups of $16$ anti-commuting operators, and gives the trade-off:
 \begin{eqnarray}
 &&\langle \mathcal{I} \rangle^2_{1,2,3,4}+ \langle \mathcal{I} \rangle^2_{1,2,3,5}+ \langle \mathcal{I} \rangle^2_{1,2,6,7}+ \langle \mathcal{I} \rangle^2_{1,2,6,8} + \langle \mathcal{I} \rangle^2_{1,9,10,11}\nonumber \\
 &&+\langle \mathcal{I} \rangle^2_{1,9,11,12}+\langle \mathcal{I} \rangle^2_{1,9,13,14}+ \langle \mathcal{I} \rangle^2_{1,9,13,15}
 \leq 8.
 \end{eqnarray}

One may check that using Lemma~\ref{lem:corr-comp} the following monogamy relation holds:
 \begin{align}
 \label{eq:square-network}
 \langle \mathcal{I} \rangle^2_{123}+ \langle \mathcal{I} \rangle^2_{234}+ \langle \mathcal{I} \rangle^2_{341}+ \langle \mathcal{I} \rangle^2_{412}\leq 4,
 \end{align}
which follows from dividing the operators in $4$ groups of $8$ anti-commuting elements.
The construction given below uses existing sets of anti-commuting operators to joint them in a similar way as in~\eqref{TH_QUBOUND7} to get larger groupings of operators. For example, one may take two trivial groups, $\{\sigma_1\}$ and $\{\sigma_2\}$, and the grouping leading to \eqref{eq:square-network} to derive the following trade-off:
 \begin{eqnarray}
 &&\langle \mathcal{I} \rangle^2_{1234}+ \langle \mathcal{I} \rangle ^2_{1345}+ \langle \mathcal{I} \rangle^2_{1452}+ \langle \mathcal{I} \rangle^2_{1523} +
 \langle \mathcal{I} \rangle^2_{1678} \nonumber \\
 &&\qquad +\langle \mathcal{I} \rangle^2_{1789}+ \langle \mathcal{I} \rangle^2_{1896}+ \langle \mathcal{I} \rangle^2_{1967}
 \leq 8.
 \end{eqnarray}
The construction thus gives a qubit network of size $9$ with $8$ groups of $16$ operators each. An example of such a grouping is \small{XIIIIXXXI, XIIIIYYYI, XIIIIXYIX, XIIIIYXIY, XIIIIXIYY, XIIIIYIXX, XIIIIIXYX, XIIIIIYXY} together with \small{YXXXIIIII, YYYYIIIII, YXYIXIIII, YYXIXIIII, YXIYYIIII, YYIXXIIII, YIXYXIIII, YIYXYIIII}.
 
 Now, let us formalize the construction. For a set $S$ let $\{\Pi^{S}_{i}: S \rightarrow S\}_{i=1}^{\Ab{S}}$ be an arbitrary set of permutations over the set $S$ with a property that for any pair $s_1,s_2 \in S$ there exists exactly one value $i$ such that $\Pi^{S}_{i}(s_1) = s_2$. Let $[n] \equiv \{1, \cdots, n\}$.
 
 Let $\Omega^{(X)}$ be a set of self-adjoint operators on a Hilbert space $\mathcal{H}^{(X)}$, and $\{\omega^{(X)}_{i}\}_{i=1}^{N^{(X)}}$ be a disjoint partition (grouping) of $\Omega^{(X)}$, i.e. $\omega^{(X)}_{i} \cap \omega^{(X)}_{j} = \emptyset$ and $\Omega^{(X)} = \bigcup_{i=1}^{N^{(X)}} \omega^{(X)}_{i}$, where $\Ab{\omega^{(X)}_{i}} = n^{(X)}$ and operators within each set $\omega^{(X)}_{i}$ anti-commute.
 
 Let us consider a Hilbert space $\mathcal{H} = \mathcal{H}^{(0)} \otimes \bigotimes_{i=1}^{n^{(0)}} \mathcal{H}^{(i)}$ with defined sets of self-adjoint operators $\Omega^{(i)}$ grouped into groups of size $n^{(i)}$, $\{\omega^{(i)}_{j}\}_{j=1}^{N^{(j)}}$. Using these operators one can construct the following set of operators on $\mathcal{H}$:
 \begin{equation}
 \Omega \equiv \bigcup_{k \in [n^{(0)}]} \bigcup_{i \in [n^{(0)} \cdot N^{(0)}]} \bigcup_{j \in [n^{(k)} \cdot N^{(k)}]} \{T_{i,j,k} \},
 \end{equation}
 where $T_{i,j,k} \equiv \sigma^{(0)}_{i} \otimes \sigma^{(k)}_{j}$. We have $\Ab{\Omega} = n^{(0)} \cdot N^{(0)} \cdot N \cdot \left( \sum_{l \in [n^{(0)}]} n^{(l)} \right)$.
 
 Let us assume that for all $i \geq 1$ we have $N^{(i)} = N$ for some $N \geq 1$. We show that the operators from the set $\Omega$ can be grouped into $n^{(0)} \cdot N^{(0)} \cdot N$ disjoint sets of anti-commuting operators, each of the size $\sum_{l \in [n^{(0)}]} n^{(l)}$. Indeed, for $i \in [n^{(0)}]$, $j \in [N^{(0)}]$ and $k \in [N]$ let us define the set:
 \begin{equation}
 \omega_{i,j,k} \equiv \bigcup_{l \in [n^{(0)}]} \bigcup_{\sigma^{(l)} \in \omega^{(l)}_{k}} \left\{ \left( \Pi^{\omega^{(0)}_{j}}_{i} \left[ \sigma^{(0)}_{j,l} \right] \right) \otimes \sigma^{(l)} \right\} \subset \Omega,
 \end{equation}
 where $\sigma^{(m)}_{j,l}$ denotes $l$-th operator in the set $\omega^{(m)}_{j}$ (for some fixed ordering). We have $\Ab{\omega_{i,j,k}} = \sum_{l \in [n^{(0)}]} n^{(l)}$. It is easy to see that all operator within this set anti-commute since they either differ at the space $\mathcal{H}^{(0)}$ and commute at other spaces, or they have the same operator at the space $\mathcal{H}^{(0)}$ and anti-commute at $\mathcal{H}^{(l)}$, for some $l$.
 
 The sets $\omega_{i,j,k}$ are disjoint. Indeed, for different values of $j$ and $k$, the operators $\sigma^{(0)}_{j,l}$ and $\sigma^{(l)}$, respectively, are take from different groups. For different values of $i$ the operators on $\mathcal{H}^{(l)}$ are related with different operators on $\mathcal{H}^{(0)}$. Thus, comparing the sizes of sets we get
 \begin{equation}
 \Omega = \bigcup_{i \in [n^{(0)}]} \bigcup_{j \in [N^{(0)}]} \bigcup_{k \in [N]} \omega_{i,j,k}.
 \end{equation}

\textit{Bell monogamy relations beyond correlation complementarity.-}
We have seen how trade-off relations for correlation Bell inequalities in certain qubit networks can be derived by means of the complementarity principle of anti-commuting observables. In this section, we consider trade-off relations arising from beyond this complementarity principle and analyze an explicit ladder network where we show that a hyperspherical monogamy relation holds.

\begin{figure}[h !]
	\centering
		\begin{tikzpicture}
			
			\node (v1) at (0, 0) {};
			\node (v2) at (1, 1) {};
			\node (v3) at (1, -1) {};
			\node (v4) at (3, 1) {};
			\node (v5) at (3, -1) {};
			
			\begin{scope}[fill opacity=0.8]
				\filldraw[fill=gray!50]
					($(v1) + ({-0.3 / sqrt(2)}, {0.3 / sqrt(2)})$) 
					to[out=45,in=225] ($(v1) + ({-0.3 / sqrt(2)}, {0.3 / sqrt(2)})$)
					to[out=45,in=225] ($(v2) + ({-0.3 / sqrt(2)}, {0.3 / sqrt(2)})$)
					to[out=45,in=135] ($(v2) + ({0.3 / sqrt(2)}, {0.3 / sqrt(2)})$)
					to[out=315,in=135] ($(v5) + ({0.3 / sqrt(2)}, {0.3 / sqrt(2)})$)
					to[out=315,in=45] ($(v5) + ({0.3 / sqrt(2)}, {-0.3 / sqrt(2)})$)
					to[out=225,in=315] ($(v5) + ({-0.3 / sqrt(2)}, {-0.3 / sqrt(2)})$)
					to[out=135,in=315] ($(v2) + (0, {-0.3 * sqrt(2)})$)
					to[out=225,in=45] ($(v1) + ({0.3 / sqrt(2)}, {-0.3 / sqrt(2)})$)
					to[out=225,in=315] ($(v1) + ({-0.3 / sqrt(2)}, {-0.3 / sqrt(2)})$)
					to[out=135,in=225] ($(v1) + ({-0.3 / sqrt(2)}, {0.3 / sqrt(2)})$)
					;
				\filldraw[fill=gray!50]
					($(v1) + ({-0.3 / sqrt(2)}, {-0.3 / sqrt(2)})$) 
					to[out=315,in=135] ($(v1) + ({-0.3 / sqrt(2)}, {-0.3 / sqrt(2)})$)
					to[out=315,in=135] ($(v3) + ({-0.3 / sqrt(2)}, {-0.3 / sqrt(2)})$)
					to[out=315,in=225] ($(v3) + ({0.3 / sqrt(2)}, {-0.3 / sqrt(2)})$)
					to[out=45,in=225] ($(v4) + ({0.3 / sqrt(2)}, {-0.3 / sqrt(2)})$)
					to[out=45,in=315] ($(v4) + ({0.3 / sqrt(2)}, {0.3 / sqrt(2)})$)
					to[out=135,in=45] ($(v4) + ({-0.3 / sqrt(2)}, {0.3 / sqrt(2)})$)
					to[out=225,in=45] ($(v3) + (0, {0.3 * sqrt(2)})$)
					to[out=135,in=315] ($(v1) + ({0.3 / sqrt(2)}, {0.3 / sqrt(2)})$)
					to[out=135,in=45] ($(v1) + ({-0.3 / sqrt(2)}, {0.3 / sqrt(2)})$)
					to[out=225,in=135] ($(v1) + ({-0.3 / sqrt(2)}, {-0.3 / sqrt(2)})$)
					;
				\filldraw[fill=gray!30]
					($(v1) + ({-0.22 / sqrt(2)}, {0.22 / sqrt(2)})$) 
					to[out=45,in=225] ($(v2) + ({-0.22 / sqrt(2)}, {0.22 / sqrt(2)})$)
					to[out=45,in=180] ($(v2) + (0, 0.22)$)
					to[out=0,in=180] ($(v4) + (0, 0.22)$)
					to[out=0,in=90] ($(v4) + (0.22, 0)$)
					to[out=270,in=0] ($(v4) + (0, -0.22)$)
					to[out=180,in=0] ($(v2) + ({0.22 * (sqrt(2) - 1)}, -0.22)$)
					to[out=225,in=45] ($(v1) + ({0.22 / sqrt(2)}, {-0.22 / sqrt(2)})$)
					to[out=225,in=315] ($(v1) + ({-0.22 / sqrt(2)}, {-0.22 / sqrt(2)})$)
					to[out=135,in=225] ($(v1) + ({-0.22 / sqrt(2)}, {0.22 / sqrt(2)})$)
					;
				\filldraw[fill=gray!30]
					($(v1) + ({-0.22 / sqrt(2)}, {-0.22 / sqrt(2)})$) 
					to[out=315,in=135] ($(v3) + ({-0.22 / sqrt(2)}, {-0.22 / sqrt(2)})$)
					to[out=315,in=180] ($(v3) + (0, -0.22)$)
					to[out=0,in=180] ($(v5) + (0, -0.22)$)
					to[out=0,in=270] ($(v5) + (0.22, 0)$)
					to[out=90,in=0] ($(v5) + (0, 0.22)$)
					to[out=180,in=0] ($(v3) + ({0.22 * (sqrt(2) - 1)}, 0.22)$)
					to[out=135,in=315] ($(v1) + ({0.22 / sqrt(2)}, {0.22 / sqrt(2)})$)
					to[out=135,in=45] ($(v1) + ({-0.22 / sqrt(2)}, {0.22 / sqrt(2)})$)
					to[out=225,in=135] ($(v1) + ({-0.22 / sqrt(2)}, {-0.22 / sqrt(2)})$)
					;
			\end{scope}
			
			\fill (v1) circle (0.1) node [below right] {$v_1$};
			\fill (v2) circle (0.1) node [below right] {$v_2$};
			\fill (v3) circle (0.1) node [below right] {$v_3$};
			\fill (v4) circle (0.1) node [below right] {$v_4$};
			\fill (v5) circle (0.1) node [below right] {$v_5$};
			
			
			\path[dotted]
			(v4) edge node[above] {} +(2,0)
			(v4) edge node[above] {} +(2,-2)
			(v5) edge node[above] {} +(2,0)
			(v5) edge node[above] {} +(2,2)
			;

		\end{tikzpicture}
	\caption{The ladder network consisting of $2n-1$ qubits (shown for $n=3$ in the figure). The ladder depicts a single player Alice playing an arbitrary binary \textsc{XOR} game with two Bobs, two Charlies, etc. The sum of squares of the quantum values that the players can achieve in the $2^{n-1}$ games (depicted by the hyper-edges in the graph) obeys a tight trade-off relation that goes beyond the uncertainty relation for anti-commuting observables.}
	\label{fig:ladder}
\end{figure}
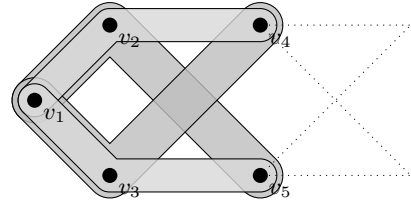

\begin{prop}
	\label{prop:ladder-network}
	Let $n \in \mathbb{Z}$ with $n \geq 2$. Consider the ladder network shown in Fig.~\ref{fig:ladder} of $2n -1$ qubits. For any $n$-partite full-correlation Bell inequality $\mathcal{I}$ there holds 
	\begin{eqnarray}
		\sum_{e \in \mathcal{E}} \langle \mathcal{I} \rangle^2_e \leq 2^{n-1}. 
	\end{eqnarray}
	This relation is tight.
\end{prop}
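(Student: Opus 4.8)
The plan is to prove the bound by peeling off the shared node $v_1$ using the qubit structure, reducing the claim to a product-structured operator inequality, and to prove tightness by an explicit stabiliser state. Write $v_1$ for Alice and group the remaining $2n-2$ qubits into $n-1$ levels, level $\ell$ being the pair $\{b_\ell^0,b_\ell^1\}$; a game is a choice $\vec t=(t_1,\dots,t_{n-1})\in\{0,1\}^{n-1}$ using $v_1$ together with $b_\ell^{t_\ell}$ from each level, so $\mathcal E\cong\{0,1\}^{n-1}$ and $|\mathcal E|=2^{n-1}$. By Lemma~\ref{lem:Zukowski-Brukner} each term is $\langle\mathcal I\rangle^2_{\vec t}=\sum_{k_0,\dots,k_{n-1}\in\{1,2\}}\big\langle\sigma^{(1)}_{k_0}\bigotimes_{\ell}\sigma^{(b_\ell^{t_\ell})}_{k_\ell}\big\rangle^2$. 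Since each $\langle\mathcal I\rangle^2_{\vec t}$ is convex in $\rho$, I first reduce to pure $\rho=\proj{\psi}$.

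Next I would peel off Alice. As $v_1$ is a single qubit, $\ket\psi$ has Schmidt rank at most $2$ across the cut $v_1\,|\,\text{rest}$, so every correlator factors through a fixed two-dimensional subspace $V$ of the rest. Collecting the non-Alice Pauli strings into the product set $\mathcal R=\bigotimes_\ell\{\sigma^{(b_\ell^0)}_1,\sigma^{(b_\ell^0)}_2,\sigma^{(b_\ell^1)}_1,\sigma^{(b_\ell^1)}_2\}$ and summing the two squared correlators over Alice's orthogonal measurement directions $\vec o^{(1)}_1\perp\vec o^{(1)}_2$, the contribution of each $R\in\mathcal R$ is the squared projection of $\vec v(R)=(\langle\sigma^{(1)}_x\otimes R\rangle,\langle\sigma^{(1)}_y\otimes R\rangle,\langle\sigma^{(1)}_z\otimes R\rangle)$ onto Alice's plane, which I bound by $|\vec v(R)|^2$. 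This yields the relaxation $\sum_{e}\langle\mathcal I\rangle^2_e\le\Tr\big[\big(\sum_{\mu\in\{x,y,z\}}\rho_\mu\otimes\rho_\mu\big)\Lambda\big]$, where $\rho_\mu=\Tr_1[(\sigma^{(1)}_\mu\otimes I)\rho]$ and $\Lambda=\sum_{R\in\mathcal R}R\otimes R=\bigotimes_{\ell=1}^{n-1}L_\ell$ with $L_\ell=(\sigma_x\otimes\sigma_x+\sigma_z\otimes\sigma_z)^{(b_\ell^0)}+(\sigma_x\otimes\sigma_x+\sigma_z\otimes\sigma_z)^{(b_\ell^1)}$. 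The purity of $\rho$ then gives $\sum_{\mu=0}^3\rho_\mu^2=2\rho_{\mathrm{rest}}$, hence $\sum_{\mu\in\{x,y,z\}}\rho_\mu\otimes\rho_\mu=2(\rho_{\mathrm{rest}}^{1/2}\otimes\rho_{\mathrm{rest}}^{1/2})\,\mathrm{SWAP}_V\,(\rho_{\mathrm{rest}}^{1/2}\otimes\rho_{\mathrm{rest}}^{1/2})-\rho_{\mathrm{rest}}\otimes\rho_{\mathrm{rest}}$, an operator on $V\otimes V$ controlled by $\sum_\mu\rho_\mu^2=\rho_{\mathrm{rest}}(2-\rho_{\mathrm{rest}})\preceq I$.

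\textbf{The main obstacle} is the last step: showing $\Tr[(\sum_{\mu}\rho_\mu\otimes\rho_\mu)\Lambda]\le 2^{n-1}$. Each factor $L_\ell$ has operator norm $4$, but its top eigenvector is an inter-copy Bell state incompatible with the product form $\rho_\mu\otimes\rho_\mu$, so the positivity/product constraint cuts the admissible per-level gain from $4$ down to $2$, while Alice's normalisation $\sum_\mu\rho_\mu^2\preceq I$ supplies the base constant; I would extract this by induction on the number of levels, compressing $\Lambda$ against $L_{n-1}$ and re-expressing the partial trace over level $n-1$ as a fresh instance of the same quantity for the $(n-1)$-level ladder, giving $\prod_\ell 2=2^{n-1}$. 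This is exactly where the argument must go beyond correlation complementarity: a dimension count shows that for $n\ge 4$ the $2^{2n-1}$ operators entering the relation cannot be covered, even fractionally, by $2^{n-1}$ mutually anti-commuting cliques inside $2n-1$ qubits (the required cliques of size $2^n$ exceed the maximal clique $4n-2$ admissible in dimension $2^{2n-1}$), so Lemma~\ref{lem:corr-comp} alone cannot yield the bound and the factor $2$ must come from positivity rather than anti-commutation.

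Finally, for tightness I would place the $n$ qubits of a single game $\vec t$ in the GHZ-type stabiliser state that saturates the single-game Tsirelson bound $\langle\mathcal I\rangle^2_{\vec t}=2^{n-1}$, namely the common eigenstate of the $2^{n-1}$ mutually commuting full-weight $\{\sigma_x,\sigma_z\}$-correlators carrying an even number of $\sigma_z$ factors (each equal to $\pm1$, while their $2^{n-1}$ anti-commuting odd-weight partners are forced to vanish), and leave every other level-qubit in an arbitrary product state. Then the chosen game contributes $2^{n-1}$ and every other game vanishes because it involves a factorised qubit, so $\sum_e\langle\mathcal I\rangle^2_e=2^{n-1}$, establishing tightness for all $n\ge2$.
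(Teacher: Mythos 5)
Your setup is sound: the reduction to pure states, the identification of the $2^{n-1}$ games with $\vec t\in\{0,1\}^{n-1}$, the passage to the two-copy product operator $\Lambda=\bigotimes_\ell L_\ell$, and your diagnosis that correlation complementarity cannot yield the bound (consistent with the paper's remark that a brute-force search finds no adequate anti-commuting grouping) are all correct; your stabiliser tightness construction is also valid, and arguably more explicit than the paper's. The problem is that what you label ``the main obstacle'' is the entire mathematical content of the proposition, and you never prove it: you only announce that you ``would'' extract the bound by induction, ``compressing $\Lambda$ against $L_{n-1}$ and re-expressing the partial trace over level $n-1$ as a fresh instance of the same quantity for the $(n-1)$-level ladder.'' That re-expression is exactly the non-trivial step, and as stated it does not go through, because $\sum_\mu\rho_\mu\otimes\rho_\mu$ does not factorize across levels: $\ket{\psi}$ entangles level $n-1$ with the rest, so the object obtained after contracting level $n-1$ against $L_{n-1}$ is not of the form $\sum_\mu\tilde\rho_\mu\otimes\tilde\rho_\mu$ for any state of a shorter ladder, and you supply no mechanism forcing the admissible per-level gain down from $\Vert L_\ell\Vert=4$ to $2$. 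The paper overcomes precisely this obstruction with two ingredients your sketch lacks: (i) Lemma~\ref{lem:maxVectors}, the Cauchy--Schwarz symmetrisation showing that the maximum of $\sum_k\bra{\phi_1}M_k\ket{\phi_2}\bra{\phi_3}M_k\ket{\phi_4}$ over \emph{four independent} vectors equals the swapped two-copy form $\bra{\phi_1,\phi_2}\sum_k M_k\otimes M_k\ket{\phi_2,\phi_1}$ --- it is this freedom to decouple the four vectors that legitimises treating the level-traced object as a fresh instance of the smaller problem; and (ii) the explicit per-step computation, expanding $\ket{\psi^{(2n+1)}_i}=\sum_{j,k}\beta_{i,j,k}\ket{\psi^{(2n-1)}_{i,j,k}}\ket{j,k}$ and evaluating the contribution of the new level as $\max\left[2-2p_1^2-p_2^2-p_3^2\right]=2$. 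Without analogues of (i) and (ii), your factor $\prod_\ell 2=2^{n-1}$ is an assertion, not a proof.

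A secondary concern: by bounding Alice's in-plane projection by her full Bloch vector (summing over $\mu\in\{x,y,z\}$), you commit yourself to proving a statement strictly stronger than the paper's, whose induction keeps Alice's operators in $S_1=\{\sigma_x,\sigma_z\}$. For $n=2$ your relaxed quantity is indeed still bounded by $2$ (it follows from the Toner--Verstraete relation in Horodecki form, since the squared entries of any two columns of the $3\times3$ correlation matrix sum to at most the two largest eigenvalues of $T^{T}T$), but for general $n$ this stronger bound is exactly what the missing induction would have to deliver, and nothing in your sketch controls the extra slack introduced by this relaxation.
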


\begin{proof}
Let $\sigma_{\vec{l}} \equiv \bigotimes \sigma_{l_i}$, $l_i \in \{0,1,2,3\}$. Let $S_{1} \equiv \{(1), (3)\}$ and $S_{2n+1} \equiv S_{2n-1} \otimes \{(0,1),(0,3),(1,0),(3,0)\} = \{\vec{l}\}$, so that e.g.
\begin{eqnarray}
	S_{3} = \{(1,0,1), (1,0,3), (1,1,0), (1,3,0), \nonumber \\
	(3,0,1), (3,0,3), (3,1,0), (3,3,0)\}.
\end{eqnarray}
Let us also define the operators $\mu_{1} := \sigma_0 \otimes \sigma_1$, $\mu_2 := \sigma_0 \otimes \sigma_3$, $\mu_3 := \sigma_1 \otimes \sigma_0$ and $\mu_4 := \sigma_3 \otimes \sigma_0$. Thus, we have:
\begin{equation}
 \label{eq:inductive-S}
 \sum_{\vec{l} \in S_{2k+1}} \sigma_{\vec{l}} = \sum_{\vec{l} \in S_{2k-1}} \sum_{r=1}^{4} \sigma_{\vec{l}} \otimes \mu_k.
\end{equation}

Let $\ket{\psi^{(n)}}$ denote a state in $2 n$ dimensional Hilbert space. We start the proof by noting the following simple fact:
\begin{equation}
 \begin{aligned}
	 & \bigforall_{ \ket{\psi^{(1)}_1}, \ket{\psi^{(1)}_2} \in \mathbb{C}^{2} } \bra{\psi^{(1)}_1,\psi^{(1)}_2} \sum_{k \in S_{1}} \sigma_k \otimes \sigma_k \ket{\psi^{(1)}_2, \psi^{(1)}_1} = \\
	 & \Ab{\bra{\psi^{(1)}_1} \sigma_1 \ket{\psi^{(1)}_2}}^2 + \Ab{\bra{\psi^{(1)}_1} \sigma_3 \ket{\psi^{(1)}_2}}^2 \leq 1.
 \end{aligned}
\end{equation}
In particular for $\ket{\psi^{(1)}_1}, \ket{\psi^{(1)}_2} \in \mathbb{R}^{2}$ the equality holds. This can be seen by a direct computation.

Now we proceed by induction to prove the monogamy relation for larger values of $n$. We assume that the following holds for some $n$:
\begin{widetext}
\begin{equation}
 \label{eq:inductive-eqn}
 \max_{ \ket{\psi^{(2n-1)}_1}, \ket{\psi^{(2n-1)}_2} \in \mathbb{C}^{2n-1} } \bra{\psi^{(2n-1)}_1, \psi^{(2n-1)}_2} \left( \sum_{\vec{l} \in S_{2n-1}} \sigma_{\vec{l}} \otimes \sigma_{\vec{l}} \right) \ket{\psi^{(2n-1)}_2, \psi^{(2n-1)}_1} = 2^{k-1}.
\end{equation}
\end{widetext}

Let us now derive the following Lemma. 
\begin{lemma}
	\label{lem:maxVectors}
	For any set $S$ indexing an arbitrary set of Hermitian operators $\{ M_k \}_{k \in S}$ acting on a Hilbert space $\mathcal{H}$ we have:
	\begin{eqnarray}
	\label{eq:maxVectors}
	&& \max_{\ket{\phi_1}, \ket{\phi_2}, \ket{\phi_3}, \ket{\phi_4} \in \mathcal{H}} \sum_{k \in S} \bra{\phi_1} M_k \ket{\phi_2} \bra{\phi_3} M_k \ket{\phi_4} = \nonumber \\
	&& \max_{\ket{\phi_1}, \ket{\phi_2} \in \mathcal{H}} \bra{\phi_1, \phi_2} \sum_{k \in S} M_k \otimes M_k \ket{\phi_2, \phi_1}.
	\end{eqnarray}
	\begin{proof}
		For given $\ket{\phi_1}, \ket{\phi_2}, \ket{\phi_3}, \ket{\phi_4} \in \mathcal{H}$ let $\vec{u}, \vec{v} \in \mathbb{C}^{|S|}$ be defined by $u_k \equiv \bra{\phi_2} M_k \ket{\phi_1}$ and $v_k \equiv \bra{\phi_3} M_k \ket{\phi_4}$. The LHS of the equation~\eqref{eq:maxVectors} can be rewritten as $\max_{\vec{u},\vec{v}} \vec{u}^{\dagger} \cdot \vec{v}$. From Cauchy-Schwarz inequality we see that the maximum is attained when $\vec{u} = \vec{v}$, and thus we have $\ket{\phi_1} = \ket{\phi_4}$ and $\ket{\phi_2} = \ket{\phi_3}$. After relabeling and noting that $\bra{\phi_1} M_k \ket{\phi_2} \bra{\phi_2} M_k \ket{\phi_1} = \bra{\phi_1, \phi_2} M_k \otimes M_k \ket{\phi_2, \phi_1}$, this yields the equality~\eqref{eq:maxVectors}.
	\end{proof}
\end{lemma}

Note that an arbitrary $(2n+1)$-qubit state $\ket{\psi^{(2n+1)}_{i}} \in \mathbb{C}^{2(2n+1)}$ can be written as 
\begin{equation}
	\label{eq:inductive-state}
	\ket{\psi^{(2n+1)}_{i}} = \sum_{j, k = 0,1} \beta_{i,j,k} \ket{\psi^{(2n-1)}_{i,j,k}} \ket{j,k},  
\end{equation}
with $\beta_{i,j,k} \in \mathbb{R}$, $\sum_{j, k = 0, 1} \beta_{i,j,k}^2 = 1$ for all $i$, $\ket{\psi^{(2n-1)}_{i,j,k}} \in \mathbb{C}^{2(2n-1)}$ and $\{\ket{j,k}\}$ being the computational basis of $\mathbb{C}^2 \otimes \mathbb{C}^2$.

Using the Lemma~\ref{lem:maxVectors} and the inductive equations~\eqref{eq:inductive-S} and~\eqref{eq:inductive-eqn} we have:
\begin{widetext}
	\begin{equation}
		\begin{aligned}
			& \max_{ \ket{\psi^{(2n+1)}_1}, \ket{\psi^{(2n+1)}_2}, \ket{\psi^{(2n+1)}_3}, \ket{\psi^{(2n+1)}_4} \in \mathbb{C}^{2(2n+1)} } \bra{\psi^{(2n+1)}_1, \psi^{(2n+1)}_2} \left( \sum_{\vec{l} \in S_{2n+1}} \sigma_{\vec{l}} \otimes \sigma_{\vec{l}} \right) \ket{\psi^{(2n+1)}_3, \psi^{(2n+1)}_4} \\
			& = \max_{ \ket{\psi^{(2n+1)}_1}, \ket{\psi^{(2n+1)}_2} \in \mathbb{C}^{2(2n+1)} } \bra{\psi^{(2n+1)}_1, \psi^{(2n+1)}_2} \left( \sum_{\vec{l} \in S_{2n+1}} \sigma_{\vec{l}} \otimes \sigma_{\vec{l}} \right) \ket{\psi^{(2n+1)}_2, \psi^{(2n+1)}_1} \\
			& = 2^{n-1} \max_{\{\beta_{i,j,k}\}} \sum_{\substack{{j_1, j_2, j_3, j_4} \\ {k_1, k_2, k_3, k_4}} = 0,1} \beta_{1,j_1,k_1} \beta_{2,j_2,k_2}  \beta_{2,j_3,k_3} \beta_{1,j_4,k_4} \bra{j_1, k_1, j_2, k_2} \sum_{r=1}^4 \mu_r \otimes \mu_r \ket{j_3, k_3, j_4, k_4} \\
			& = 2^{n-1} \max_{\{\beta_{i,j,k}\}} \left[ 2 - 2 p_1^2 - p_2^2 - p_3^2 \right] = 2^n,
		\end{aligned}
	\end{equation}
\end{widetext}
where
	\begin{eqnarray}
	p_1 = \beta_{1,0,0}\beta_{2,1,1} -\beta_{1,0,1}\beta_{2,1,0} -\beta_{1,1,0}\beta_{2,0,1} +\beta_{1,1,1}\beta_{2,0,0}, \nonumber \\
	p_2 = \beta_{1,0,0}\beta_{2,1,0} +\beta_{1,0,1}\beta_{2,1,1} -\beta_{1,1,0}\beta_{2,0,0} -\beta_{1,1,1}\beta_{2,0,1}, \nonumber \\ 
	p_3 = \beta_{1,0,0}\beta_{2,0,1} -\beta_{1,0,1}\beta_{2,0,0} +\beta_{1,1,0}\beta_{2,1,1} -\beta_{1,1,1}\beta_{2,1,0}. \nonumber \\
	\end{eqnarray}
Assuming $\beta_{i,j,k} = \beta_{j,k}$ we get $p_1 = 2(\beta_{0,0} \beta_{1,1} - \beta_{0,1} \beta_{1,0})$, $p_2 = p_3 = 0$.

It is easy to see that the above maximum can be attained with $\ket{\psi^{(2n+1)}_1} = \ket{\psi^{(2n+1)}_2} \in \mathbb{R}^{2n+1}$. e.g. if we take $\ket{\psi^{(1)}_{i,j,k}} \in \mathbb{R}^2$ and $\beta_{0,0} \beta_{1,1} = \beta_{0,1} \beta_{1,0}$ for all induction steps~\eqref{eq:inductive-state}. This shows the claimed monogamy relation for the ladder network in Fig.~\ref{fig:ladder}. 
\end{proof}

As an example of the above trade-off relation, we have
\begin{eqnarray}
&&\langle \text{Mermin} \rangle^2_{A,B^{(1)},C^{(1)}} + \langle \text{Mermin} \rangle^2_{A,B^{(1)},C^{(2)}} + \langle \text{Mermin} \rangle^2_{A,B^{(2)},C^{(1)}} \nonumber \\
&& \qquad \qquad \qquad \qquad + \langle \text{Mermin} \rangle^2_{A,B^{(2)},C^{(2)}} \leq 16, \nonumber \\
\end{eqnarray}
for the usual Mermin inequality given as
\begin{eqnarray}
&& \langle \text{Mermin} \rangle_{AB^{(1)}C^{(1)}} = \langle -A_1 B^{(1)}_1 C^{(1)}_1  +  A_1 B^{(1)}_2 C^{(1)}_2 +  \nonumber \\
&&\qquad \qquad A_2 B^{(1)}_1 C^{(1)}_2 +  A_2 B^{(1)}_2 C^{(1)}_1 \rangle \leq \beta_c,
\end{eqnarray}
where $\beta_c = 2$ is the classical bound and $\beta_q = 4$ is the quantum value. 

A brute force search over the operators appearing in the trade-off relation in Prop.~\ref{prop:ladder-network} (for small values of $n$) reveals that there do not exist anti-commuting sets of sufficient size to imply this tight trade-off relation. 
Interestingly, the above hyperspherical relation has a hyperplane analogue in general no-signaling theories, providing an exact generalization of the CHSH monogamy found by Toner and Verstraete in~\cite{TV06}. Namely, we have 
\begin{prop} 
	\label{prop:ladder-ns}
	Let $n \in \mathbb{Z}$ with $n \geq 2$. Consider the ladder network shown in Fig.~\ref{fig:ladder} of $2n-1$ qubits. For any $n$-partite full-correlation Bell inequality $\mathcal{I}$ there holds in all no-signaling theories the following trade-off relation
	\begin{eqnarray}
	\sum_{e \in E} \langle \mathcal{I} \rangle \leq 2^{n-1}. 
	\end{eqnarray}
\end{prop}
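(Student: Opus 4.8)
The plan is to regard $\sum_{e\in E}\langle\mathcal I\rangle_e$ as a linear functional on the no-signaling polytope of the $2n-1$ parties and to show that its maximum equals the fully local value $2^{n-1}$. The achievability (tightness) direction is the easy part: a local deterministic assignment that saturates the normalized local bound $1$ of $\mathcal I$ on every one of the $2^{n-1}$ games simultaneously — such an assignment exists because the two copies in each column are distinct parties that can be set independently — already gives $\sum_e\langle\mathcal I\rangle_e = 2^{n-1}$. It therefore remains to establish the upper bound for an arbitrary no-signaling box, and the base case $n=2$ is exactly the CHSH no-signaling monogamy of Toner, Eq.~\eqref{eq:CHSH-mono}, after normalizing the local bound to unity.

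For the inductive step I would peel off the last column $\{P^{(n-1)}_1,P^{(n-1)}_2\}$. Fixing a choice of parties in the remaining $n-2$ columns isolates a group $G$ of $n-1$ parties (Alice together with one party per remaining column) and a pair of games that differ only in whether the last slot is filled by $P^{(n-1)}_1$ or $P^{(n-1)}_2$. Since $\mathcal I$ is a full-correlation expression it is multilinear in the last party's two $\pm1$ observables, so it splits as $\langle\mathcal I\rangle = \langle\mathcal A^G_1\otimes L_1 + \mathcal A^G_2\otimes L_2\rangle$, where $\mathcal A^G_1,\mathcal A^G_2$ are the two $(n-1)$-party coefficient operators of $\mathcal I$ and $L_1,L_2$ are the last party's observables. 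A short computation shows that for every deterministic assignment these operators obey a Pythagorean identity $(\mathcal A^G_1)^2+(\mathcal A^G_2)^2$ equal to a constant, so $(\mathcal A^G_1,\mathcal A^G_2)$ behaves like the pair of dichotomic observables of an effective ``Alice''. Summing the two games of the pair then produces a CHSH-type two-party expression in which the effective party $G$ faces the two distinct copies $P^{(n-1)}_1,P^{(n-1)}_2$; a Toner-type no-signaling argument collapses these two copies at the cost of a factor $2$, turning the sum over the $n$-ladder into twice the analogous sum over the $(n-1)$-ladder obtained by deleting the last column, to which the induction hypothesis applies and yields $2\cdot 2^{n-2}=2^{n-1}$.

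The main obstacle is making this effective-``Alice'' CHSH step fully rigorous: $\mathcal A^G_1$ and $\mathcal A^G_2$ are not themselves $\pm1$-valued, so Toner's monogamy cannot be quoted verbatim and one must prove a no-signaling CHSH trade-off for bounded observables constrained only by the Pythagorean identity, while simultaneously keeping track of the no-signaling conditions that link $G$ to the two copies of the last column. I expect the cleanest route is to bypass the induction and instead construct an explicit no-signaling (LP-dual) certificate, writing $2^{n-1}-\sum_{e\in E}\langle\mathcal I\rangle_e$ as a nonnegative combination of the positivity-of-probability and no-signaling constraints; such a certificate proves the relation uniformly for every full-correlation $\mathcal I$ and makes manifest that the functional is not a genuinely non-signaling-nontrivial facet. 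This is precisely the kind of linear no-signaling trade-off that extends the two-party analysis of~\cite{PB09}, and Proposition~\ref{prop:ladder-ns} then follows as its ladder instance.
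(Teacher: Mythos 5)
Your proposal does not yet constitute a proof, and both of its routes have genuine gaps. The inductive route founders on the claimed ``Pythagorean identity'': for a general normalized full-correlation expression this identity is simply false. Take $n=2$ and $\mathcal I=\tfrac13\langle A_1B_1+A_1B_2+A_2B_1\rangle$ (local bound $1$); peeling off the last party gives $\mathcal A_1=\tfrac13(A_1+A_2)$, $\mathcal A_2=\tfrac13 A_1$, and $(\mathcal A_1)^2+(\mathcal A_2)^2$ equals $5/9$ on the deterministic assignment $a_1=a_2=1$ but $1/9$ on $a_1=-a_2=1$. So the effective pair $(\mathcal A^G_1,\mathcal A^G_2)$ does not behave like a pair of dichotomic observables, and --- as you yourself concede --- Toner's CHSH monogamy cannot be invoked for it; the missing lemma is not merely unproved, its hypothesis fails. (Relatedly, your base case quotes Toner's monogamy for the CHSH facet, whereas the statement concerns arbitrary normalized full-correlation expressions.) The fallback route --- an explicit LP-dual certificate --- is only announced, never constructed, so it cannot carry the proof either; and leaning on the multipartite extension of~\cite{PB09} is particularly unsafe here, because the paper's stated purpose in this section is to correct an error in precisely that multipartite generalization.

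For comparison, the paper proves this proposition as a one-line corollary of Prop.~\ref{prop:genBell-ns-mono}, whose argument is direct, fully general, and needs neither induction nor any CHSH building block. The sum of the $2^{n-1}$ games is reorganized (``relabeled'') into a sum of $2^{n-1}$ shifted expressions $\tilde{\mathcal I}^{k_2,\dots,k_n}$, in each of which every copy-party measures a single fixed input while only Alice retains a choice of input. For such a restricted measurement scenario, no-signaling alone furnishes an explicit joint probability distribution over all outputs (a product over Alice's inputs of the observed joint distributions, divided by the appropriate power of the copies' marginal), i.e.\ a local realistic model; hence each shifted expression is bounded by the classical bound $1$, and summing the $2^{n-1}$ of them gives $2^{n-1}$. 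If you wish to salvage your plan, note that this relabeling argument is exactly the kind of certificate you were hoping to exhibit: it realizes the bound as a consequence of no-signaling and positivity alone, uniformly in $\mathcal I$.
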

The above proposition is a corollary of the more general result in the $(n,m,d)$ setting shown in Prop.~\ref{prop:genBell-ns-mono}.

\textit{Tightness of the trade-off relations.-}
Having derived general trade-off relations for correlation inequalities in qubit networks, we proceed to investigate the tightness of these relations. When the bound on the sum of several distinct Bell expressions is saturated by a quantum strategy, we say that the relation is tight. When a quantum strategy exists to achieve every possible tuple of the Bell values saturating the trade-off relation, we say that the relation is spherically tight. 
\begin{prop} 
A monogamy relation involving $k$-partite Bell parameters $\mathcal{I}_e$ for odd $k$ is spherically tight if and only if it is of the form
\begin{equation}
\sum_{e = 1}^m \langle \mathcal{I} \rangle_e^2 \le 2^{k-1} \quad \textrm{ for } \quad 2 \le m \le 2^{k-1}.
\label{STBM}
\end{equation}
\end{prop}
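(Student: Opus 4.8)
The plan is to pass from Bell values to a real correlation vector and reduce spherical tightness to a statement about images of balls under linear maps. Recall from Lemma~\ref{lem:Zukowski-Brukner} and the Horodecki-type identity~(\ref{eq:Horodec-crit}) that for a single $k$-party full-correlation expression the optimal quantum value is controlled by the sum of squares of the $2^k$ in-plane correlation-tensor elements, whose maximum over $k$-qubit states equals $2^{k-1}$ and is attained by a suitably rotated GHZ state. For odd $k$ the $2^{k-1}$ nonvanishing correlators at this optimum are real, so each value can be written as $\langle \mathcal{I}\rangle_e = \langle \hat a_e, \vec r\rangle$ for a fixed direction $\hat a_e$ and a real state-vector $\vec r$ ranging over a ball of radius $\sqrt{2^{k-1}}$ in a space of dimension $D = 2^{k-1}$. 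I would first establish this representation, together with the single-expression Tsirelson bound $\langle \mathcal{I}\rangle_e^2 \le 2^{k-1}$, as the backbone of both directions.

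For necessity ($\Rightarrow$), suppose the relation $\sum_{e=1}^m \langle\mathcal{I}\rangle_e^2 \le B$ is spherically tight; then it is in particular valid and tight, so $B$ equals the quantum maximum $B^\star := \max \sum_e\langle\mathcal{I}\rangle_e^2$. The axis point $(\sqrt{B},0,\dots,0)$ lies on the saturating sphere, hence must be achievable, which forces $\sqrt{B} \le \sqrt{2^{k-1}}$ by the single-expression bound, i.e. $B \le 2^{k-1}$; conversely placing a GHZ state on the parties of one expression already gives $\sum_e\langle\mathcal{I}\rangle_e^2 \ge 2^{k-1}$, so $B = B^\star \ge 2^{k-1}$, whence $B = 2^{k-1}$. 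The condition $m \ge 2$ is inherent in being a genuine trade-off. The upper bound $m \le 2^{k-1}$ is the decisive point: the set of achievable tuples $(\langle\mathcal{I}\rangle_e)_e$ is the image of the ball $\{\vec r : \|\vec r\| \le \sqrt{2^{k-1}}\} \subset \mathbb{R}^{D}$ under the linear map $\vec r \mapsto (\langle \hat a_e, \vec r\rangle)_e$, i.e. a centrally symmetric solid ellipsoid; since the sphere $\sum_e x_e^2 = 2^{k-1}$ affinely spans $\mathbb{R}^m$, every one of its points can be achievable only if this image has full dimension $m$, forcing $m \le D = 2^{k-1}$.

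For sufficiency ($\Leftarrow$), given any $m$ with $2 \le m \le 2^{k-1}$ I would exhibit a qubit network realizing the relation and verify spherical tightness. Because $m \le D = 2^{k-1}$ the expressions can be chosen so that the vectors $\{\hat a_e\}$ are orthonormal in $\mathbb{R}^{2^{k-1}}$; the ladder network of Proposition~\ref{prop:ladder-network} supplies the extremal case $m = 2^{k-1}$, and smaller $m$ are obtained by retaining a subset of its games. Validity is then immediate from $\sum_e \langle\mathcal{I}\rangle_e^2 = \sum_e \langle \hat a_e, \vec r\rangle^2 \le \|\vec r\|^2 \le 2^{k-1}$, while spherical tightness follows because, as $\vec r$ sweeps the real sphere of radius $\sqrt{2^{k-1}}$ (each such $\vec r$ realized by a real GHZ-superposition state, with Lemma~\ref{lem:maxVectors} guaranteeing that the optimizers take this product form), the orthogonal projection onto the $m$ directions $\hat a_e$ covers the entire sphere $\sum_e x_e^2 = 2^{k-1}$.

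The main obstacle I anticipate is pinning down the dimension $D = 2^{k-1}$ together with the \emph{reality} of the correlation vector, which is exactly where the hypothesis that $k$ is odd enters: for odd $k$ the GHZ optimum and the underlying anti-commuting (Clifford) structure admit a genuine real representation, so the achievable set is a round real sphere and orthonormal frames of size up to $2^{k-1}$ exist, whereas for even $k$ the corresponding representation is complex, deforming the geometry (and, as noted later, producing the flat regions) so that the clean hyperspherical characterization fails. A secondary technical point is confirming that the quantum-state-to-$\vec r$ map is genuinely surjective onto the full real sphere and that any non-orthogonal choice of $\{\hat a_e\}$ destroys spherical tightness, which is what makes the two conditions $B = 2^{k-1}$ and $2 \le m \le 2^{k-1}$ not only necessary but exhaustive.
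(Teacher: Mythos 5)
Your determination of the bound ($\beta_c = 2^{k-1}$, via achievability of the axis points, the single-expression bound from anti-commuting pairs, and GHZ attainability) coincides with the paper's argument and is fine. The rest of the proposal, however, rests on a false geometric premise: that the Bell values can be written as linear coordinates $\langle \mathcal{I}\rangle_e = \langle \hat a_e, \vec r\rangle$ of a real vector $\vec r$ that ranges over a ball of radius $\sqrt{2^{k-1}}$ in $\mathbb{R}^{2^{k-1}}$. In this framework (Lemma~\ref{lem:Zukowski-Brukner} and Eq.~(\ref{eq:Horodec-crit})) a Bell parameter is the Euclidean \emph{norm} of a block of correlation-tensor elements, $\langle\mathcal{I}\rangle_e^2 = \sum T^2$, not a linear functional of any state vector; and the achievable correlation vectors are confined to the cube $[-1,1]^D$, since every $T_{k_1\dots k_k}$ is the expectation of a unit-norm observable. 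The sphere of radius $\sqrt{D}$ in $\mathbb{R}^D$ meets that cube only at the $2^D$ corners $(\pm1,\dots,\pm1)$, so the statement that ``$\vec r$ sweeps the real sphere of radius $\sqrt{2^{k-1}}$'' is impossible --- indeed a single linear coordinate equal to $\sqrt{2^{k-1}}>1$ can never be realized. This collapses both your dimension-count for $m \le 2^{k-1}$ (the achievable set is nothing like an ellipsoid image of a ball) and the entire sufficiency direction, including the orthonormal-frame construction and the appeal to Lemma~\ref{lem:maxVectors}, which is a Cauchy--Schwarz maximization statement and guarantees nothing about surjectivity onto a correlator sphere. Your explanation of why odd $k$ matters (real vs.\ complex Clifford representations) also does not correspond to the actual mechanism.

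What the paper does instead for sufficiency is an explicit construction that works precisely because each Bell value is a norm spread over many small correlators rather than one large coordinate: take $\ket{\psi} = \frac{1}{\sqrt{2}}\sum_{e=1}^m \alpha_e \ket{e} + \frac{1}{\sqrt{2}}\ket{1\dots 1}$ with $\sum_e \alpha_e^2 = 1$, where $\ket{e}$ has zeros exactly at the $k$ parties of game $e$. An in-plane Pauli string supported on game $e$'s parties can map $\ket{e'}$ to another basis state of $\ket{\psi}$ only if exactly half of the zeros of $\ket{e'}$ are flipped, i.e.\ only if $|e \cap e'| = k/2$; for odd $k$ this is impossible, so all cross terms vanish and one gets exactly $\langle\mathcal{I}\rangle_e^2 = 2^{k-1}\alpha_e^2$, whence normalization of $\ket{\psi}$ sweeps the whole sphere of Bell-value tuples. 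This is where the oddness of $k$ genuinely enters. Finally, note that the constraint $m \le 2^{k-1}$ is not obtainable by your dimension argument, but it does admit an elementary proof (which the paper itself leaves implicit): for the product state $\ket{0}^{\otimes N}$ with every party's measurement plane containing $\sigma_z$, each game has $\langle\mathcal{I}\rangle_e^2 \ge T_{z\dots z}^2 = 1$, so validity of the bound $2^{k-1}$ already forces $m \le 2^{k-1}$.
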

\begin{proof}
	Consider a spherically tight trade-off relation of the form:
	\begin{equation}
	\sum_{e = 1}^m \langle \mathcal{I} \rangle_e^2 \leq \beta_c,
	\end{equation}
	for some constant $\beta_c$. Since the possible algebraic values of $\langle \mathcal{I} \rangle^2_e$ that are allowed range from $0$ to $2^{k-1}$, $\beta_c$ is at most $2^{k-1}$. This is simply due to the fact that a larger value of $\beta_c$ would imply that setting the value of all but one of the Bell expressions in the trade-off to $0$, the remaining expression could not achieve the bound set by $\beta_c$ as required for a spherically tight relation. That the maximum of any single $\langle \mathcal{I} \rangle^2_e$ is at most $2^{k-1}$ follows from the fact that the Bell expression for $\langle \mathcal{I} \rangle^2_e$ can be rewritten as the sum of squares of $2^k$ correlation tensor elements, which can be grouped into $2^{k-1}$ pairs of anti-commuting elements. Moreover, all values from 0 to $2^{k-1}$ are realized by some quantum states (notably the maximum value of $2^{k-1}$ arises in the well-known algebraic violation of the Mermin inequalities~\cite{MERMIN1990}), thus $\beta_c \geq 2^{k-1}$. Therefore $\beta_c =2^{k-1}$ and all spherically tight Bell monogamies are of the form (\ref{STBM}).
	
	We prove that that the monogamy relation~\ref{STBM} is spherically tight by considering the following state
	\begin{equation}
	\ket{\psi} = \frac{1}{\sqrt{2}} \sum_{e=1}^m \alpha_e \ket{\underbrace{0 \dots 0}_{v} 1 \dots 1} + \frac{1}{\sqrt{2}} \ket{1 \dots 1},
	\end{equation}
	where the zeros in the summed kets are at the positions of the $k$ qubits involved in the $e$-th Bell parameter,
	and we choose positive reals $\alpha_e$ such that $\alpha_1^2 + \dots + \alpha_m^2 = 1$.
	It is convenient to define $\ket{e} \equiv \ket{0 \dots 0 1 \dots 1}$ having zeros for the parties of the $e$-th Bell parameter.
	For this state, the bound on every single Bell inequality is given by the sum of squared correlation tensor elements $T_{j_1 \dots j_k 0 \dots 0} = \bra{\psi} \sigma_{j_1} \otimes \dots \otimes \sigma_{j_k} \ket{\psi}$ in the $xy$ plane of the correlation tensor.
	Since Pauli operators $\sigma_x$ and $\sigma_y$ flip the eigenstates of the $\sigma_z$ operator,
	the state $\sigma_{j_1} \otimes \dots \otimes \sigma_{j_k} \ket{e}$ is equal up to a global phase to at most one other state $\ket{e'}$.
	The equality can only happen if the number of $1$'s after the flip is the same as before the flip, i.e., exactly half of the $0$'s in $\ket{e}$ are flipped.
	Therefore, if $k$ is odd, this equality cannot happen and we conclude that every single Bell expression 
	obeys $\langle \mathcal{I} \rangle_e^2 = 2^{k-1} \alpha_e^2$.
	The spherical tightness then arises by the normalization of the state $| \psi \rangle$.
\end{proof}

\textit{Quantum monogamies for binary XOR games played by different numbers of parties.-}

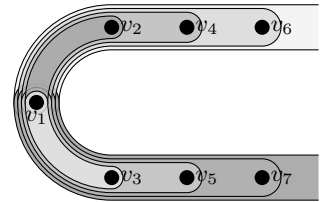
\begin{figure}[b !]
	\centering
	\begin{tikzpicture}
	
	\node (v1) at (0, 0) {};
	\node (v2) at (1, 1) {};
	\node (v3) at (1, -1) {};
	\node (v4) at (2, 1) {};
	\node (v5) at (2, -1) {};
	\node (v6) at (3, 1) {};
	\node (v7) at (3, -1) {};
	
	\begin{scope}[fill opacity=0.8]
	\filldraw[fill=gray!10] 
	($(v6) + (0.75, -0.30)$)
	to[out=180,in=0] ($(v2) + (0, -0.30)$)
	to[out=180,in=90] ($(v1) + (0.30, 0)$)
	to[out=270,in=0] ($(v1) + (0, -0.30)$)
	to[out=180,in=270] ($(v1) + (-0.30, 0)$)
	to[out=90,in=180] ($(v2) + (0, 0.30)$)
	to[out=0,in=180] ($(v6) + (0.75, 0.30)$)
	;
	\filldraw[fill=gray!80] 
	($(v7) + (0.75, 0.30)$)
	to[out=180,in=0] ($(v3) + (0, 0.30)$)
	to[out=180,in=270] ($(v1) + (0.30, 0)$)
	to[out=90,in=0] ($(v1) + (0, 0.30)$)
	to[out=180,in=90] ($(v1) + (-0.30, 0)$)
	to[out=270,in=180] ($(v3) + (0, -0.30)$)
	to[out=0,in=180] ($(v7) + (0.75, -0.30)$)
	;
	\filldraw[fill=gray!30] 
	($(v1) + (-0.25, 0)$)
	to[out=90,in=180] ($(v2) + (0, 0.25)$)
	to[out=0,in=180] ($(v6) + (0, 0.25)$)
	to[out=0,in=90] ($(v6) + (0.25, 0)$)
	to[out=270,in=0] ($(v6) + (0, -0.25)$)
	to[out=180,in=0] ($(v2) + (0, -0.25)$)
	to[out=180,in=90] ($(v1) + (0.25, 0)$)
	to[out=270,in=0] ($(v1) + (0, -0.25)$)
	to[out=180,in=270] ($(v1) + (-0.25, 0)$)
	;
	\filldraw[fill=gray!60] 
	($(v1) + (-0.25, 0)$)
	to[out=270,in=180] ($(v3) + (0, -0.25)$)
	to[out=0,in=180] ($(v7) + (0, -0.25)$)
	to[out=0,in=270] ($(v7) + (0.25, 0)$)
	to[out=90,in=0] ($(v7) + (0, 0.25)$)
	to[out=180,in=0] ($(v3) + (0, 0.25)$)
	to[out=180,in=270] ($(v1) + (0.25, 0)$)
	to[out=90,in=0] ($(v1) + (0, 0.25)$)
	to[out=180,in=90] ($(v1) + (-0.25, 0)$)
	;
	\filldraw[fill=gray!50] 
	($(v1) + (-0.2, 0)$)
	to[out=90,in=180] ($(v2) + (0, 0.2)$)
	to[out=0,in=180] ($(v4) + (0, 0.2)$)
	to[out=0,in=90] ($(v4) + (0.2, 0)$)
	to[out=270,in=0] ($(v4) + (0, -0.2)$)
	to[out=180,in=0] ($(v2) + (0, -0.2)$)
	to[out=180,in=90] ($(v1) + (0.2, 0)$)
	to[out=270,in=0] ($(v1) + (0, -0.2)$)
	to[out=180,in=270] ($(v1) + (-0.2, 0)$)
	;
	\filldraw[fill=gray!40] 
	($(v1) + (-0.2, 0)$)
	to[out=270,in=180] ($(v3) + (0, -0.2)$)
	to[out=0,in=180] ($(v5) + (0, -0.2)$)
	to[out=0,in=270] ($(v5) + (0.2, 0)$)
	to[out=90,in=0] ($(v5) + (0, 0.2)$)
	to[out=180,in=0] ($(v3) + (0, 0.2)$)
	to[out=180,in=270] ($(v1) + (0.2, 0)$)
	to[out=90,in=0] ($(v1) + (0, 0.2)$)
	to[out=180,in=90] ($(v1) + (-0.2, 0)$)
	;
	\filldraw[fill=gray!70] 
	($(v1) + (-0.15, 0)$) 
	to[out=90,in=180] ($(v2) + (0, 0.15)$)
	to[out=0,in=90] ($(v2) + (0.15, 0)$)
	to[out=270,in=0] ($(v2) + (0, -0.15)$)
	to[out=180,in=90] ($(v1) + (0.15, 0)$)
	to[out=270,in=0] ($(v1) + (0, -0.15)$)
	to[out=180,in=270] ($(v1) + (-0.15, 0)$)
	;
	\filldraw[fill=gray!20] 
	($(v1) + (-0.15, 0)$) 
	to[out=270,in=180] ($(v3) + (0, -0.15)$)
	to[out=0,in=270] ($(v3) + (0.15, 0)$)
	to[out=90,in=0] ($(v3) + (0, 0.15)$)
	to[out=180,in=270] ($(v1) + (0.15, 0)$)
	to[out=90,in=0] ($(v1) + (0, 0.15)$)
	to[out=180,in=90] ($(v1) + (-0.15, 0)$)
	;
	\end{scope}
	
	\fill (v1) circle (0.1) node [below] {$v_1$};
	\fill (v2) circle (0.1) node [right] {$v_2$};
	\fill (v3) circle (0.1) node [right] {$v_3$};
	\fill (v4) circle (0.1) node [right] {$v_4$};
	\fill (v5) circle (0.1) node [right] {$v_5$};
	\fill (v6) circle (0.1) node [right] {$v_6$};
	\fill (v7) circle (0.1) node [right] {$v_7$};
	
	\end{tikzpicture}
	\caption{A network of $2n-1$ qubits in which a tight trade-off relation holds between the quantum values of binary \textsc{XOR} games played by subsets of different sizes as shown in Prop.~\ref{prop:diff-no-players}.}
	\label{fig:propSum}
\end{figure}

\begin{prop}
	\label{prop:diff-no-players}
	Consider $2n-1$ parties arranged in the configuration represented by the hypergraph in Fig.~\ref{fig:propSum}. Let $\mathcal{I}^{(k)}_{l_1, \dots, l_k}$ denote a $k$-party binary XOR game played by the players $l_1, \dots, l_k \in [2n-1]$. The following trade-off relation holds for the value of any such game within quantum theory
	\begin{eqnarray}
	& \left \langle \mathcal{I}^{(2)} \right \rangle^2_{1,2} + \left \langle \mathcal{I}^{(2)} \right \rangle^2_{1,3} + \left \langle \mathcal{I}^{(3)} \right \rangle^2_{1,2,4} + \left \langle \mathcal{I}^{(3)} \right \rangle^2_{1,3,5} + \nonumber \\
	& \left \langle \mathcal{I}^{(4)} \right \rangle^2_{1,2,4,6} + \left \langle \mathcal{I}^{(4)} \right \rangle^2_{1,3,5,7} + \dots + \\
	& \left \langle \mathcal{I}^{(n)} \right \rangle^2_{1,2,4,\dots,2n-2} + \left \langle \mathcal{I}^{(n)} \right \rangle^2_{1,3,5,\dots,2n-1} \leq 2^{n-1}. \nonumber
	\end{eqnarray}
\end{prop}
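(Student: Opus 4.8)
The plan is to bound the entire sum by the sum of squares of the correlation-tensor operators of all the games at once, and then to cover those operators by exactly $2^{n-1}$ sets of mutually anti-commuting Paulis, so that the hyperspherical relation follows from correlation complementarity (Lemma~\ref{lem:corr-comp}). First I would invoke the Jordan lemma and Lemma~\ref{lem:Zukowski-Brukner} to reduce to qubits with in-plane measurements, so that each $k$-party value obeys $\langle \mathcal{I}^{(k)} \rangle^2_e \le \sum_{\mathcal{O} \in e} \Tr[\mathcal{O} \rho]^2$, the sum running over the $2^k$ operators $\mathcal{O} = \sigma_{a_1} \otimes \dots \otimes \sigma_{a_k}$ with $a_i \in \{x,z\}$ on the participating qubits and $\sigma_0$ elsewhere. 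Since party $1$ uses the same two observables in every game it joins, and each non-root qubit sits in a single branch, all games share a consistent $x$-$z$ plane on every qubit, so summing over the hyperedges gives $\sum_{e} \langle \mathcal{I} \rangle^2_e \le \sum_{\mathcal{O} \in \Omega} \Tr[\mathcal{O} \rho]^2$, where $\Omega$ is the disjoint collection of all $2^{n+2}-8$ operators. It then suffices to partition $\Omega$ into $2^{n-1}$ anti-commuting cliques and apply Lemma~\ref{lem:corr-comp} to each.

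The structural observation that makes this tractable is a decoupling of the two branches. Every operator carries $\sigma_x$ or $\sigma_z$ on qubit $1$, and the upper operators (supported on $\{1,2,4,\dots\}$) and lower operators (supported on $\{1,3,5,\dots\}$) overlap only on qubit $1$; hence an upper and a lower operator anti-commute \emph{iff} their qubit-$1$ Paulis differ. Consequently $\Omega$ splits into four families indexed by branch and qubit-$1$ Pauli, and a clique mixing the two branches must take a single qubit-$1$ Pauli on each branch. Stripping the fixed qubit-$1$ factor, each family becomes the same ``chain'' set: all full $\{x,z\}$-strings supported on a prefix $\{v_1,\dots,v_j\}$ of the $n-1$ branch qubits, $j = 1,\dots,n-1$.

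The technical heart, which I expect to be the main obstacle, is a chain lemma: this prefix set admits an anti-commuting clique cover of size $2^{n-2}$. I would prove it by induction on the chain length $m$, maintaining the invariant that each clique contains exactly one anti-commuting pair of top (length-$m$) strings; the base cases $m=1$ (with $\{\sigma_x,\sigma_z\}$) and $m=2$ are immediate. For the step I would isolate the operators attached to each clique of the length-$(m-1)$ cover; one checks that a distinguished short-support operator anti-commutes with all the others, while the rest, together with the length-$m$ extensions of the clique's top pair, regroup into two anti-commuting cliques, doubling the count from $2^{m-2}$ to $2^{m-1}$. Verifying that every pairwise anti-commutation survives across games of different sizes, and that no extra clique is ever forced, is the delicate bookkeeping.

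With the chain lemma in hand the assembly is immediate: let $\{\mathcal{C}_i\}$ and $\{\mathcal{D}_i\}$ ($i=1,\dots,2^{n-2}$) be the chain covers on the upper and lower qubits. Pairing $\{\sigma_x^{(1)} \otimes T : T \in \mathcal{C}_i\}$ with $\{\sigma_z^{(1)} \otimes T' : T' \in \mathcal{D}_i\}$, and symmetrically with the roles of $\sigma_x,\sigma_z$ swapped, produces $2 \cdot 2^{n-2} = 2^{n-1}$ cliques: anti-commutation within a branch is inherited from the chain cover, and across branches it holds because the qubit-$1$ Paulis differ. These cliques cover all of $\Omega$, so Lemma~\ref{lem:corr-comp} gives $\sum_{\mathcal{O} \in \Omega} \Tr[\mathcal{O}\rho]^2 \le 2^{n-1}$ and hence the claim. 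Finally, to confirm the relation is tight I would exhibit a saturating state of the GHZ/graph-state type used in the spherical-tightness argument, supported so that each hyperedge contributes its full allotment $2^{k-1}\alpha_e^2$, with weights $\alpha_e$ normalised to saturate the bound.
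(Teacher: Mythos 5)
Your proposal is correct and follows essentially the same route as the paper: reduce to in-plane correlation-tensor elements, partition the $8(2^{n-1}-1)$ operators into $2^{n-1}$ mutually anti-commuting sets, and apply Lemma~\ref{lem:corr-comp} to each set. The only real difference is bookkeeping: where the paper exhibits one explicit ``staircase'' clique mixing the two branches through opposite Paulis on qubit $1$ (Fig.~\ref{fig:ACset-diff-parties}) and generates the remaining cliques by $\textsc{X}\leftrightarrow\textsc{Z}$ interchanges with deletion of duplicate operators, you build the same cover via a branch-decoupling observation plus an inductive chain lemma, which amounts to a more explicit verification of the paper's construction.
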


\begin{proof}
	Explicitly, we are required to prove the following bound on the correlation tensor elements for any $2n-1$ qubit state $|\psi \rangle \in (\mathbb{C}^2)^{\otimes 2n-1}$ in the network configuration represented by the hypergraph in Fig.~\ref{fig:propSum}
	\begin{widetext}
		\begin{eqnarray}
		&&\sum_{k_1, k_2 =1,2} T_{k_1, k_2, 0, \dots, 0}^2 + \sum_{k_1, k_3 = 1,2} T_{k_1, 0, k_3, 0, \dots, 0}^2 +  \sum_{k_1, k_2, k_4 = 1,2} T_{k_1, k_2, 0, k_4, 0,\dots, 0}^2 + \sum_{k_1, k_3, k_5=1,2} T_{k_1, 0, k_3, 0, k_5, 0, \dots, 0}^2 \nonumber \\ &&+  \dots + \sum_{k_1, k_2, k_4, \dots, k_{2n-2} = 1,2} T_{k_1, k_2, 0, k_4, 0, k_6, \dots, 0, k_{2n-2}}^2 + \sum_{k_1, k_3, k_5, \dots, k_{2n-1} = 1,2} T_{k_1, 0, k_3, 0, k_5, 0, \dots, 0, k_{2n-1}}^2  \leq 2^{n-1}.
		\end{eqnarray}
	\end{widetext}
	Here, we encounter a list of $2 \times (2^2 + 2^3 + \dots + 2^{n}) = 8(2^{n-1} - 1)$ operators which we would like to group into $2^{n-1}$ sets of mutually anti-commuting operators. We do this by splitting the $8(2^{n-1} -1)$ operators into $2^{n-l+1}$ sets of $2l$ anti-commuting operators each for $3 \leq l \leq n$ and an additional $2$ sets of $2n$ anti-commuting operators (exploiting the identities  $\sum_{l=3}^{n} 2^{n-l+1} 2l + 4n \equiv 8(2^{n-1} -1)$ and $\sum_{l=3}^{n} 2^{n-l+1}  = 2^{n-1} - 2$). 
	
	We explicitly identify one set of anti-commuting operators in the figure~\ref{fig:ACset-diff-parties}, to be read as 
	\begin{center}
		\begin{tabular}{ | l | c | }
			\hline
			\textsc{XXIXI\dots XIXIXI} & \textsc{ZIXIX\dots IXIXIX} \\ \hline
			\textsc{XXIXI\dots XIXIZI} & \textsc{ZIXIX\dots IXIXIZ}  \\ \hline
			\textsc{XXIXI\dots XIZIII} & \textsc{ZIXIX\dots IXIZII}  \\ \hline
			\; \textsc{XXIXI\dots ZIIIII} &  \textsc{ZIXIX\dots IZIIII} \\ \hline
			\; \; \; \; \; \dots &  \dots \\ \hline
			\; \textsc{XXIZI\dots IIIIII} & \textsc{ZIXIZ \dots IIIIII} \\ \hline
			\;  \textsc{XZIII\dots IIIIII} & \textsc{ZIZII \dots IIIIII} \\ 
			\hline
		\end{tabular}
	\end{center}

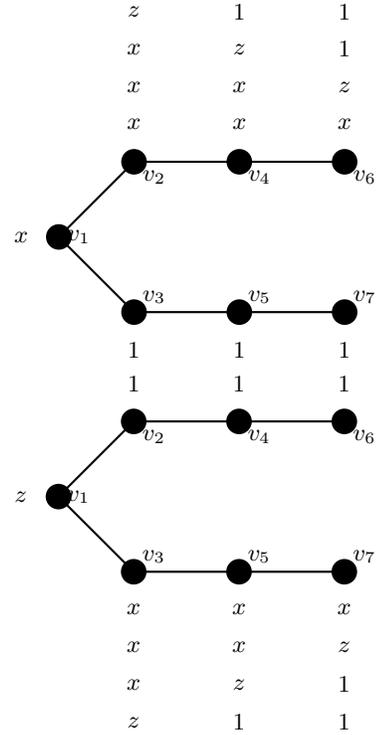
\begin{figure}
	\begin{center}
		\begin{tikzpicture}
		
		\node (v1) at (0, 0) {};
		\node (v2) at (1, 1) {};
		\node (v3) at (1, -1) {};
		\node (v4) at (2.4, 1) {};
		\node (v5) at (2.4, -1) {};
		\node (v6) at (3.8, 1) {};
		\node (v7) at (3.8, -1) {};
		
		\fill (v1) circle (0.17) node [right] {$v_1$};
		\fill (v2) circle (0.17) node [below right] {$v_2$};
		\fill (v3) circle (0.17) node [above right] {$v_3$};
		\fill (v4) circle (0.17) node [below right] {$v_4$};
		\fill (v5) circle (0.17) node [above right] {$v_5$};
		\fill (v6) circle (0.17) node [below right] {$v_6$};
		\fill (v7) circle (0.17) node [above right] {$v_7$};
		
		\node at ($ (v1) + (-0.5, 0) $) {$x$};
		
		\node at ($ (v2) + (0, 0.5) $) {$x$};
		\node at ($ (v2) + (0, 1) $) {$x$};
		\node at ($ (v2) + (0, 1.5) $) {$x$};
		\node at ($ (v2) + (0, 2) $) {$z$};
		
		\node at ($ (v4) + (0, 0.5) $) {$x$};
		\node at ($ (v4) + (0, 1) $) {$x$};
		\node at ($ (v4) + (0, 1.5) $) {$z$};
		\node at ($ (v4) + (0, 2) $) {$1$};
		
		\node at ($ (v6) + (0, 0.5) $) {$x$};
		\node at ($ (v6) + (0, 1) $) {$z$};
		\node at ($ (v6) + (0, 1.5) $) {$1$};
		\node at ($ (v6) + (0, 2) $) {$1$};
		
		\node at ($ (v3) + (0, -0.5) $) {$1$};
		\node at ($ (v5) + (0, -0.5) $) {$1$};
		\node at ($ (v7) + (0, -0.5) $) {$1$};
		
		\path[draw,thick]
		(v1) edge node {} (v2)
		(v1) edge node {} (v3)
		(v2) edge node {} (v4)
		(v4) edge node {} (v6)
		(v3) edge node {} (v5)
		(v5) edge node {} (v7)
		;
		
		\end{tikzpicture}
		\qquad
		\begin{tikzpicture}
		
		\node (v1) at (0, 0) {};
		\node (v2) at (1, 1) {};
		\node (v3) at (1, -1) {};
		\node (v4) at (2.4, 1) {};
		\node (v5) at (2.4, -1) {};
		\node (v6) at (3.8, 1) {};
		\node (v7) at (3.8, -1) {};
		
		\fill (v1) circle (0.17) node [right] {$v_1$};
		\fill (v2) circle (0.17) node [below right] {$v_2$};
		\fill (v3) circle (0.17) node [above right] {$v_3$};
		\fill (v4) circle (0.17) node [below right] {$v_4$};
		\fill (v5) circle (0.17) node [above right] {$v_5$};
		\fill (v6) circle (0.17) node [below right] {$v_6$};
		\fill (v7) circle (0.17) node [above right] {$v_7$};
		
		\node at ($ (v1) + (-0.5, 0) $) {$z$};
		
		\node at ($ (v3) + (0, -0.5) $) {$x$};
		\node at ($ (v3) + (0, -1) $) {$x$};
		\node at ($ (v3) + (0, -1.5) $) {$x$};
		\node at ($ (v3) + (0, -2) $) {$z$};
		
		\node at ($ (v5) + (0, -0.5) $) {$x$};
		\node at ($ (v5) + (0, -1) $) {$x$};
		\node at ($ (v5) + (0, -1.5) $) {$z$};
		\node at ($ (v5) + (0, -2) $) {$1$};
		
		\node at ($ (v7) + (0, -0.5) $) {$x$};
		\node at ($ (v7) + (0, -1) $) {$z$};
		\node at ($ (v7) + (0, -1.5) $) {$1$};
		\node at ($ (v7) + (0, -2) $) {$1$};
		
		\node at ($ (v2) + (0, 0.5) $) {$1$};
		\node at ($ (v4) + (0, 0.5) $) {$1$};
		\node at ($ (v6) + (0, 0.5) $) {$1$};
		
		\path[draw,thick]
		(v1) edge node {} (v2)
		(v1) edge node {} (v3)
		(v2) edge node {} (v4)
		(v4) edge node {} (v6)
		(v3) edge node {} (v5)
		(v5) edge node {} (v7)
		;
		
		\end{tikzpicture}
	\end{center}
	\caption{An explicit set of anti-commuting operators used in the proof of Prop.~\ref{prop:diff-no-players}. Following the numbering of the seven qubit example in the figure, the operators are to be read as \small{XXIXIXI, XXIXIZI, XXIZIII, XZIIIII, ZIXIXIX, ZIXIXIZ, ZIXIZII, ZIZIIII}.}
		\label{fig:ACset-diff-parties}
\end{figure}

	Each of the other sets is obtained from the first by interchanging $\textsc{X}_i  \leftrightarrow \textsc{Z}_i$ in all nodes of each leaf but the last node (where the leaves are the two sets of parties $\{2,4,6, \dots,\}$ and $\{3,5,7, \dots\}$), and deleting redundant operators (i.e., operators that occur more than once in the construction). Note that interchanging $\textsc{X}_i \leftrightarrow \textsc{Z}_i$ in all operators in a set preserves their commutation relations and that performing this interchange in the $n-1$ nodes barring the last node of each leaf gives $2^{n-1}$ sets exhausting the entire list of operators. By Lemma~\ref{lem:corr-comp} then, the sums of squares of the expectation values of the operators in each of the $2^{n-1}$ sets is bounded by unity, so that the claimed bound follows.

\end{proof}
Note that while the trade-off relations in the previous sections were spherically tight, the above trade-off is not spherically tight. Thus, while the bound of $2^{n-1}$ can be reached, for instance by a $n$-qubit GHZ state achieving the maximum value of $2^{(n-1)/2}$ for $\mathcal{I}^{(n)}_{1,2,4,\dots,2n-2}$ it is not the case that every tuple of values achieving the bound is realizable by a quantum strategy.

A special case of the above proposition is when $n = 3$. 
We then have a monogamy relation between bipartite and tripartite Bell parameters.
	\begin{eqnarray}
	\label{eq:CHSH-Merm}
	&&\langle \text{CHSH} \rangle_{AB^{(1)}}^2 +  \langle \text{CHSH} \rangle_{AB^{(2)}}^2 + \langle \text{Mermin} \rangle_{AB^{(1)} C^{(1)}}^2   \nonumber \\
	&& \qquad \qquad + \langle \text{Mermin} \rangle_{AB^{(2)} C^{(2)}}^2 \leq 16.
	\end{eqnarray}
To see Eq.(\ref{eq:CHSH-Merm}), observe that one can group the operators into four groups of six mutually anti-commuting ones as
\begin{eqnarray}
	\{\text{XXIXI, XXIZI, ZIXIX, ZIXIZ, XZIII, ZIZII}\}, \nonumber \\
	\{\text{XZIXI, XZIZI, ZIZIX, ZIZIZ, XXIII, ZIXII}\}, \nonumber \\
	\{\text{ZXIXI, ZXIZI, XIXIX, XIXIZ, ZZIII,XIZII}\}, \\
	\{\text{ZZIXI, ZZIZI, XIZIX, XIZIZ, ZXIII, XIXIX}\}. \nonumber
\end{eqnarray}
Each of the CHSH and Mermin expressions has a local bound of $2$ so that one can find a local box that achieves the bound of $16$ above. The square of a single Mermin expression has a quantum mechanical value of up to $16$ so that the bound can be saturated therein. However, note that unlike the trade-off relations in the previous sections, the bound above is not ``spherically tight", i.e., the corresponding linearized relation in spherical coordinates cannot be saturated for every value of the spherical angles. 

Interestingly, the no-signaling trade-off between the four Bell expressions is given as 
\begin{eqnarray}
&&\langle \text{CHSH} \rangle_{AB^{(1)}} + \langle \text{CHSH} \rangle_{AB^{(2)}} + \langle \text{Mermin} \rangle_{AB^{(1)} C^{(1)}} \nonumber \\
&&\qquad \qquad + \langle \text{Mermin} \rangle_{AB^{(2)} C^{(2)}} \leq 10.
\end{eqnarray}
This bound follows by direct computation by a linear program. Secret sharing~\cite{HBB98} is a task where a dealer (Alice) sends a secret S
to $n$ (possibly, dishonest) players so that the cooperation of a minimum of $k \leq n$ players is required to decode the secret. Protocols that accomplish this are called $(k,n)$-threshold schemes. 
Quantum Secret Sharing (QSS) schemes have been proposed to securely accomplish this task, by exploiting multipartite entanglement
to secure and split the classical secret among the players. 
In a QSS scheme, Alice's goal is to establish
a secret key with a joint degree of freedom of the players.
The players can only retrieve Alice's key and decode the
classical secret when they collaborate and communicate to each other their local measurements to form the joint variable. The trade-offs between Bell inequality violations with different numbers of parties established in Prop.~\ref{prop:diff-no-players} lends itself naturally as a security check for this task, a study which we pursue elsewhere.

\textit{Applications of the trade-off relations.-}


\textit{Flat non-local regions in the set of quantum correlations.-}
The above monogamy relations can evidently be used to derive Tsirelson-type bounds on the quantum values of the corresponding correlation inequalities. A linear version of the hyperspherical relations (obtained by parametrizing the relations in hyperspherical coordinates) also gives us novel inequalities whose maximal violation unlike the usual correlation inequalities is not achieved by a GHZ state. A natural question therefore arises whether such Bell expressions also serve to self-test the resulting optimal states. In this section we show that, in fact these new inequalities serve to provide first examples of Bell inequalities with the property that multiple quantum boxes attain their optimal violation, showing the presence of such flat regions in the set of quantum correlations. Since there are multiple distinct optimal quantum boxes (the probability tables in the different boxes do not match), these inequalities cannot serve as self-tests. In what follows, we restrict attention to the $(3,2,2)$ scenario consisting of three parties measuring two binary observables each, and denote the set of quantum correlations in this scenario as $\mathbf{Q}(3,2,2)$.

\begin{prop}
There exist non-trivial Bell inequalities in the $(3,2,2)$ scenario which are maximally violated by multiple distinct quantum boxes, i.e., there exist flat non-local regions in the boundary of the quantum set $\mathbf{Q}(3,2,2)$.
\end{prop}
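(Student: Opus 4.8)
The plan is to produce, for a single Bell inequality in the $(3,2,2)$ scenario, at least two genuinely distinct quantum boxes (distinct probability tables) that both achieve the inequality's maximal quantum value, thereby exhibiting a flat face on the boundary of $\mathbf{Q}(3,2,2)$. The natural candidate inequality is the linearized version of one of the hyperspherical monogamy relations derived above, since the introductory remarks already flag that such linearized relations are \emph{not} maximally violated by a GHZ state and hence are promising candidates for non-uniqueness. Concretely, I would start from the spherically tight relation $\langle \mathcal{I} \rangle_{e_1}^2 + \langle \mathcal{I} \rangle_{e_2}^2 \le 2^{k-1}$ (with $k=3$, so the bound is $4$), reparametrize in polar coordinates $\langle \mathcal{I} \rangle_{e_1} = 2\cos\vartheta$, $\langle \mathcal{I} \rangle_{e_2} = 2\sin\vartheta$, and linearize to obtain a Bell expression $\cos\vartheta \, \langle \mathcal{I} \rangle_{e_1} + \sin\vartheta\, \langle \mathcal{I} \rangle_{e_2}$ whose Tsirelson bound is $2$, saturated precisely on the arc of the circle. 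Because spherical tightness (established in the preceding proposition) guarantees that \emph{every} point on that arc is realized by some quantum state, a continuum of optimal boxes exists at the level of correlators.

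The main step is then to promote this continuum of optimal \emph{correlator} vectors into a continuum — or at least two distinct representatives — of optimal full \emph{probability tables}, and to verify that the tables genuinely differ rather than coinciding after relabelling. First I would write down the explicit optimal states realizing two distinct angles $\vartheta$ on the arc, using the spherically-tight construction: these are built from the GHZ-like states $\ket{\psi}$ appearing in the tightness proof, with the coefficients $\alpha_e$ tuned so that $\langle \mathcal{I}\rangle_{e_1}^2 = 2^{k-1}\alpha_{e_1}^2$ and likewise for $e_2$. Varying $(\alpha_{e_1},\alpha_{e_2})$ over the unit circle sweeps the optimal arc. I would then compute the induced single- and two-body marginals and confirm that they change with $\vartheta$, which certifies that the probability tables are distinct and hence that the flat region has positive extent, not merely a single optimal point hidden behind gauge freedom.

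The hard part will be the final certification that the flat region lies genuinely on the \emph{boundary} of $\mathbf{Q}(3,2,2)$ and that the multiple boxes are inequivalent as boxes, not merely as states. For the boundary claim I would exhibit a supporting hyperplane: the linearized Bell functional itself is such a hyperplane, so establishing that its value equals the Tsirelson bound on an extended set of boxes suffices, provided I verify the bound is indeed a true quantum maximum (this follows from Lemma~\ref{lem:Zukowski-Brukner} and the Horodecki-type criterion, which cap the sum of squared correlators). For the inequivalence claim, the obstacle is ruling out that the different $\vartheta$ merely correspond to local unitary rotations of one another; here I would compare local marginal distributions, since differing marginals cannot be related by any relabelling of inputs and outputs. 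Once a pair of boxes with demonstrably different marginals but identical optimal Bell value is in hand, the existence of a flat non-local region — and the consequent impossibility of self-testing — follows immediately.
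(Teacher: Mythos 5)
There is a genuine gap at the heart of your proposal. You claim that the linearized functional $\cos\vartheta \, \langle \mathcal{I} \rangle_{e_1} + \sin\vartheta\, \langle \mathcal{I} \rangle_{e_2}$ is ``saturated precisely on the arc of the circle,'' and you then generate your ``two distinct boxes'' by realizing \emph{two distinct angles} $\vartheta_1\neq\vartheta_2$ on that arc. But by Cauchy--Schwarz, for a \emph{fixed} $\vartheta$ the maximum of this functional over the disc $\langle \mathcal{I} \rangle_{e_1}^2+\langle \mathcal{I} \rangle_{e_2}^2\le 2^{k-1}$ is attained at exactly \emph{one} point of the Bell-value plane, namely $\bigl(\langle \mathcal{I} \rangle_{e_1},\langle \mathcal{I} \rangle_{e_2}\bigr)=2^{(k-1)/2}(\cos\vartheta,\sin\vartheta)$; every other point of the arc gives a strictly smaller value, since a box at angle $\vartheta_2$ yields only $2^{(k-1)/2}\cos(\vartheta_1-\vartheta_2)$ for the functional linearized at $\vartheta_1$. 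So your two boxes do not both maximize the same Bell inequality --- they are optimal for two \emph{different} inequalities --- and comparing their marginals certifies nothing about flatness. Indeed, the projection of the quantum set onto the two Bell values has a strictly curved (circular) boundary, so flatness can never be found by moving along the arc; it can only live in the fibers above a single extreme point of the arc. The missing idea, which is exactly what the paper supplies, is a \emph{second, inequivalent} quantum realization of the \emph{same} point on the arc: for the inequality $\cos\theta\,\langle \text{CHSH}\rangle_{AB}+\sin\theta\,\langle \text{CHSH}\rangle_{AC}\le 2\sqrt{2}$ the paper exhibits two explicit state/measurement pairs whose probability tables differ manifestly (e.g., one has $\langle A_1 C_1\rangle=0$, the other $\langle A_1 C_1\rangle=\sin\theta/\sqrt{2}$) yet which both achieve $\langle \text{CHSH}\rangle_{AB}=2\sqrt{2}\cos\theta$ and $\langle \text{CHSH}\rangle_{AC}=2\sqrt{2}\sin\theta$; these two boxes and all their mixtures then lie on the same supporting hyperplane, which is the flat region. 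Your proposal contains no mechanism for producing such a second realization --- the spherically tight construction you invoke gives one box per angle.

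A secondary but real problem is your choice of $k=3$: a trade-off $\langle \mathcal{I} \rangle_{e_1}^2+\langle \mathcal{I} \rangle_{e_2}^2\le 4$ between two \emph{tripartite} correlation expressions requires two distinct hyperedges of size $3$, hence at least four parties, so the resulting linearized inequality does not live in the $(3,2,2)$ scenario that the proposition is about. The paper instead stays inside $(3,2,2)$ by using the Toner--Verstraete relation between two overlapping \emph{bipartite} CHSH expressions shared among the three parties $A$, $B$, $C$.
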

\begin{proof}
We consider the Bell operator in the $(3,2,2)$ scenario of the form:
\begin{eqnarray}
\label{eq:non-unique-Bell}
\cos{\theta} \; \langle \text{CHSH} \rangle_{AB}  + \sin{\theta} \; \langle \text{CHSH} \rangle_{AC} \leq 2 \left(\cos{\theta} + \sin{\theta} \right) \leq 2 \sqrt{2}, \nonumber \\
\end{eqnarray}
where $\text{CHSH}_{AB} := A_1 B_1 + A_1 B_2 - A_2 B_1 + A_2 B_2$ and $\text{CHSH}_{AC} := A_1 C_1 + A_1 C_2 - A_2 C_1 + A_2 C_2$. The classical value of the inequality is given by $2(\cos{\theta} + \sin{\theta})$. The quantum value ($2 \sqrt{2}$) of the inequality in (\ref{eq:non-unique-Bell}) is obtained as a linearization of the spherical monogamy relation found by Toner and Verstraete~\cite{TV06}, i.e., for any quantum state and measurements it holds that 
\begin{eqnarray}
\label{eq:Toner-Verstraete-mono}
\langle \text{CHSH} \rangle_{AB}^2 + \langle \text{CHSH} \rangle_{AC}^2 \leq 8. 
\end{eqnarray}
The maximum classical value of the Bell expression is evidently $2(\cos{\theta} + \sin{\theta})$ following from the classical bound on the individual CHSH expressions. The maximum quantum value of the expression in (\ref{eq:non-unique-Bell}) is $2 \sqrt{2}$ for any value of $\theta$, this follows from (\ref{eq:Toner-Verstraete-mono}). The maximum value of the expression in all no-signaling theories is $4 \cos{\theta}$ for $\theta \in [0, \pi/4]$ and $4 \sin{\theta}$ for $\theta \in [\pi/4, \pi/2]$. This follows from the corresponding CHSH trade-off relation $\langle \text{CHSH} \rangle_{AB} + \langle \text{CHSH} \rangle_{AC} \leq 4$ in general no-signaling theories. Let us now exhibit two distinct boxes (together with the corresponding quantum state and measurements) that achieve the maximal quantum value of (\ref{eq:non-unique-Bell}).
\begin{enumerate}
\item For $\theta \in [0, \pi/4]$, measure on the state 
$| \psi_1 \rangle = \frac{\sqrt{1 - \sqrt{2} \sin{\theta}}}{2} \left(|010 \rangle + |011 \rangle \right) + \frac{\sqrt{1+ \sqrt{2} \sin{\theta}}}{2} \left( | 100 \rangle + |101 \rangle \right),$
the observables given by 
$A_1 = \sigma_x, \; \; \; \; A_2 = \sigma_{z}$
$B_1 = \cos{\phi_1} \; \sigma_x + \sin{\phi_1} \; \sigma_z, \; \; \; \; B_2 = \cos{\phi_2} \;  \sigma_x + \sin{\phi_2} \;  \sigma_z,$
$C_1 = \sigma_x, \; \; C_2 = - \sigma_x$,
with $\phi_1 = - \phi_2 = \phi = \arcsin{\frac{\sec{\theta}}{\sqrt{2}}}$. Direct calculation shows that the resulting box has the correlators
$\langle A_1 B_1 \rangle = \langle A_1 B_2 \rangle = \frac{\cos{2 \theta} \sec{\theta}}{\sqrt{2}}, \; \; \; \; \langle A_2 B_1 \rangle = - \langle A_2 B_2 \rangle = \frac{-\sec{\theta}}{\sqrt{2}},$
$\langle A_1 C_1 \rangle = \langle A_1 C_2 \rangle = 0, \; \;  \; \; \langle A_2 C_1 \rangle = - \langle A_2 C_2 \rangle = - \sqrt{2} \sin{\theta}.$
This achieves the values $\langle \text{CHSH} \rangle_{AB} = 2 \sqrt{2} \cos{\theta}$ and $\langle \text{CHSH} \rangle_{AC} = 2 \sqrt{2} \sin{\theta}$ and therefore the value $2 \sqrt{2}$ of (\ref{eq:non-unique-Bell}). Also, for $\theta \in [\pi/4, \pi/2]$, to get the same values we measure on the state 
$| \psi_2 \rangle = \frac{\sqrt{1 - \sqrt{2} \cos{\theta}}}{2} \left(|001 \rangle + |011 \rangle \right) + \frac{\sqrt{1+ \sqrt{2} \cos{\theta}}}{2} \left( | 100 \rangle + |110 \rangle \right),$
the observables given by
$A_1 = \sigma_{x}, A_2 = \sigma_{z},$
$B_1 = \sigma_{x}, B_2 = - \sigma_{x},$
$C_1 = \cos{\phi_1} \; \sigma_x + \sin{\phi_1} \; \sigma_z, C_2 = \cos{\phi_2} \; \sigma_x + \sin{\phi_2} \; \sigma_z,$
with $\phi_1 = - \phi_2 = \phi = \arcsin{\frac{\sec{\theta}}{\sqrt{2}}}$. 

\item A different quantum box which achieves the same values is given by the state
$| \chi \rangle = \frac{1}{\sqrt{2}} \left[ \cos{\theta} \; |001 \rangle + \sin{\theta} \; | 010 \rangle \right] + \frac{1}{\sqrt{2}} \; |111 \rangle$
and the measurements
$A_1 = \sigma_{x},  A_2 = \sigma_{y}$,
$B_1 = \frac{1}{\sqrt{2}} \left(\sigma_{x} + \sigma_{y} \right), B_2 = \frac{1}{\sqrt{2}} \left(\sigma_{x} - \sigma_{y} \right)$,
$C_1 = \frac{1}{\sqrt{2}} \left(\sigma_{x} + \sigma_{y} \right), C_2 = \frac{1}{\sqrt{2}} \left(\sigma_{x} - \sigma_{y} \right).$
Once again a calculation reveals that these measurements on $| \chi \rangle$ achieve the values 
$\langle A_1 B_1 \rangle = \langle A_1 B_2 \rangle = - \langle A_2 B_1 \rangle = \langle A_2 B_2 \rangle = \frac{\cos{\theta}}{\sqrt{2}},$
$\langle A_1 C_1 \rangle = \langle A_1 C_2 \rangle = - \langle A_2 C_1 \rangle = \langle A_2 C_2 \rangle = \frac{\sin{\theta}}{\sqrt{2}}.$
This gives $\langle \text{CHSH} \rangle_{AB} = 2 \sqrt{2} \cos{\theta}$ and $\langle \text{CHSH} \rangle_{AC} = 2 \sqrt{2} \sin{\theta}$ and therefore the maximum quantum value $2 \sqrt{2}$ of (\ref{eq:non-unique-Bell}). 
\end{enumerate}
Evidently any mixture of the boxes from (1) and (2) above also achieve the maximum quantum value $2 \sqrt{2}$ of the Bell expression (\ref{eq:non-unique-Bell}).  
\end{proof}

Note that the boxes (1) and (2) in the proof do not exhibit genuine tripartite nonlocality, as such an interesting open question is whether there exist flat regions in the quantum set where all the boxes in the region exhibit genuine three-party nonlocality.  

\textit{Bounds on the guessing probability.-}
The derivation of monogamy relations enables us to derive bounds on the guessing probability, an important quantity in device-independent cryptographic tasks such as randomness expansion and amplification~\cite{ADPTA14, AMP12, PAM+10}. A central aim in these tasks is to quantify the randomness generated by the boxes from the amount of Bell inequality violation seen by the honest parties alone, independently of the possible underlying quantum realizations compatible with this violation.
Accordingly, we model the initial state of the $n$ honest parties and Eve as $\rho_{\mathcal{A}_1,\dots,\mathcal{A}_n, \mathcal{E}}$ upon which they act with sets of measurement operators $\{M_{a_i|x_i}\}$ and $\{M_{e|z}\}$. After Alice's measurement $x^*_1$, the correlations between her classical output $a_1$ and the quantum information held by Eve are described by the classical-quantum state $\sum_{a_1} p_{\mathcal{A}_1}(a_1|x^*_1) | a_1 \rangle \langle a_1| \otimes \rho^{a_1,x^*_1}_{\mathcal{E}}$, with $\rho^{a_1,x^*_1}_{\mathcal{E}}$ being the reduced state of Eve given $x^*_1,a_1$. The guessing probability quantifies the randomness of Alice's output given this quantum side information of Eve, i.e., the probability that Eve correctly guesses Alice's output using an optimal strategy described by the POVM $\{M_{e|z}\}$. The guessing probability is thus given as
\begin{eqnarray}
&&P_{\text{guess}}(A_1|X_1 = x^*_1,\mathcal{E}) = \nonumber \\
&&\qquad\max_{Q, \{M_{e|z}\}} \sum_{a_1} P(A_1 = a_1|X_1 = x^*_1, Q)  \times \nonumber \\
&&\quad \qquad P(E=a_1| X_1 = x^*_1, Z=z,A_1 = a_1,Q ).
\end{eqnarray}
Here, $X_1,A_1$ are the input-output random variables of Alice's system, $Z,E$ are the input-output random variables of Eve's system, $\mathcal{E}$ denotes the side information held by Eve and $Q$ denotes a quantum box $\{P(a_1, \dots, a_n,e|x_1, \dots, x_n, z)\}$ whose marginal on the honest parties is compatible with the observed Bell violation. Note that for the case of binary outcomes, the guessing probability above is directly related to the correlation function between Alice's observable $X_1$ and Eve's optimal observable $Z$ as $\langle X_1 Z \rangle = 2 P_{\text{guess}}(A_1|X_1=x^*_1,\mathcal{E}) - 1$. The previously derived trade-off relations will enable us to bound the correlation function and hence to obtain a bound on Eve's guessing probability given an observed value of the multi-party Bell inequality violation. 

Before moving on to the general case, let us illustrate this by applying a  trade-off relation to derive a bound on the guessing probability of Alice's outcome upon the observed violation of a hybrid Mermin-CHSH inequality. Consider a Bell scenario with $4$ parties, $A^{(1)}, A^{(2)}, A^{(3)}$ being the honest parties and $A^{(4)}$ denoting the adversary Eve. A slight modification of the trade-off relation from Eq.(\ref{eq:CHSH-Merm}) gives that
\begin{eqnarray}
\langle \text{Mermin} \rangle^2_{1,2,3} + \langle \text{CHSH} \rangle^2_{1,2} + 2 \langle \text{CHSH} \rangle^2_{1,4} \leq 16.
\end{eqnarray}
For completeness, observe that the four sets of mutually anti-commuting operators is given as
\begin{eqnarray}
	\{\text{XXXI, XXZI, XZII, ZIIX, ZIIZ}\}, \nonumber \\
	\{\text{XZXI, XZZI, XXII, ZIIX, ZIIZ}\}, \nonumber \\
	\{\text{ZXXI, ZXZI, ZZII, XIIX, XIIZ}\}, \\
	\{\text{ZZXI, ZZZI, ZXII, XIIX, XIIZ}\}. \nonumber
\end{eqnarray}
To derive a bound on the guessing probability, we set $A^{(4)}_1 = A^{(4)}_2$ (where $A^{(4)}_1$ denotes Eve's optimal measurement) and obtain 
\begin{eqnarray}
\label{eq:pguess-merm-chsh}
&&P_{\text{guess}}(A_1|X_1 = 1, \mathcal{E}) \leq \nonumber \\
&&  \frac{1}{2} \left[ 1 + \sqrt{2 - (1/8) \langle \text{Mermin} \rangle^2_{1, 2, 3} - (1/8) \langle \text{CHSH} \rangle^2_{1,2} } \right].  
\end{eqnarray}
Upon observing the maximal violation of the Mermin inequality ($\langle \text{Mermin} \rangle_{1,2,3} = 4$), the guessing probability reduces to a random guess, while for the classical value ($ \langle \text{Mermin} \rangle_{1,2,3} = \langle \text{CHSH} \rangle_{1,2} =2$), the value of Alice's outcome can be perfectly guessed. 
The honest parties can thus check for the violation of an inequality of the form 
\begin{eqnarray}
\cos{(\theta)} \langle \text{Mermin} \rangle_{1,2,3} + \sin{(\theta)} \langle \text{CHSH} \rangle_{1,2}  \leq  2 \left(\cos{(\theta)} + \sin{(\theta)} \right), \nonumber \\
\end{eqnarray}
to use the bound (\ref{eq:pguess-merm-chsh}) in a device-independent application. 

The above considerations can be extended to derive bounds on the guessing probability in the case of $n$ honest parties. Following the proof of~\ref{prop:diff-no-players}
the resulting bound is then seen to be
\begin{widetext}
\begin{eqnarray}
\label{eq:pguess-merm-chsh}
P_{\text{guess}}(A_1|X_1 = 1, \mathcal{E}) \leq \frac{1}{2} \left[ 1 + \sqrt{2^{n-2} - (1/2^n) \langle \text{Mermin}^{(n)} \rangle^2_{1, 2, \dots, n} - \sum_{j = 2}^{n-1} (1/2^{j+1}) \langle \text{Mermin}^{(j)} \rangle^2_{1,2, \dots, j} } \right],  
\end{eqnarray}
\end{widetext}
where $\text{Mermin}^{(j)}$ denotes the $j$-party Mermin expression with local bound $2^{(j-1)/2}$ for odd $j$ and $2^{j/2}$ for even $j$. 
At the maximum quantum value of $\langle \text{Mermin}^{(n)} \rangle_{1,\dots, n} = 2^{n-1}$, we get $P_{\text{guess}}(A_1|X_1 = 1, \mathcal{E}) \leq \frac{1}{2}$. 





\textit{Quantum monogamies for a class of inequalities with more than two inputs.-}
Suppose that Alice and Bob receive inputs $x, y \in [m]$ for even $m$ and output $a, b \in \{0,1\}$. We consider the correlation Bell inequality $\mathcal{I}^m$ described by the coefficient matrix $C = (t_{y-x})_{x,y=1}^m$ with
\begin{equation}
\label{eq:gen-ch}
t_l =
\begin{cases}
\phantom{-} 1,
& \text{if } \vert l \vert \leq (m/2) -1 \; \; \vee \; \; l = (m/2),\\
-1, & \text{if } \vert l \vert \geq (m/2)+2 \; \; \vee \; \; l = -(m/2)-1, \\
0 & \text{else}
\end{cases}
\end{equation}
%
As shown in~\cite{LKRG+17}, we have the bound $\beta_c = m^2/2$ for all classical theories, so that
\begin{equation}
\label{eq:cl-val}
\langle \mathcal{I}^n \rangle = \sum_{x,y=1}^n C_{x,y} \langle A_x B_y \rangle \leq m^2/2. 
\end{equation}
This value is achieved when Alice and Bob deterministically set $a, b = 0$ giving $\langle A_x B_y \rangle = 1$ for all $x, y$. 
Eq.(\ref{eq:cl-val}) is proven by writing the coefficient matrix $C$ as a sum of $m^2/4$ CHSH expressions (each with a classical bound of $2$), given as
\begin{eqnarray}
&&\langle A_{j+l-1} \left(B_{j} + B_{j+(m/2)} \right) + A_{j+(m/2) + l-1} \left(B_{j+(m/2)} - B_j \right) \rangle \nonumber \\
&&  \qquad \leq 2 \qquad \qquad \; \; \; \forall j \in [m/2], l \in [m/2]
\end{eqnarray}
The quantum value of the inequality is given by~\cite{LKRG+17}
	\begin{eqnarray}
\beta_q = m \csc \left[ \frac{\pi}{2m} \right], 
\end{eqnarray}
and is achieved when the two parties perform measurements in the equatorial plane on their half of a shared singlet state. Therefore, for large $m$, we have 
\begin{eqnarray}
\frac{\beta_q}{\beta_c} \xrightarrow{\text{m} \rightarrow \infty} \frac{4}{\pi}.
\end{eqnarray} 

\begin{prop}
	\label{prop:mono-chain}
The generalized chain inequality $\mathcal{I}^m$ satisfies
\begin{eqnarray}
\label{eq:ch-mono}
\langle \mathcal{I}^m \rangle^2_{AB} + \langle \mathcal{I}^m \rangle^2_{AC} \leq m^4/2,
\end{eqnarray}
for any quantum state and measurements. In all no-signaling theories, the inequality satisfies the relation
\begin{eqnarray}
\label{eq:ch-mono-ns}
\langle \mathcal{I}^m \rangle_{AB} + \langle \mathcal{I}^m \rangle_{AC} \leq m^2. 
\end{eqnarray}
\end{prop}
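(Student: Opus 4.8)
The plan is to reduce both bounds to the CHSH monogamy relations already recorded in Eq.~(\ref{eq:CHSH-mono}), exploiting the decomposition of the chain inequality $\mathcal{I}^m$ into $m^2/4$ CHSH expressions that was used to establish the classical bound in Eq.~(\ref{eq:cl-val}). Writing $\langle \mathcal{I}^m \rangle_{AB} = \sum_{j,l \in [m/2]} \langle \text{CHSH}^{(j,l)} \rangle_{AB}$, where the $(j,l)$ term is $A_{j+l-1}(B_j + B_{j+m/2}) + A_{j+m/2+l-1}(B_{j+m/2} - B_j)$, the key observation is that the very same decomposition applies to $\langle \mathcal{I}^m \rangle_{AC}$ with $B$ replaced by $C$, since the two expressions share the identical coefficient matrix $C = (t_{y-x})$ and hence the identical assignment of Alice's settings. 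Consequently, for each fixed $(j,l)$ the two CHSH terms involve the same pair of Alice observables $\{A_{j+l-1}, A_{j+m/2+l-1}\}$, so that $\{A_{j+l-1}, A_{j+m/2+l-1}\}$, $\{B_j, B_{j+m/2}\}$, $\{C_j, C_{j+m/2}\}$ form a genuine $(3,2,2)$ sub-scenario to which the Toner--Verstraete monogamy applies directly.

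For the quantum bound I would first apply the Cauchy--Schwarz inequality to the sum of $N = m^2/4$ terms, obtaining $\langle \mathcal{I}^m \rangle^2_{AB} \leq N \sum_{j,l} \langle \text{CHSH}^{(j,l)} \rangle^2_{AB}$ and likewise for $AC$. Adding the two and invoking the quantum CHSH monogamy $\langle \text{CHSH}^{(j,l)} \rangle^2_{AB} + \langle \text{CHSH}^{(j,l)} \rangle^2_{AC} \leq 8$ term by term gives $\langle \mathcal{I}^m \rangle^2_{AB} + \langle \mathcal{I}^m \rangle^2_{AC} \leq N \cdot N \cdot 8 = (m^2/4)^2 \cdot 8 = m^4/2$, which is precisely the claimed bound. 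It is worth emphasising that the Toner--Verstraete relation holds for arbitrary-dimensional quantum realisations, so I may apply it independently to each pair of Alice settings without needing a simultaneous Jordan-lemma reduction of all $m$ of Alice's observables, which in general would not exist.

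The no-signaling bound is even more direct: since any no-signaling box restricts to a no-signaling box on each $(3,2,2)$ sub-scenario, the linear monogamy $\langle \text{CHSH}^{(j,l)} \rangle_{AB} + \langle \text{CHSH}^{(j,l)} \rangle_{AC} \leq 4$ holds for every $(j,l)$, and summing over the $N = m^2/4$ terms yields $\langle \mathcal{I}^m \rangle_{AB} + \langle \mathcal{I}^m \rangle_{AC} \leq 4N = m^2$. I expect the only genuine obstacle to be bookkeeping: one must verify that the CHSH decomposition reproduces the coefficient matrix $C$ exactly, with each CHSH expression appearing once, and that the Alice-setting assignment in the $AB$ and $AC$ decompositions coincides term by term, so that the monogamy can legitimately be invoked inside the sum. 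Once this alignment is confirmed, both inequalities follow immediately from the CHSH monogamy relations together with these elementary estimates.
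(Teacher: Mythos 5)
Your proposal is correct, and it reaches both bounds by the same underlying decomposition as the paper; the difference lies in how the block-wise CHSH monogamies are recombined in the quantum case. The paper does not invoke the Toner--Verstraete quadratic relation and Cauchy--Schwarz as you do; instead, for each block $(j,l)$ and each angle $\theta$ it exhibits an explicit sum-of-squares certificate showing
\begin{equation*}
2\sqrt{2}\,\mathbf{1} \;-\; \cos\theta\,\mathrm{CHSH}^{(j,l)}_{AB} \;-\; \sin\theta\,\mathrm{CHSH}^{(j,l)}_{AC} \;\succeq\; 0,
\end{equation*}
sums these operator inequalities over the $m^2/4$ blocks to obtain the linearized relation $\cos\theta\,\langle \mathcal{I}^m\rangle_{AB} + \sin\theta\,\langle \mathcal{I}^m\rangle_{AC} \leq m^2/\sqrt{2}$ for all $\theta$, and the spherical bound $m^4/2$ is the envelope of this family (equivalently, your Cauchy--Schwarz step is replaced by the triangle inequality applied to the vectors $(\langle\mathrm{CHSH}^{(j,l)}\rangle_{AB}, \langle\mathrm{CHSH}^{(j,l)}\rangle_{AC})$). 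Both routes give the identical constant $8(m^2/4)^2 = m^4/2$, and your alignment check --- that the $AB$ and $AC$ decompositions use the same pair of Alice observables in each block, so that the $(3,2,2)$ monogamy legitimately applies term by term --- is exactly the point the paper's proof also rests on. What your version buys is brevity and the explicit observation that no joint Jordan-lemma reduction of all $m$ of Alice's observables is needed, since Toner--Verstraete holds in any dimension. What the paper's version buys is self-containedness (the operator inequality is proved rather than cited) and, more importantly, the linearized $\theta$-family itself, which is the form reused elsewhere in the paper --- for the flat-region construction in the $(3,2,2)$ scenario and for the extension of this same proof technique to the generalized Svetlichny monogamy. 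The no-signaling halves of the two proofs coincide.
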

\begin{proof}
We know that 
\begin{widetext}
\begin{eqnarray}
\label{eq:chsh-mono-2}
&& 2 \sqrt{2} \mathbf{1} - \cos{\theta} \; \left( A_{j+l-1} \otimes B_j \otimes \mathbf{1} + A_{j+l-1} \otimes B_{j+(m/2)} \otimes \mathbf{1} + A_{j+(m/2) + l-1} \otimes B_{j+(m/2)} \otimes \mathbf{1} - A_{j+(m/2)+l-1} \otimes B_{j} \otimes \mathbf{1} \right) \nonumber \\
&& \qquad - \sin{\theta} \left( A_{j+l-1}\otimes \mathbf{1} \otimes C_j + A_{j+l-1}\otimes \mathbf{1} \otimes C_{j+(m/2)} + A_{j+(m/2) + l-1} \otimes \mathbf{1} \otimes C_{j+(m/2)} - A_{j+(m/2)+l-1} \otimes \mathbf{1} \otimes C_{j} \right) \succeq 0, \nonumber \\
\end{eqnarray} 
\end{widetext}
for $\theta \in [0, \pi/2]$ and $``\succeq"$ denotes the positive semi-definiteness of the operator. This can for instance be seen by a sum-of-squares decomposition of the operator
\begin{widetext}
\begin{eqnarray}
\frac{1}{\sqrt{2}} &&\big[ \left(\mathbf{1} - \cos{\theta} \; \left( \frac{A_{j+l-1} + A_{j+(m/2) + l-1}}{\sqrt{2}} \otimes B_{j+(m/2)} \otimes \mathbf{1} \right) - \sin{\theta} \; \left( \frac{A_{j+l-1} - A_{j+(m/2) + l-1}}{\sqrt{2}} \otimes \mathbf{1} \otimes C_j \right) \right)^2  \nonumber \\
&&+ \left(\mathbf{1} - \cos{\theta} \; \left(\frac{A_{j+l-1} - A_{j+(m/2) + l-1}}{\sqrt{2}} \otimes B_{j} \otimes \mathbf{1} \right) - \sin{\theta} \; \left(\frac{A_{j+l-1} + A_{j+(m/2) + l-1}}{\sqrt{2}} \otimes \mathbf{1} \otimes C_{j+(m/2)} \right) \right)^2 \big] \succeq 0. \nonumber \\
\end{eqnarray}
\end{widetext}
Summing the above expression over all $j, l \in [m/2]$ gives the linear version of the spherical relation (\ref{eq:ch-mono}). Analogously, to derive the no-signaling monogamy relation, we apply the same technique as above using as a building block the CHSH monogamy relation in such theories, namely 
\begin{widetext}
\begin{eqnarray}
\label{eq:chsh-mono-3}
&& 4 -  \langle  A_{j+l-1}  B_j + A_{j+l-1}  B_{j+(m/2)} + A_{j+(m/2) + l-1}  B_{j+(m/2)}   - A_{j+(m/2)+l-1}  B_{j}  \rangle \nonumber \\
&& \qquad - \langle A_{j+l-1}  C_j + A_{j+l-1} C_{j+(m/2)} + A_{j+(m/2) + l-1}  C_{j+(m/2)} - A_{j+(m/2)+l-1} C_{j}  \rangle \geq  0. \nonumber \\
\end{eqnarray} 
\end{widetext}
Summing over $j, l \in [m/2]$ gives the no-signaling trade-off.

\end{proof}
In fact, from the proof above, we see that not only the generalized chain inequality but any Bell expression that can be decomposed into a sum of facet CHSH expressions obeys a hyperspherical monogamy relation within quantum theory. Other inequalities of this type are the well-known class of XOR introduced by Slofstra~\cite{Slofstra10} as games requiring a large amount of entanglement to play optimally. For a graph $G$ with $v$ vertices and $e$ edges, the coefficient matrix $A_G$ for the game is constructed as
having two rows for each edge of $G$, and columns indexed by the vertices. For $(u,v)$ an edge in G with $u < v$, the first row corresponding to $(u,v)$ contains a $1/(4e)$ in the $u$-th column, a $-1/(4e)$ in the $v$-th
column, and zeroes everywhere else. The second row corresponding to $(u,v)$ contains a $1/(4e)$ in both the $u$-th and the $v$-th
column, with zeroes everywhere else. One can directly see that this inequality is a convex combination of multiple CHSH inequalities in which Alice does not know exactly which of these CHSH games she is playing. From this convex decomposition, it follows that a spherical monogamy relation of the type in Prop.~\ref{prop:mono-chain} holds for this class of games. This observation also readily extends to the derivation of trade-off relations for multi-party correlation expressions with more than two inputs per party, when these can be decomposed into a convex combination of facet-defining binary XOR games. 

\textit{Monogamy of Genuine multi-party non-locality.-}
Compared with the scenario of two-party non-locality where $P_{A,B|X,Y}$ is either local or non-local, in the multi-party scenario, different kinds of non-locality can be distinguished. In the tripartite scenario, the fully local correlations $P_{A,B,C|X,Y,Z}$ are those that can be written as $P_{A,B,C|X,Y,Z}(a,b,c|x,y,z) = \sum_{\lambda} q_{\Lambda}(\lambda) P_{A|X,\Lambda}(a|x,\lambda) P_{B|Y, \Lambda}(b|y,\lambda) P_{C|Z,\Lambda}(c|z,\lambda)$. Correlations not in the above form are non-local, however different kinds of non-locality may be distinguished. In seminal work, Svetlichny introduced the notion of genuine $3$-way nonlocal correlations, which are those $P_{A,B,C|X,Y,Z}$ that cannot be written in the form
\begin{eqnarray}
\label{eq:bilocal}
&&P_{A,B,C|X,Y,Z}(a,b,c|x,y,z) = \nonumber \\
&&r_{AB|C} \sum_{\lambda} q_{\Lambda}(\lambda) P_{A,B| X, Y, \Lambda}(a,b|x,y,\lambda) P_{C|Z, \Lambda}(c|z, \lambda) \nonumber + \text{perm.}
\end{eqnarray}
 with $r_{AB|C}, r_{AC|B}, r_{BC|A} \geq 0$, $r_{AB|C} + r_{AC|B} + r_{BC|A} = 1$ and $\sum_{\lambda} q_{\Lambda}(\lambda) = \sum_{\gamma} q_{\Gamma}(\gamma) = \sum_{\upsilon} q_{\Upsilon}(\upsilon) = 1$, where the bipartite correlations can be arbitrary signaling ones. Svetlichny introduced an inequality, the violation of whose bilocal bound guarantees that the correlations are Svetlichny nonlocal, i.e., not of the form in (\ref{eq:bilocal}). The Svetlichny expression for an arbitrary number of parties was shown in~\cite{CGPRS02} in terms of the family of Mermin-Klyshko polynomials $M_n$~\cite{Mermin90, BK93}. 
Letting $M_1 = A^{(1)}_1$, the MK polynomials $M_n$ for $n$ parties are defined recursively by
\begin{eqnarray}
M_n = \frac{1}{2} M_{n-1} \left(A^{(n)}_1 + A^{(n)}_2 \right) + \frac{1}{2} \tilde{M}_{n-1} \left(A^{(n)}_1 - A^{(n)}_2 \right),
\end{eqnarray} 
where $\tilde{M}$ is obtained from $M$ by swapping the observables $1 \leftrightarrow 2$ for all the parties. The Svetlichny polynomial $\mathcal{S}_n$ is defined in terms of $M_n$ by
\begin{eqnarray}
\mathcal{S}_n =
\begin{cases}
M_n  \qquad \qquad \qquad \text{n even } \\    \frac{1}{2} \left(M_n + \tilde{M}_n \right) \quad \text{n odd}
\end{cases}
\end{eqnarray}

We now introduce a generalization of the Svetlichny polynomial to the case of an arbitrary number of inputs. This inequality is a modification of a different generalization with algebraic violation introduced in~\cite{AGCA12, BBBG+12} to show that the GHZ-correlations can be fully genuine multi-partite nonlocal. Accordingly, we let the $i$-th party receive $m$ inputs $x_i \in \{1, \dots, m\}$ for even $m$, whereupon they measure the binary observable $A^{(i)}_{x_i}$ and obtain the outcome $a_i \in \{\pm 1\}$. We then define the Bell expression to be
\begin{eqnarray}
\label{eq:gen-Svetlichny-exp}
\mathcal{S}^{\text{gen}}_{n,m} := \sum_{x_1, \dots, x_n=1}^{m} C_{x_1, \dots, x_n} \langle \prod_{i=1}^{n} A^{(i)}_{x_i} \rangle \leq \beta_c. 
\end{eqnarray} 
The coefficient tensor $\hat{C}$ of our generalized Svetlichny expression $\mathcal{S}^{\text{gen}}_{n,m}$ is constructed recursively starting from
\[
C_{1, \dots, 1, j_n} = \begin{cases}
-1 & \text{if } j_n \leq \frac{m}{2} + 1 \\    1 & \text{if } j_n > \frac{m}{2}+ 1,
\end{cases}
\]
with $C_{1, \dots,1, 1+j_{n-r}, j_{n-r+1},\dots, j_{n-1}, j_n} =$
\begin{eqnarray}
\begin{cases}
C_{1, \dots,1, j_{n-r}, (j_{n-r+1}-1) \; \text{mod m},j_{n-r+2},\dots, j_{n-1}, j_n} & \text{if } j_{n-r+1} > 1  \nonumber \\
-C_{1, \dots,1, j_{n-r}, (j_{n-r+1}-1) \; \text{mod m},j_{n-r+2},\dots, j_{n-1}, j_n} & \text{if } j_{n-r+1} = 1 \nonumber \\
\end{cases}
\end{eqnarray}
for $r = 1, \dots, n-1$. Note that in contrast to the chained Svetlichny expression introduced in~\cite{AGCA12, BBBG+12}, the inequality in (\ref{eq:gen-Svetlichny-exp}) corresponds to a total function, i.e., all $m^n$ possible $n$-tuples of measurement settings $(x_1, \dots, x_n)$ appear in the inequality.  

We claim that the bilocal value for the inequality is 
\begin{eqnarray}
\beta_{bl} = m^{n}/2. 
\end{eqnarray}
This follows by an inductive argument similar to the one considered in~\cite{AGCA12}. For $n=2$, the inequality $\mathcal{S}^{\text{gen}}_{2,m}$ reduces to $\mathcal{I}^m$ from the previous section so that the classical bound $m^2/2$ follows. Suppose the bound $m^{k-1}/2$ holds for $\mathcal{S}^{\text{gen}}_{k-1,m}$. For each setting $x_k$ of the $k$-th party, the inequality $\mathcal{S}^{\text{gen}}_{k,m}(x_k)$ reduces to $\mathcal{S}^{\text{gen}}_{k-1,m}$ up to input-output relabelings so that the bound $m^{k-1}/2$ holds for each of the $m$ settings $x_k$. Summing this bound over all the $m$ settings gives the bilocal value $m^k/2$ for $\mathcal{S}^{\text{gen}}_{k,m}$. The general bound $m^n/2$ then follows by induction.  

We now present a quantum strategy for the $n$ parties and prove that it achieves optimal violation of the inequality. The parties share the $n$-qubit GHZ state $| \text{GHZ}_n \rangle = \frac{1}{\sqrt{2}} \left[ |0 \rangle^{\otimes n} + | 1 \rangle^{\otimes n} \right]$ and measure observables in the $x-y$ plane, i.e., 
\begin{eqnarray}
A^{(i)}_{x_i} = \cos{(\theta^{(i)}_{x_i})} \sigma_x + \sin{(\theta^{(i)}_{x_i})} \sigma_y
\end{eqnarray}
For a given number of parties $n$, we set
\[\theta^{(i)}_{x_i} = 
\begin{cases}
\frac{(-1)^n(x_i-1) \pi}{m} & \text{if } i = 1, 3, \dots, n-2, x_i \in [m] \\
\frac{(-1)^{n+1}(x_i-1) \pi}{m}  & \text{if } i = 2, 4, \dots, n-1,  x_i \in [m] \\
\frac{(2m-(-1)^n-2x_n) \pi}{2m} & \text{if } i = n, x_n \in [m].
\end{cases}
\]
The quantum value achieved by the strategy is given by
\begin{eqnarray}
\beta_q \geq 2m^{n-1} \sum_{i=1}^{m/2} \cos{\frac{(2i-1) \pi}{2m}} = m^{n-1} \csc{\left(\frac{\pi}{2m}\right)}.
\end{eqnarray}
This again closely follows an inductive argument for an analogous quantum strategy in~\cite{AGCA12} showing that the value of $\mathcal{S}_{n,m}^{\text{gen}}$ is equal to $m$ times the value of $\mathcal{S}_{n-1,m}^{\text{gen}}$. The value then follows from the quantum value $(m \csc{\left[\frac{\pi}{2m}\right]})$ of $\mathcal{I}^{m}$ from the previous section. 
We now show that the strategy is optimal. To do this, we show that it achieves the upper bound on $\beta_q$ given as~\cite{Wehner2006, EKB13, MRMC16, RAM16}
\begin{eqnarray}
\beta_q \leq m \sum_{x_3, \dots, x_n=1}^{m} \| C_{*, *, x_3, \dots, x_n} \|
\end{eqnarray}
where $\| \cdot \|$ denotes the spectral norm (maximal singular value). Now, by construction, fixing $x_3, \dots, x_n$ gives the coefficient matrix $C = (t_{y-x})_{x,y=1}^n$ from Eq.(\ref{eq:gen-ch}) up to input-output relabelings. The spectral norm of this matrix was calculated in~\cite{LKRG+17} and shown to be given by (setting $k=m/2$ in the expression in~\cite{LKRG+17}) $\| C_{*, *, x_3, \dots, x_n} \| = m \csc{\left(\frac{\pi}{2m}\right)}$. Summing over $x_3, \dots, x_n$ gives the value $m^{n-1} \csc{\left(\frac{\pi}{2m}\right)}$. The quantum strategy using the $n$-qubit GHZ state achieves this upper bound, so we conclude that it is optimal.

\begin{prop}
Genuine multi-party non-locality evidenced by the generalized Svetlichny polynomial $\mathcal{S}^{\text{gen}}_{n,m}$ for even $n$ is monogamous, i.e.,
\begin{eqnarray}
\langle \mathcal{S}_{n,m}^{\text{gen}} \rangle^2_{P_1} + \langle \mathcal{S}_{n,m}^{\text{gen}} \rangle^2_{P_2} \leq 2 \beta^2_{bl},
\end{eqnarray}
where $P_1$ and $P_2$ are two sets of players with $|P_1| = |P_2| = n$ and $P_1 \cap P_2 \neq \varnothing$.
\end{prop}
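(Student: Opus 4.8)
The plan is to reduce the proposition to a single sum-of-squares expectation bound and then exploit the fact that the overlap parties contribute the \emph{same} observables to both Bell operators. First I would invoke Jordan's lemma so that every party may be taken to hold a qubit measuring $\pm 1$-valued Hermitian observables $A^{(i)}_{x_i}$; each generalized Svetlichny expression then becomes a Hermitian Bell operator $\hat{\mathcal{S}}_{P_i}$, and since $\langle O\rangle^2\le\langle O^2\rangle$ for Hermitian $O$ it suffices to establish $\langle\hat{\mathcal{S}}\rangle_{P_1}^2+\langle\hat{\mathcal{S}}\rangle_{P_2}^2\le 2\beta_{bl}^2=m^{2n}/2$. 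I note that for $n=2$ this is precisely the chain monogamy already proved in Prop.~\ref{prop:mono-chain} (with $\beta_{bl}=m^2/2$ and $2\beta_{bl}^2=m^4/2$), so the real task is to lift that result through the recursive Mermin--Klyshko construction of $\hat{\mathcal{S}}^{\text{gen}}_{n,m}$ for general even $n$.

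The central step is to peel the common parties $P_1\cap P_2\neq\varnothing$ out of the recursion and write both operators with the shared structure made explicit. Using the recursive definition of the coefficient tensor, one can expand $\hat{\mathcal{S}}_{P_i}=\sum_k O^{(c)}_k\otimes R^{(i)}_k$, where the $O^{(c)}_k$ are products of the observables of the overlap parties (hence identical in $i=1,2$) and the $R^{(i)}_k$ act on the parties private to $P_i$. In the simplest case, where $P_1$ and $P_2$ differ only in their final party, this takes the transparent CHSH/chain form $\hat{\mathcal{S}}_{P_i}=\tfrac12\big[\mathcal{M}\,(A^{(i)}_{1}+A^{(i)}_{2})+\tilde{\mathcal{M}}\,(A^{(i)}_{1}-A^{(i)}_{2})\big]$ (with the $m$-input chain building block $\mathcal{I}^m$ in place of bare CHSH), where $\mathcal{M},\tilde{\mathcal{M}}$ are the Mermin--Klyshko sub-polynomials of the common parties and are therefore common to $\hat{\mathcal{S}}_{P_1}$ and $\hat{\mathcal{S}}_{P_2}$. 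In this representation $P_1$ and $P_2$ behave as two chain expressions sharing the \emph{same} effective observables $\mathcal{M},\tilde{\mathcal{M}}$ on the overlap -- exactly the Toner--Verstraete configuration underlying monogamy.

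To close the bound I would reuse the explicit sum-of-squares certificate from the proof of Prop.~\ref{prop:mono-chain}: forming $\cos\theta\,\hat{\mathcal{S}}_{P_1}+\sin\theta\,\hat{\mathcal{S}}_{P_2}$ and decomposing each chain into its facet CHSH blocks, every block reduces to a Toner--Verstraete CHSH pair on the common parties, whose SOS decomposition certifies $\cos\theta\,\langle\mathcal{S}\rangle_{P_1}+\sin\theta\,\langle\mathcal{S}\rangle_{P_2}\le\sqrt{2}\,\beta_{bl}$ for every $\theta$. Maximizing over $\theta$ then yields $\langle\mathcal{S}\rangle_{P_1}^2+\langle\mathcal{S}\rangle_{P_2}^2\le 2\beta_{bl}^2$, as required. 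The physical mechanism forcing the trade-off is the anticommutation $\{A_1+A_2,\,A_1-A_2\}=0$ on the private qubits together with the correlation-complementarity of the shared observables on the overlap (Lemma~\ref{lem:corr-comp}), which is what prevents the two values from simply adding up to $2\beta_q^2$.

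The step I expect to be the main obstacle is controlling the shared sub-polynomials $\mathcal{M},\tilde{\mathcal{M}}$: unlike genuine $\pm1$ observables these are not unitary, so the Toner--Verstraete relation does not apply verbatim, and one must propagate their operator norms and mutual anticommutation relations (the analogue of $\mathcal{M}^2+\tilde{\mathcal{M}}^2=\text{const}\cdot\mathbf{1}$) through the recursion to pin down the exact constant $2\beta_{bl}^2$. A secondary difficulty is uniformity: the argument must be made to work for every overlap size $1\le|P_1\cap P_2|\le n-1$ -- by pulling the common parties out as the root of the recursion rather than as a single differing leaf -- and for the full $m$-input chain recursion, which is why the $m$-input building block of Prop.~\ref{prop:mono-chain} is needed in place of CHSH and why careful bookkeeping of exactly which observables are common to both expressions is essential.
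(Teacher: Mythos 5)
Your outer reduction is the same as the paper's: prove the $m=2$ case first and lift to general $m$ by decomposing $\mathcal{S}^{\text{gen}}_{n,m}$ into $m=2$ blocks and summing certificates, exactly the technique of Prop.~\ref{prop:mono-chain}. The genuine gap is in the block you then need and never establish: the monogamy of two \emph{overlapping multiparty} $(n,2,2)$ Svetlichny expressions. Your plan for it --- pull the overlap parties out of the Mermin--Klyshko recursion and treat their sub-polynomials $\mathcal{M},\tilde{\mathcal{M}}$ as the two observables of an effective Toner--Verstraete ``Alice'' --- fails for precisely the reason you flag but do not resolve, and the failure is structural, not technical. For the CHSH-level operators $\mathcal{M}=A_1\otimes(B_1+B_2)+A_2\otimes(B_1-B_2)$ and its index-swapped partner $\tilde{\mathcal{M}}$ one computes $\{\mathcal{M},\tilde{\mathcal{M}}\}=2\,\{A_1,A_2\}\otimes\{B_1,B_2\}$ and $\mathcal{M}^2+\tilde{\mathcal{M}}^2=8\cdot\mathbf{1}-2\,[A_1,A_2]\otimes[B_1,B_2]$; so for generic measurements the effective observables neither anti-commute nor satisfy the relation $\mathcal{M}^2+\tilde{\mathcal{M}}^2\propto\mathbf{1}$ that you hoped to propagate through the recursion (it already fails at the first level, and the defect grows with $n$). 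Consequently Lemma~\ref{lem:corr-comp} --- which requires binary, pairwise anti-commuting observables --- does not apply to $\mathcal{M},\tilde{\mathcal{M}}$, the Toner--Verstraete sum-of-squares identity does not expand correctly, and the only thing you can propagate is an operator-norm bound, which yields only $\langle\mathcal{S}\rangle^2_{P_1}+\langle\mathcal{S}\rangle^2_{P_2}\leq 2\beta_q^2=4\beta_{bl}^2$. The missing factor of two is the entire content of the proposition.

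The paper closes this step by a different mechanism that never forms effective observables: since $\mathcal{S}^{\text{gen}}_{n,2}$ is a full-correlation $(n,2,2)$ expression, Lemma~\ref{lem:Zukowski-Brukner} bounds each $\langle\mathcal{S}^{\text{gen}}_{n,2}\rangle^2_{P_i}$ by the sum of squares of $2^n$ Pauli correlation-tensor elements, and the union of the two sets of $2^n$ Pauli strings is partitioned into $2^{n-1}$ quadruples of mutually anti-commuting operators, each quadruple contributing at most $1$ by Lemma~\ref{lem:corr-comp}; this gives exactly $2^{n-1}=2\beta_{bl}^2$. For $|P_1\cap P_2|=1$ the grouping is the one from Props.~\ref{prop:ladder-network} and~\ref{prop:diff-no-players}; for larger overlap each quadruple is augmented by the factors $X^{\otimes|P_1\cap P_2|-k}\otimes Z^{\otimes k}$ on the additional shared positions, which preserves anti-commutation. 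If you want to rescue your route, this Pauli-string-level anti-commutation argument is what must replace your effective Toner--Verstraete step --- at which point your proof coincides with the paper's.
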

\begin{proof}
%
We first show the bound for the case $m=2$, the general bound then follows from the technique in the proof of Prop.~\ref{prop:mono-chain}, i.e., by decomposing the general expression as a sum of the Svetlichny expressions for $m=2$. 
The bilocal bound $\beta_{bl}$ of the $m=2$ Svetlichny expression for any partition $k \mid (n-k)$ of the parties  for even $n$ is given by $2^{\frac{n-2}{2}}$.
The corresponding optimal quantum value is given by $2^{\frac{n-1}{2}}$. 
Being a full-correlation inequality in the $(n,2,2)$ scenario it can be decomposed into the facet inequalities, specifically into the Mermin-Klyshko polynomials, so that the hyperspherical trade-off relations derived in the previous sections holds for this inequality. In particular, from Prop.~\ref{prop:ladder-network} and Prop.~\ref{prop:diff-no-players}, we know that if the two sets intersect at one party $1$, we have
\begin{eqnarray}
\langle \mathcal{S}^{\text{gen}}_{n,2} \rangle^2_{1,2,4,\dots, 2n-2} + \langle \mathcal{S}^{\text{gen}}_{n,2} \rangle^2_{1,3,5,\dots, 2n-1} \leq 2^{n-1},
\end{eqnarray}
which is equivalent to the bound $2 \beta_{bl}^2 (= 2(2^{\frac{n-2}{2}})^2 = 2^{n-1})$. The bound for any arbitrary intersection of the two player sets essentially follows from the above case.
Namely, we have two sets of $2^n$ operators which we would like to group into $2^{n-1}$ sets of four mutually anti-commuting operators each. From the construction in the proof of Prop.~\ref{prop:diff-no-players}, we see that these sets are obtained as $X^{\otimes |P_1 \cap P_2| - k} \otimes Z^{\otimes k} \otimes $ $\{\small{\textsc{XXIXI\dots XI}, \textsc{XXIXI\dots ZI}, \textsc{ZIXIX\dots IX}, \textsc{ZIXIX\dots IZ}} \}$ for fixed $0 \leq k \leq |P_1 \cap P_2|$, i.e., we augment the anti-commuting set in the situation $|P_1 \cap P_2| = 1$ by all combinations of tensor product of $X,Z$ operators at the remaining positions of intersection. 
\end{proof}

\textit{Monogamies within general No-Signaling Theories.-}
In previous sections, we had derived the trade-offs in violation for the $(n,2,2)$ correlation inequalities within quantum theory. In this section, we derive the no-signaling bound on these trade-offs, in the more general $(n,m,d)$ scenario. In doing this, we correct an error in~\cite{PB09} where such general multi-partite no-signaling monogamy relations were derived. 

Consider a general Bell scenario with $n$ parties $A_1, \dots, A_n$. The $i$-th party chooses one among $m_i$ inputs $X_i = x_i \in [m_i]$ and obtains one of $d_i$ outcomes $O_i = o_i \in [d_i]$. Denoting $\textbf{o} := \{o_1, \dots, o_n\}$ and $\textbf{x} := \{x_1, \dots, x_n\}$, the general Bell expression is given as 
\begin{eqnarray}
\label{eq:gen-Bell-ns}
\mathcal{I}_{A_1, \dots, A_n} \equiv \sum_{\textbf{o}, \textbf{x}}  \gamma_{\textbf{o}, \textbf{x}} P(\textbf{o}| \textbf{x}) \leq \beta_c,
\end{eqnarray}
where $\beta_c$ denotes the optimal classical value. General monogamy relations for arbitrary two-party inequalities were derived in~\cite{PB09}. However, the generalization to the multi-party scenario in that paper contained an error which we now rectify in the following proposition. 

\def\d{3} 
\def\a{0.35} 
\def\as{0.27} 
\def\rv{0.05} 
\def\alpha{63.4349489} 
\begin{figure}[t !]
	\label{fig:ns-mono-hypergraph}
	\centering
	\begin{tikzpicture}[thick,scale=0.6, every node/.style={scale=0.6}]
	
	\node (vA) at (0, 0) {};
	\node (vB1) at ({\d}, {\d}) {};
	\node (vB2) at ({\d}, 0) {};
	\node (vB3) at ({\d}, {-\d}) {};
	\node (vC1) at ({2 * \d}, {\d}) {};
	\node (vC2) at ({2 * \d}, 0) {};
	\node (vC3) at ({2 * \d}, {-\d}) {};
	
	\begin{scope}[fill opacity=0.8]
	
					\filldraw[fill=gray!40]
					($(vA) + (0, \a)$) 
					to[out=0,in=180] ($(vC2) + ({0}, {\a})$)
					to[out=0,in=90] ($(vC2) + ({\a}, {0})$)
					to[out=270,in=0] ($(vC2) + ({0}, {-\a})$)
					to[out=180,in=0] ($(vA) + ({0}, {-\a})$)
					to[out=180,in=270] ($(vA) + ({-\a}, {0})$)
					to[out=90,in=180] ($(vA) + ({0}, {\a})$)
					;
			
				\filldraw[fill=gray!35]
					($(vA) + ({-\a / sqrt(2)}, {\a / sqrt(2)})$) 
					to[out=45,in=225] ($(vB1) + ({-\a / sqrt(2)}, {\a / sqrt(2)})$)
					to[out=45,in=180] ($(vB1) + (0, {\a})$)
					to[out=0,in=180] ($(vC1) + (0, {\a})$)
					to[out=0,in=90] ($(vC1) + ({\a}, 0)$)
					to[out=270,in=0] ($(vC1) + (0, {-\a})$)
					to[out=180,in=0] ($(vB1) + ({(sqrt(2)-1) * \a}, {-\a})$)
					to[out=225,in=45] ($(vA) + ({\a / sqrt(2)}, {-\a / sqrt(2)})$)
					to[out=225,in=315] ($(vA) + ({-\a / sqrt(2)}, {-\a / sqrt(2)})$)
					to[out=135,in=225] ($(vA) + ({-\a / sqrt(2)}, {\a / sqrt(2)})$)
					;
				
				\filldraw[fill=gray!30]
					($(vA) + ({-\a / sqrt(2)}, {-\a / sqrt(2)})$) 
					to[out=360-45,in=360-225] ($(vB3) + ({-\a / sqrt(2)}, {-\a / sqrt(2)})$)
					to[out=360-45,in=360-180] ($(vB3) + (0, {-\a})$)
					to[out=360-0,in=360-180] ($(vC3) + (0, {-\a})$)
					to[out=360-0,in=360-90] ($(vC3) + ({\a}, 0)$)
					to[out=360-270,in=360-0] ($(vC3) + (0, {\a})$)
					to[out=360-180,in=360-0] ($(vB3) + ({(sqrt(2)-1) * \a}, {\a})$)
					to[out=360-225,in=360-45] ($(vA) + ({\a / sqrt(2)}, {\a / sqrt(2)})$)
					to[out=360-225,in=360-315] ($(vA) + ({-\a / sqrt(2)}, {\a / sqrt(2)})$)
					to[out=360-135,in=360-225] ($(vA) + ({-\a / sqrt(2)}, {-\a / sqrt(2)})$)
					;
				
				\filldraw[fill=gray!25]
					($(vA) + ({-\a / sqrt(2)}, {\a / sqrt(2)})$) 
					to[out=45,in=225] ($(vB1) + ({-\a / sqrt(2)}, {\a / sqrt(2)})$)
					to[out=45,in=135] ($(vB1) + ({\a / sqrt(2)}, {\a / sqrt(2)})$)
					to[out=315,in=135] ($(vC2) + ({\a / sqrt(2)}, {\a / sqrt(2)})$)
					to[out=315,in=45] ($(vC2) + ({\a / sqrt(2)}, {-\a / sqrt(2)})$)
					to[out=225,in=315] ($(vC2) + ({-\a / sqrt(2)}, {-\a / sqrt(2)})$)
					to[out=135,in=315] ($(vB1) + (0, {-\a * sqrt(2)})$)
					to[out=225,in=45] ($(vA) + ({\a / sqrt(2)}, {-\a / sqrt(2)})$)
					to[out=225,in=315] ($(vA) + ({-\a / sqrt(2)}, {-\a / sqrt(2)})$)
					to[out=135,in=225] ($(vA) + ({-\a / sqrt(2)}, {\a / sqrt(2)})$)
					;
				
				\filldraw[fill=gray!20]
					($(vA) + ({-\a / sqrt(2)}, {-\a / sqrt(2)})$) 
					to[out=360-45,in=360-225] ($(vB3) + ({-\a / sqrt(2)}, {-\a / sqrt(2)})$)
					to[out=360-45,in=360-135] ($(vB3) + ({\a / sqrt(2)}, {-\a / sqrt(2)})$)
					to[out=360-315,in=360-135] ($(vC2) + ({\a / sqrt(2)}, {-\a / sqrt(2)})$)
					to[out=360-315,in=360-45] ($(vC2) + ({\a / sqrt(2)}, {\a / sqrt(2)})$)
					to[out=360-225,in=360-315] ($(vC2) + ({-\a / sqrt(2)}, {\a / sqrt(2)})$)
					to[out=360-135,in=360-315] ($(vB3) + (0, {\a * sqrt(2)})$)
					to[out=360-225,in=360-45] ($(vA) + ({\a / sqrt(2)}, {\a / sqrt(2)})$)
					to[out=360-225,in=360-315] ($(vA) + ({-\a / sqrt(2)}, {\a / sqrt(2)})$)
					to[out=360-135,in=360-225] ($(vA) + ({-\a / sqrt(2)}, {-\a / sqrt(2)})$)
					;
				
				\filldraw[fill=gray!15]
					($(vA) + (0, \as)$) 
					to[out=0,in=180] ($(vB2) + ({(sqrt(2)-1) * \as}, {\as})$)
					to[out=315,in=135] ($(vC3) + ({\as / sqrt(2)}, {\as / sqrt(2)})$)
					to[out=315,in=45] ($(vC3) + ({\as / sqrt(2)}, {-\as / sqrt(2)})$)
					to[out=225,in=315] ($(vC3) + ({-\as / sqrt(2)}, {-\as / sqrt(2)})$)
					to[out=135,in=315] ($(vB2) + ({-(sqrt(2)-1) * \as}, {-\as})$)
					to[out=180,in=0] ($(vA) + ({0}, {-\as})$)
					to[out=180,in=270] ($(vA) + ({-\as}, {0})$)
					to[out=90,in=180] ($(vA) + ({0}, {\as})$)
					;
				
				\filldraw[fill=gray!10]
					($(vA) + (0, -\as)$) 
					to[out=360-0,in=360-180] ($(vB2) + ({(sqrt(2)-1) * \as}, {-\as})$)
					to[out=360-315,in=360-135] ($(vC1) + ({\as / sqrt(2)}, {-\as / sqrt(2)})$)
					to[out=360-315,in=360-45] ($(vC1) + ({\as / sqrt(2)}, {\as / sqrt(2)})$)
					to[out=360-225,in=360-315] ($(vC1) + ({-\as / sqrt(2)}, {\as / sqrt(2)})$)
					to[out=360-135,in=360-315] ($(vB2) + ({-(sqrt(2)-1) * \as}, {\as})$)
					to[out=360-180,in=360-0] ($(vA) + ({0}, {\as})$)
					to[out=360-180,in=360-270] ($(vA) + ({-\as}, {0})$)
					to[out=360-90,in=360-180] ($(vA) + ({0}, {-\as})$)
					;
					
				\filldraw[fill=gray!5]
					($(vA) + ({\a / sqrt(2)}, {\a / sqrt(2)})$) 
					to[out=315,in=135] ($(vB3) + ({-((sqrt(5)-sqrt(2)) / 3) * \as}, {((sqrt(5) + 2*sqrt(2)) / 3) * \as})$)
					to[out=\alpha,in={\alpha+180}] ($(vC1) + ({-(2 / 5) * sqrt(5) * \as}, {(sqrt(5) / 5) * \as})$)
					to[out=\alpha,in={\alpha+90}] ($(vC1) + ({(sqrt(5) / 5) * \as}, {(2 * sqrt(5) / 5) * \as})$)
					to[out={\alpha+270},in=\alpha] ($(vC1) + ({(2 / 5) * sqrt(5) * \as}, {-(sqrt(5) / 5) * \as})$)
					to[out={\alpha+180},in=\alpha] ($(vB3) + ({(sqrt(5) / 3) * \as}, {-(sqrt(5) / 3) * \as})$)
					to[out={\alpha+180},in={315}] ($(vB3) + ({-(sqrt(2) / 3) * \as}, {-(2 * sqrt(2) / 3) * \as})$)
					to[out=135,in=315] ($(vA) + ({-\a / sqrt(2)}, {-\a / sqrt(2)})$)
					to[out=135,in=225] ($(vA) + ({-\a / sqrt(2)}, {\a / sqrt(2)})$)
					to[out=45,in=135] ($(vA) + ({\a / sqrt(2)}, {\a / sqrt(2)})$)
					;
					
				\filldraw[fill=gray!0]
					($(vA) + ({\a / sqrt(2)}, {-\a / sqrt(2)})$) 
					to[out=360-315,in=360-135] ($(vB1) + ({-((sqrt(5)-sqrt(2)) / 3) * \as}, {-((sqrt(5) + 2*sqrt(2)) / 3) * \as})$)
					to[out=360-\alpha,in={360-180-\alpha}] ($(vC3) + ({-(2 / 5) * sqrt(5) * \as}, {-(sqrt(5) / 5) * \as})$)
					to[out=360-\alpha,in={360-90-\alpha}] ($(vC3) + ({(sqrt(5) / 5) * \as}, {-(2 * sqrt(5) / 5) * \as})$)
					to[out={360-270-\alpha},in=360-\alpha] ($(vC3) + ({(2 / 5) * sqrt(5) * \as}, {(sqrt(5) / 5) * \as})$)
					to[out={360-180-\alpha},in=360-\alpha] ($(vB1) + ({(sqrt(5) / 3) * \as}, {(sqrt(5) / 3) * \as})$)
					to[out={360-180-\alpha},in=360-315] ($(vB1) + ({-(sqrt(2) / 3) * \as}, {(2 * sqrt(2) / 3) * \as})$)
					to[out=360-135,in=360-315] ($(vA) + ({-\a / sqrt(2)}, {\a / sqrt(2)})$)
					to[out=360-135,in=360-225] ($(vA) + ({-\a / sqrt(2)}, {-\a / sqrt(2)})$)
					to[out=360-45,in=360-135] ($(vA) + ({\a / sqrt(2)}, {-\a / sqrt(2)})$)
					;
	
	\end{scope}
	
	\fill (vA) circle (\rv) node [below left] {$v_A$};
	\fill (vB1) circle (\rv) node [below left] {$v_{B_1}$};
	\fill (vB2) circle (\rv) node [below right] {$v_{B_2}$};
	\fill (vB3) circle (\rv) node [below right] {$v_{B_3}$};
	\fill (vC1) circle (\rv) node [below left] {$v_{C_1}$};
	\fill (vC2) circle (\rv) node [below right] {$v_{C_2}$};
	\fill (vC3) circle (\rv) node [below right] {$v_{C_3}$};
	
	\end{tikzpicture}
	\caption{Illustration of the hypergraph from Prop.~\ref{prop:genBell-ns-mono} with $3^2 = 9$ hyperedges of the form $\{v_A, v_{B_i}, v_{C_j}\}$, $i,j \in \{1,2,3\}$, $5$ hyperedges shown. Generic trade-off relations hold for the no-signaling values of arbitrary Bell inequalities in such a configuration of players.}
	\label{fig:ns-mono-hypergraph}
\end{figure}
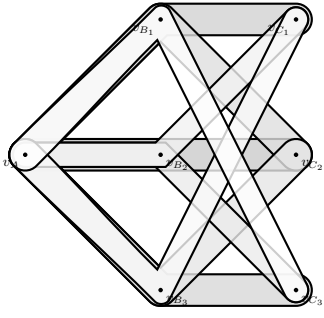

\begin{prop}
	\label{prop:genBell-ns-mono}
	Consider the generalized ladder network of $1 + \sum_{i=2}^{n} m_i$ parties depicted by the hypergraph in Fig.~\ref{fig:ns-mono-hypergraph}. Alice attempts to violate the general Bell expression $\mathcal{I}$ from Eq.(\ref{eq:gen-Bell-ns}) with $m_2$ ``Bobs" $A_2^{(k_2)}$, $m_3$ ``Charlies" $A_3^{(k_3)}$, etc. in $\prod_{i=2}^{n} m_i$ simultaneous Bell experiments. In this scenario, the following trade-off relation holds in all no-signaling theories. 
	\begin{eqnarray}
	\sum_{k_2, \dots, k_n=1}^{m_2, \dots, m_n} \mathcal{I}_{A_1, A_2^{(k_2)}, \dots, A_n^{(k_n)}} \leq \beta_c \prod_{i=2}^n m_i. 
	\end{eqnarray}
\end{prop}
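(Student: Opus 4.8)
The plan is to prove the no-signaling trade-off by exhibiting a single measurement on Alice's side whose statistics, combined with the no-signaling constraints, force a constraint on the sum of Bell values. The key structural observation is that the hypergraph in Fig.~\ref{fig:ns-mono-hypergraph} has \emph{Alice as the unique common vertex} of every hyperedge: each of the $\prod_{i=2}^n m_i$ Bell experiments involves Alice together with one choice of Bob, one Charlie, etc. Because Alice participates in all experiments simultaneously but can only make one measurement at a time, the no-signaling principle tells us that the reduced marginals seen by the various ``wings'' (the $B^{(k_2)}, C^{(k_3)}, \dots$ parties) are mutually consistent through Alice's single physical system.

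First I would set up the local-hidden-variable--style decomposition that characterizes the extremal no-signaling boxes. The crucial point, following the CHSH argument of Toner and Verstraete~\cite{TV06} and its correction of~\cite{PB09}, is that a single no-signaling box $P(\mathbf{o}|\mathbf{x})$ for all $1+\sum_i m_i$ parties restricts to a box for each hyperedge $\{A_1, A_2^{(k_2)}, \dots, A_n^{(k_n)}\}$, and these restrictions \emph{share the same marginal on Alice}. I would then argue that maximizing $\sum_{k_2, \dots, k_n} \mathcal{I}_{A_1, A_2^{(k_2)}, \dots, A_n^{(k_n)}}$ over all global no-signaling boxes is a linear program whose optimum is attained at a vertex, i.e., at a \emph{deterministic} (local-deterministic on each wing, given Alice's shared data) assignment. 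The core lemma is that for each fixed deterministic strategy of Alice, each individual Bell expression $\mathcal{I}_{A_1, A_2^{(k_2)}, \dots, A_n^{(k_n)}}$ is bounded by its classical value $\beta_c$, because once Alice's outputs are fixed the remaining parties face an effective $(n-1)$-party scenario that admits a local model. Summing over the $\prod_{i=2}^n m_i$ hyperedges then gives the bound $\beta_c \prod_{i=2}^n m_i$.

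Concretely, I would proceed as follows. Since the Bell expression $\mathcal{I}$ in Eq.~(\ref{eq:gen-Bell-ns}) is linear in the probabilities, and the set of no-signaling correlations is a polytope, the maximum of the sum is achieved at an extreme point of the no-signaling polytope for the full network. I would invoke the fact that at such an extreme point, Alice's marginal $P(o_1|x_1)$ is well-defined and shared across all hyperedges by no-signaling. For each fixed value of Alice's local response function, the conditional box on the parties $A_2^{(k_2)}, \dots, A_n^{(k_n)}$ in a single hyperedge is itself no-signaling, and its contribution to $\mathcal{I}_{A_1, A_2^{(k_2)}, \dots, A_n^{(k_n)}}$ is bounded by $\beta_c$ via the definition of $\beta_c$ as the optimal classical (hence no-signaling-local-on-the-fixed-Alice-input) value. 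The subtlety that was mishandled in~\cite{PB09} is that one cannot simply add the no-signaling bounds for each wing independently; instead the shared Alice-marginal must be used consistently, which is exactly what the common-vertex structure guarantees.

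The main obstacle will be making the ``shared Alice marginal'' argument rigorous across many wings simultaneously, rather than the two-wing CHSH case. In the bipartite-to-tripartite case one only balances two experiments through Alice, but here $\prod_{i=2}^n m_i$ experiments all pivot on Alice's single system, so the potential correlations Eve (or the wings collectively) could exploit are far richer. I would handle this by formulating the optimization as a linear program over the global no-signaling polytope and arguing via convexity that the deterministic-Alice decomposition suffices; the key is that \emph{fixing Alice's deterministic strategy decouples the wings}, so that no cross-wing correlation can help beat the per-wing classical bound. I expect the bookkeeping of input-output relabelings (needed so that each conditioned sub-box is genuinely an instance of the same $\beta_c$-bounded expression) to be the most delicate technical step, but it is routine once the common-vertex decoupling is established.
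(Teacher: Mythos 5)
There is a genuine gap, and it sits exactly at your ``core lemma.'' You claim that once Alice's strategy is fixed (deterministically), each individual Bell expression $\mathcal{I}_{A_1, A_2^{(k_2)}, \dots, A_n^{(k_n)}}$ is bounded by its classical value $\beta_c$ because the remaining parties ``face an effective $(n-1)$-party scenario that admits a local model.'' This is false in a no-signaling theory: the parties $A_2^{(k_2)}, \dots, A_n^{(k_n)}$ within a \emph{single} hyperedge may share an arbitrary no-signaling resource among themselves, and the reduced $(n-1)$-party expression obtained by fixing Alice's outputs is itself a Bell expression that such a resource can violate beyond $\beta_c$. Concretely, take $n=3$, $m=2$ and the Mermin expression $\langle -A_1B_1C_1 + A_1B_2C_2 + A_2B_1C_2 + A_2B_2C_1\rangle$ with $\beta_c = 2$: setting Alice deterministically to $A_1 = A_2 = +1$ leaves a CHSH-type expression for Bob and Charlie, which a PR box drives to $4 > \beta_c$. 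So individual terms in the sum \emph{can} exceed $\beta_c$ under no-signaling even with Alice deterministic; the proposition is a statement about the sum only, and it holds precisely because the number of copies of each party equals that party's number of inputs --- a counting structure your argument never uses. (A secondary problem: your linear-program step assumes the optimum is attained at a deterministic assignment, but the extreme points of the no-signaling polytope are generically non-deterministic, PR-box-like vertices, so ``vertex'' does not give you ``deterministic.'') Since your argument would prove each term $\leq \beta_c$, it would imply that no Bell inequality can be violated in no-signaling theories at all, which is absurd; the argument cannot be patched by bookkeeping.

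The paper's proof uses a different, essentially combinatorial idea that your proposal is missing: a cyclic relabeling that regroups the $\prod_{i=2}^n m_i$ expressions into $\prod_{i=2}^n m_i$ \emph{new} expressions $\tilde{\mathcal{I}}^{k_2,\dots,k_n}$, by letting the copy $A_i^{(x_i + k_i \,\mathrm{mod}\, m_i)}$ play the role of the party receiving input $x_i$. After this regrouping, within each $\tilde{\mathcal{I}}^{k_2,\dots,k_n}$ every party other than Alice measures a \emph{single fixed input}; only Alice's input still varies. For such correlations no-signaling alone furnishes an explicit joint probability distribution over all outputs (the paper writes it as a product of the conditional distributions over Alice's inputs divided by a power of the marginal of the other parties), i.e.\ a local model, so each regrouped expression is bounded by $\beta_c$ and the sum by $\beta_c \prod_{i=2}^n m_i$. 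The monogamy thus does not come from ``decoupling the wings given Alice'' --- it comes from the fact that each non-Alice party, having as many copies as inputs, can be assigned one input per regrouped expression, which is exactly the error in~\cite{PB09} that the paper set out to repair.
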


\begin{proof}
	We write 
	\begin{widetext}
	\begin{eqnarray}
	&&\sum_{k_2, \dots, k_n=1}^{m_2, \dots, m_n} \mathcal{I}_{A_1, A_2^{(k_2)}, \dots, A_n^{(k_n)}} \nonumber \\ &&= \sum_{k_2, \dots, k_n=1}^{m_2, \dots, m_n} \sum_{\textbf{o}, \textbf{x}} \gamma_{\textbf{o}, \textbf{x}} P(O_1 = o_1, O^{(k_2)}_2 = o_2, \dots, O^{(k_n)}_n = o_n |  X_1 = x_1, X_2^{(k_2)} = x_2, \dots, X_n^{(k_n)} = x_n) \nonumber \\
	&&=  \sum_{k_2, \dots, k_n = 1}^{m_2, \dots, m_n} \sum_{\textbf{o}, \textbf{x}} \gamma_{\textbf{o}, \textbf{x}} P(O_1 = o_1,  \dots, O_n^{(x_n  + k_n \; \text{mod} \; m_n)}= o_n|X_1 = x_1, X_2^{(x_2+k_2 \; \text{mod} \; m_2)} = x_2, \dots, X_n^{(x_n + k_n \; \text{mod} \; m_n)} = x_n) \nonumber \\
	&&= \sum_{k_2, \dots, k_n=1}^{m_2, \dots, m_n} \mathcal{\tilde{I}}^{k_2, \dots, k_n}_{A_1, A_2, \dots, A_n} \nonumber \\
	&&\leq  \beta_c \prod_{i=2}^n m_i. 
	\end{eqnarray}
	\end{widetext}
	Here $\mathcal{\tilde{I}}^{k_2, \dots, k_n}_{A_1, A_2, \dots, A_n}$ denotes the Bell expression $\mathcal{I}$ written such that for fixed $k_2, \dots, k_n$ each of the parties $A_i^{(l_i)}$ measures a single fixed input $l_i  - k_i \; \text{mod} \; m_i$. As such, a joint probability distribution for the measurement outputs of all the parties can readily be constructed. For instance, for any fixed $k_2, \dots, k_n$, the following joint probability distribution given by
	\begin{widetext}
	\begin{eqnarray}
	&&P(O_1,O_2^{(1)}, \dots, O_2^{(m_2)}, \dots, O_n^{(m_n)}|X_1,X_2^{(1)} = 1-k_2,\dots, X_2^{(m_2)} = m_2 - k_2, \dots, X_n^{(m_n)} = m_n - k_n \; ) := \nonumber \\
	&&\frac{\prod_{l_1 = 1}^{m_1} P(O_1,O_2^{(1)}, \dots, O_2^{(m_2)}, \dots, O_n^{(m_n)}|X_1 = l_1,X_2^{(1)} = 1-k_2,\dots, X_2^{(m_2)} = m_2 - k_2, \dots, X_n^{(m_n)} = m_n - k_n)}{P(O_2^{(1)},\dots, O_2^{(m_2)}, \dots, O_n^{(m_n)}|X_2^{(1)} = 1-k_2,\dots, X_2^{(m_2)} = m_2 - k_2, \dots, X_n^{(m_n)} = m_n - k_n)^{m_1-1}} 
	\end{eqnarray}
	\end{widetext}
	can be directly seen to reproduce all the observable marginal distributions so that a local realistic model exists for the expression $\mathcal{\tilde{I}}^{k_2, \dots, k_n}_{A_1, A_2, \dots, A_n}$. Here, the no-signaling assumption imposes that each of the marginals on the right hand side of the expression is well-defined and independent of the other parties' inputs. 
	%
	Therefore, each $\mathcal{\tilde{I}}^{k_2, \dots, k_n}_{A_1, A_2, \dots, A_n}$ obeys the bound $\beta_c$ in any no-signaling theory, from which the bound on their sum follows. 
	\end{proof}

Applied to the ladder network from Fig.~\ref{fig:ladder}, we see that the above relation exactly gives the hyperplane in Prop.~\ref{prop:ladder-ns} bounding the quantum hyperspherical trade-off relation thus extending the relationship between the quantum and no-signaling trade-off relations for the CHSH inequality found by Toner and Verstraete~\cite{TV06}. While the no-signaling trade-off relation is general, it is an open question whether it is tight, i.e., whether there exists a Bell expression for which a network with $\prod_{i=2}^{n} m_i + 1$ parties is needed before such a monogamy relation manifests itself. For instance, it has been found that for general two-party correlation inequalities, monogamy relations hold in networks with far fewer number of parties~\cite{RH14}.

\textit{Conclusions.-}
In this paper, we have studied the trade-offs in quantum violations of $n$-party full correlation inequalities. Employing an uncertainty relation for complementary binary observables, we derived trade-off relations in several network configurations and showed their tightness by specifying explicit quantum strategies achieving the respective bounds. We then showed that the uncertainty relation does not capture Bell monogamies in their entirety by proving a tight trade-off in a qubit ladder network that does not arise from the uncertainty relation. In deriving a generic trade-off relation between correlation inequalities on different numbers of parties, we discussed how these trade-offs help to characterize a portion of the boundary of the set of quantum correlations. The trade-offs enabled us to show the existence of flat regions in this set, i.e., the existence of Bell inequalities which are optimally violated by multiple distinct quantum strategies (boxes) not related by an isometry. We then studied the trade-offs in violations of a class of Bell inequalities with an arbitrary number of inputs, and showed that genuine multi-party non-locality as evidenced by the generalized Svetlichny polynomial is monogamous. In these cases, our analysis extends to arbitrary full correlation inequalities for any number of parties and inputs, the initial analysis of Toner and Verstraete~\cite{TV06} for the CHSH inequality. Finally, we clarified an error in~\cite{PB09} in the derivation of multi-party monogamy relations based upon no-signaling constraints and compared the hyperspherical quantum trade-off relations with the hypercube relations obtained from general no-signaling constraints alone. 
Important open questions remain. A mathematical question is to characterize the precise configurations of qubits in which tight trade-off relations among quantum correlations exist. A practical task of immediate importance is to apply these trade-off relations in devising device-independent protocols for secure multi-party communications such as secret sharing in the specific network configurations presented here.

\textit{Acknowledgments.-}
We are grateful to Tomasz Paterek and Minh Cong Tran for useful discussions. R.R. also acknowledges useful discussions with Stefano Pironio. This work was supported by the ERC AdG grant QOLAPS. R.R. acknowledges support from the research project  ``Causality in quantum theory: foundations and applications'' of the Fondation Wiener-Anspach and from the Interuniversity Attraction Poles 5 program of the Belgian Science Policy Office under the grant IAP P7-35 photonics@be. P.M. acknowledges support from the National Science Centre (NCN) grant 2014/14/E/ST2/00020 and DS Programs of the Faculty of Electronics, Telecommunications and Informatics, Gda\'nsk University of Technology.

\appendix



\end{document}